\newtheorem{theorem}{Theorem}[section]
\newtheorem{lemma}[theorem]{Lemma}
\newtheorem{conjecture}{Conjecture}
\theoremstyle{definition}
\newtheorem{definition}{Definition}[section]
\newcommand{\R}{\mathbb{R}}
\begin{document}

\preprint{AIP/123-QED}

\title[Continuum Limits of the 1D DTQW]{Continuum Limits of the 1D Discrete Time Quantum Walk}

\author{Michael Manighalam} \email{mbmanigh@bu.edu}
 \affiliation{Department of Physics, Boston University}
\author{Mark Kon}%
 \email{mkon@bu.edu}
\affiliation{ 
Department of Mathematics and Statistics, Boston University
}%

\date{\today}

\begin{abstract}
The discrete time quantum walk (DTQW) is a universal quantum computational model \cite{Lovett}. Significant relationships between discrete and corresponding continuous quantum systems have been studied since the work of Pauli and Feynman. This work continues the study of relationships between discrete quantum models and their ostensive continuum counterparts by developing a formal transition between discrete and continuous quantum systems through a formal framework for continuum limits of the DTQW. Under this framework, we prove two constructive theorems concerning which internal discrete transitions (``coins'') admit nontrivial continuum limits. We additionally prove that the continuous space limit of the continuous time limit of the DTQW can only yield massless states which obey the Dirac equation. Finally, we demonstrate that the continuous time limit of the DTQW can be identified with the canonical continuous time quantum walk (CTQW) when the coin is allowed to transition through the continuous limit process.
\end{abstract}

\maketitle

\section{\label{section:Intro}Introduction}

The discrete time quantum walk (DTQW) has been the subject of much attention since its applications to quantum computing were discovered
in the analysis of Hadamard Walks \cite{AmbainisDTQW}. The DTQW has since been used in a
variety of quantum computing algorithms, including the Oracular Search
\cite{Shenvi} and Element distinctness \cite{AmbainisElem} algorithms (for a full list, see Ref.~\onlinecite{QuantumAlgorithmZoo}).

As noted in Ref.~\onlinecite{Strauch}, a now well-studied limit of the DTQW was
introduced by Feynman and Hibbs in Ref.~\onlinecite{FeynHibbs} in constructing a
path integral formulation for the propagator of the Dirac equation. According to Feynman, a
particle zig-zags at the speed of light across a space-time lattice, flipping its
chirality from left to right with an infinitesimal probability at each time step
\cite{Strauch}. The Dirac equation results when the continuous space-time limit
is taken, with the mass of the particle determined by the flipping rate. More recent works have produced notions of discrete space-times (see Refs.~\onlinecite{CLDiscreteGeometries} and~\onlinecite{Nesterov}) and consequent questions regarding how they produce our apparent continuum.

Recently, Mlodinow and Brun in Ref.~\onlinecite{Mlodinowl} demonstrated how to constrain a 3D DTQW to obtain a resulting fully Lorenz invariant continuum limit. They showed that their symmetry requirement necessitates the inclusion of antimatter, and, in Ref.~\onlinecite{MlodinowlExperiment}, discuss experimental methods to distinguish between the DTQW and its continuum limiting Dirac equation as a description of fermion dynamics. These limits were also central to Refs.~\onlinecite{Knight} and \onlinecite{Blanchard}. Their continuum limits for DTQWs transformed discrete time evolution equations to partial differential
equations (PDEs), as the PDEs analyzed were much simpler than the discrete recursion relations of the DTQW.

In Ref.~\onlinecite{Strauch}, Strauch also used the continuum limit to
connect the DTQW and CTQW, and Refs.~\onlinecite{StrauchRelBig} and
\onlinecite{Bracken} demonstrate that the free particle Dirac evolution could be
obtained by taking continuum limits of the DTQW. Strauch also demonstrated, in
Ref.~\onlinecite{StrauchRelBig}, the DTQW's connections with zitterbewegung, 
which is an interference effect among free relativistic Dirac 
particles between their positive and negative energy parts
that produces a quivering motion \cite{Zitter}. Strauch shows that zitterbewegung
in the DTQW can be tuned based on the value of its coin rotation parameter, and
shows that the CTQW contains zitterbewegung-like oscillations (which Strauch
denotes as anomalous zitterbewegung) even though there is only one energy for the
CTQW \cite{StrauchRelBig}.

At this time, several forms of continuum limits have already
been rigorously developed, including  general space and time limits with coin
variations, in Refs.~\onlinecite{Molfetta} and \onlinecite{StrauchRelBig}, and the continuous time limit for a very particular choice of coin in Ref.~\onlinecite{Strauch}.

The purpose of this work is to formulate a general framework within which continuum limits of the DTQW can be taken, and to analyze the corresponding dynamics in the various limits. From our analysis, we have concluded that it is only possible to keep space
discrete while continuizing time for particular coins (section
\ref{section:CTLimit}); that taking time
and space limits simultaneously with a fixed coin is possible when steps in the
walk are allowed and yields a massless dirac equation (section \ref{section:noVariation}); that the ensuing time evolution derived from taking a continuous space limit of the continuous time limit of the DTQW is a massless dirac equation as well (section \ref{section:commutativity}); and that the solutions of
the continuous time limit of the DTQW can always be related to the solutions of the CTQW  for any choice of coin allowed to undergo the continuous time limit of the DTQW (section
\ref{section:DTQWtoCTQW}).

\subsection{DTQW Definition}

The one dimensional DTQW assumes a time dependent probability amplitude 
$\vv{\Psi}(x,t)=\begin{pmatrix}\psi_L(x,t)\\ \psi_R(x,t)\end{pmatrix}$ 
for a random walker's position and spin (assumed to point left or right).  Compared with the classical probabilistic random walk, this (i) involves an internal (left/right) spin degree of freedom and (ii) involves quantum probability amplitudes instead of classical random walk probabilities. The time dynamics are given as
\begin{equation}\label{eq:wave function}
\vv{\Psi}(x,t+\Delta t)=SC\vv{\Psi}(x,t),
\end{equation}
where the operations $S$ and $C$ (defined below) represent external and internal unitary operations, respectively, $S$ being an external translation operation and $C$ being an internal rebalancing of the two spin amplitudes $\psi_R$ and $\psi_L$.

For example, if the coin operation $C$ is implemented by the Hadamard matrix, then:

\begin{equation}
    C\vv{\Psi}(x,t)=\frac{1}{\sqrt{2}}\begin{pmatrix}1 & -1\\ 1 & 1\end{pmatrix}\vv{\Psi}(x,t).
\end{equation}
With $\Delta t$ and $\Delta x$ the time and space intervals for the quantum walk, the full change $SC$ acting in one time iteration $\Delta t$ is then:
\begin{equation}\label{eq:action}
    \begin{split}
        \vv{\Psi}(x,t+\Delta t)&=SC\vv{\Psi}(x,t)\\
        &=\frac{S}{\sqrt{2}}\begin{pmatrix}\psi_L(x,t)-\psi_R(x,t)\\\psi_L(x,t)+\psi_R(x,t)\end{pmatrix}\\
        &=\frac{1}{\sqrt{2}}\begin{pmatrix}\psi_L(x+\Delta x,t)-\psi_R(x+\Delta x,t)\\\psi_L(x-\Delta x,t)+\psi_R(x-\Delta x,t)\end{pmatrix}
    \end{split}
\end{equation}
The unitary time evolution is then:
\begin{equation}
     \vv{\Psi}(x,t)=(SC)^m\vv{\Psi}(x,0),
\end{equation}
letting $m=\frac{t}{\Delta t}$.
We also express this in discrete differential form for the purpose of forming subsequent continuum limits: 
\begin{equation}\label{eq:diffDTQW}
\Delta_t\vv{\Psi}(x,t)\equiv\frac{SC-\mathbb{I}}{\Delta t}\vv{\Psi}(x,t).
\end{equation}
We will also often represent the walk in Fourier space and define our discrete Fourier transform convention here. Let $\vv{\widetilde{\Psi}}(k,t)$ be the Fourier transform of $\vv{\Psi}(x,t)$ and $x=n\Delta x$ for $n\in\mathbb Z$.  We use the following conventions for the forward and inverse Fourier transforms, for Fourier variable $k\in[{-\pi \over \Delta x}, {\pi\over \Delta x}]$: 
\begin{align}
    \vv{\widetilde{\Psi}}(k,t)&=\sum_{n=-\infty}^{\infty}e^{-ikn\Delta x}\vv{\Psi}(n\Delta x,t)\equiv \mathcal F(\vv{\Psi})\\
    \vv{\Psi}(n\Delta x,t)&=\frac{\Delta x}{2\pi}\int^{\frac{\pi}{\Delta x}}_{-\frac{\pi}{\Delta x}}dke^{ikn\Delta x}\vv{\widetilde{\Psi}}(k,t)\equiv\mathcal{F}^{-1}(\vv{\widetilde \Psi}).
\end{align}
A standard procedure here will be to represent operators in Fourier space as follows: given an operator $O$ on a function space $Y$, its Fourier conjugate operator $\tilde O$ is defined by $\tilde O \tilde f(k)= \mathcal{F}(O(f(x)))$, with $f(x)\in Y$, so that $\tilde O$ is the Fourier representation of $O$. The operator we will be most commonly representing in Fourier space is the shift operator $S$, defined by $\widetilde{S}$:
\begin{equation}\label{eq:defn} 
\mathcal{F}(S\vv{\Psi}(x,t))=\widetilde{S}\vv{\widetilde{\Psi}}(x,t)=e^{ik\Delta x\sigma_z}\vv{\widetilde{\Psi}}(x,t), 
\end{equation}

where $\sigma_z$ is a Pauli matrix.

\section{Defining Continuum Limits}\label{sect:DefnContLim}
{\bf Skipping Steps.} Before formulating a universal definition of continuum limits for the quantum walk, we want to establish the important notion of so-called alternating limits, in which only steps of a certain parity (e.g. even or odd) are considered observed. We first provide an informal example demonstrating that trivial divergences occur in the $\Delta t \rightarrow 0$ limit arising from multiple parity-dependent limits in the discrete walk. Such limits were considered in Ref.~\onlinecite{Strauch}. 

Consider the DTQW with coin $C=ie^{i\theta \sigma_x}$, with $\sigma_x$ a standard Pauli matrix and $\theta\equiv\theta(\Delta t)$ a real number (modulo $2\pi$) depending on the time discretization parameter $\Delta t$. For the example we construct an informal continuous time limit, to be formalized in Definition \ref{def:CLimit}. Essentially we will take the $\Delta t\to 0$ limit in Equation (\ref{eq:diffDTQW}). The continuous time limit then amounts to identifying the limiting operator
$$\lim\limits_{\Delta t\to 0}\frac{\widetilde{S}(\Delta x)C(\Delta t)-\mathbb{I}}{\Delta t}=\lim\limits_{\Delta t\to 0}\frac{ie^{ik\Delta x\sigma_z}e^{i\theta(\Delta t)\sigma_x}-\mathbb{I}}{\Delta t},$$ 
assuming a fixed space of functions $\mathbb{X}$ on which it acts; here $\mathbb I$ is the identity. This is defined more carefully later in this section within {\bf Formal Definitions}.

For this analysis of a continuous time limit for the discrete space and time quantum walk, we will seek the most general scaling of walk parameters that admit nontrivial limits as $\Delta t\rightarrow 0$. In this case we will admit all scalings for the coin parameter of the form $\theta=\pi/2+\gamma \Delta t$, with $\gamma>0$, which were introduced by Strauch in Ref.~\onlinecite{Strauch}.  Thus from the operator standpoint we seek a limit of the form $\lim\limits_{\Delta t\to 0}\frac{ie^{ik\Delta x\sigma_z}i\sigma_x e^{i\gamma\Delta t\sigma_x}-\mathbb{I}}{\Delta t}$, which in fact does not exist generically.  We show here however, that if we consider only even parity steps (i.e. even numbers of steps, effectively considering only every other step), then non-trivial limits exist.  Thus we will be considering only iterations of the even parity operator $SCSC$ rather than the fundamental step $SC$, and we will identify a limit $\lim\limits_{\Delta t\to 0}\frac{\widetilde{S}(\Delta x)C(\Delta t)\widetilde{S}(\Delta x)C(\Delta t)-\mathbb{I}}{\Delta t}$, which structurally is:
\begin{align*}
&\lim\limits_{\Delta t\to 0}\frac{\widetilde{S}(\Delta x)C(\Delta t)\widetilde{S}(\Delta x)C(\Delta t)-\mathbb{I}}{\Delta t}=\lim\limits_{\Delta t\to 0}\frac{-e^{ik\Delta x\sigma_z}i\sigma_x e^{i\gamma\Delta t\sigma_x}e^{ik\Delta x\sigma_z}i\sigma_x e^{i\gamma\Delta t\sigma_x}-\mathbb{I}}{\Delta t}\\
=&\lim\limits_{\Delta t\to 0}\frac{ (\mathbb{I}+i\gamma\Delta t (\sigma_x\cos{2k\Delta x}+\sigma_y\sin{2k\Delta x})+O(\Delta t^2)) (\mathbb{I}+i\gamma\Delta t\sigma_x+O(\Delta t^2)-\mathbb{I})}{\Delta t}\\
=&i\gamma(\sigma_x(\cos{2k\Delta x}+1)-\sigma_y\sin{2k\Delta x}).
\end{align*}

As might be expected, it will be clear below that replacing the above even power $(SC)^n$ with $n=2$ by $n=3$, the above limiting process will no longer exist; existence of the limit will hold only for even powers $n$. In general, restricting to fixed even step sizes $n$ will lead to continuous limiting processes as above (with scaling of the coin based on $\Delta t$), while non-even step sizes will never admit such limits (see Theorem \ref{thm:ctdsUnitaries}, proved in Appendix \ref{app:ctdsUnitaries}).

{\bf Formal Definitions.  }  Definitions of our operator limits require common spaces for their domains.  We will redefine all operators on such a common space, given as
$$ \mathbb{X}=\{\vv{\Psi}(x,t):\vv{\Psi}(\cdot,t)\in L^2(\mathbb{R})\otimes L^2(\Sigma)\text{ for all } t\ge 0\text{ and } \vv{\Psi}(x,\cdot)\in C^1(\mathbb{R}\to\mathbb{C}^2)\},$$ where $\Sigma$ is the space spanned by $\ket{L}=\begin{pmatrix}1\\0\end{pmatrix}$ and $\ket{R}=\begin{pmatrix}0\\1\end{pmatrix}$. Note that the effective domain space of the above tensor product space is $\R\times\Sigma$, with $\Sigma={L,R}$. Thus $\vv\Psi (x,t)$ is assumed once continuously differentiable in $t$, with two components in $L^2$ (i.e. square integrable functions in $x\in \R$ for fixed $t$).

We will consider general quantum walks that have $\Delta t\rightarrow 0$ limits when step numbers $n=km$ are restricted to whole multiples of an integer $n$, i.e.\  generalizing the above parity restriction for step numbers $(n=2)$ to accommodate more general step number restrictions.
Thus let $\vv{\Psi}(x,t)\in \mathbb{X}$, $n$ be the number of skipped steps, $\partial_t$ be the time derivative operator, and define the discrete derivative as $\Delta_t\vv{\Psi}(x,t)=\frac{\vv{\Psi}(x,t+n\Delta t)-\vv{\Psi}(x,t)}{n\Delta t}$. If $\vv{\Psi}(x,t)\in \mathbb{X}$ is a  wave function, then the DTQW time evolution equation is
\begin{equation}\label{eq:DTQWevo}
     \vv{\Psi}(x,t+n\Delta t)=(S(\Delta x)C(\Delta t))^n\vv{\Psi}(x,t).
\end{equation} We denote the level of discretization of our space and time operations by $\eta=(\Delta x, \Delta t)$.
We will consider a discrete space and time quantum walk governed by  
\begin{equation}\label{eq:Heps}
    i\Delta_t\vv{\Psi}=H_\eta\vv{\Psi}\equiv i\frac{\big(SC\big)^n-\mathbb{I}}{n\Delta t}\vv{\Psi}(x,t)
\end{equation}
on $\mathbb X$, with $H_\eta$ the above family of operators parametrized by $\eta=(\Delta x,\Delta t)$. The continuous time limit of the walk in Equation (\ref{eq:Heps}) exists if the right hand side of the equation has a limit (for $\vv{\Psi}\in\mathbb{X}$) as $\eta\to(0,0)$ along a given prescribed path, for which both the DTQW functions and continuum limit of the DTQW functions are in $\mathbb X$. Continuum space limits in the absence of any change
in $\Delta t$ will not be considered here because $S\to\mathbb{I}$ as $\Delta x \to 0$, so the
walk reduces simply to a coin acting on the spin portion of the wave function at each time step.
With no traversal of the lattice there results a trivial walk. Formally, we state the definition of continuous time limit and continuous space-time limit as:
\begin{definition}\label{def:CLimit}\label{def:Exist}
Let the operators $H_\eta\equiv H_{\Delta x,\Delta t}$ and $H_{\Delta t}$ act on functions $\vv{\Psi}(x,t)\in \mathbb{X}$. Then we have the following definitions:
\begin{itemize}
    \item The \textbf{continuous time limit of the DTQW governed by coin $C$ skipping $n$ steps} is the time evolution equation $i\partial_t\vv{\Psi}(x,t)=H_{\Delta t}\vv{\Psi}(x,t)$ where $H\Delta t$ is defined (when the limit exists) by $H_{\Delta t}\vv{\Psi}(x,t)=\lim\limits_{\eta\to({\Delta x, 0})}H_\eta\vv{\Psi}(x,t)$, with the limit taken in the space $\mathbb X$.
    \item The \textbf{continuous space-time limit of the DTQW governed by coin $C$ skipping $n$ steps} is the time evolution equation  $i\partial_t\vv{\Psi}(x,t)=H_{\Delta x,\Delta t}\vv{\Psi}(x,t)$, where $H_{\Delta x,\Delta t}$ is is defined (when the limit exists) by $H_{\Delta x,\Delta t}\vv{\Psi}(x,t)=\lim\limits_{\eta\to({0, 0})}H_\eta\vv{\Psi}(x,t)$, (where in the limit $\Delta x=v\Delta t$ for some $v>0$).
\end{itemize}
We call the operators $H_{\Delta t}$ and $H_{\Delta x,\Delta t}$ the generators of time evolution, or Hamiltonians, in their respective continuum limits.  Note that the second limit above may depend on the ratio $\nu=\frac{\Delta x} {\Delta t}$, and can also be generalized to allow any manner of approach of $\eta\rightarrow (0,0)$.  
\end{definition}

Our goal is to explore the most general possibilities for these two cases. We remark that our inclusion of $n$ expands the number of continuum limits that exist; in particular this possibility was not considered in Ref.~\onlinecite{Molfetta}

Additionally, we need the following definition to allow parameterized coin variations:
\begin{definition}\label{def:Variation} Consider a continuous space-time limit where $\Delta x$ and $\Delta t$ have the same scaling, so $\Delta x=v\Delta t=v\epsilon$ for some non-zero $v\in\mathbb{R}$. A coin \textbf{varies} in this continuum limit if the coin depends on $\epsilon=\Delta t$.
\end{definition}



\section{General Conditions for Continuum Limits}\label{section:rootsofunity}
The discussion here is based on terminology and results in Ref.~\onlinecite{Molfetta}. We will study an important aspect of coins that change in the process of continuous time and space-time limits; this will help to interpret the theorems in Sections \ref{section:CTLimit} and \ref{section:noVariation}. We will follow the DTQW wave function through $n$ time steps of length $\Delta t$. Note here that all limits in this section will be in the topology of the space $\mathbb{X}$.  We begin with the basic equation
\begin{equation}\label{eq:1}
    \vv{\Psi}(x,t+n\Delta t)=(S(\Delta x)C(\Delta t))^n\vv{\Psi}(x,t),
\end{equation}
with $S=S(\Delta x)$ and $C=C(\Delta t)$ both dependent on the increment $\eta = (\Delta x, \Delta t)$.
If a continuous space-time limit is taken with $(\Delta t$, $\Delta x)\to(0,0)$, then because $S\to\mathbb{I}$ when $\Delta x\to 0$, we must have 
$$\lim_{\Delta t,\Delta x\to 0}(C(\Delta t))^n=\mathbb{I},$$
as otherwise the limit could not exist. In particular, unless $C(\Delta t)$ is constantly the identity, it must vary (as in Definition \ref{def:Variation}) in the continuous time limit.

If only a continuous time limit is taken (i.e. $\Delta t\to 0$), then (by continuity of the left side in $t$) for the left and right sides of Equation (\ref{eq:1}) to be equal, we must have $$\lim_{\Delta t\to 0}(S(\Delta x)C(\Delta t))^n=\mathbb{I}$$ where we include $\Delta t$ dependence in $C$ for generality. Note that the constraint in the continuous time limit involves both the coin and the shift operator, not just the coin as in the continuous space-time limit. Now for the following definition:
\begin{definition}\label{def:rootOfUnity}
Consider a matrix $A(t)$ which depends on some continuous parameter $t$. $A(t)$ \textbf{homotopically approaches a root of unity} if $A(t)$ depends continuously on $t$ and there exists some non-zero integer $m$ and some real number $t'$ such that $\lim\limits_{t\to t'}A(t)^m=\mathbb{I}$.
\end{definition}
By the previous definition and the above analysis of Equation (\ref{eq:1}), we have the following theorem:
\begin{theorem}\label{theorem:rootofunity}
A coin for which a \textbf{continuous space and time limit} exists must homotopically approach a root of unity. The product of the shift and coin operator for which a \textbf{continuous time limit} exists must homotopically approach a root of unity as well.
\end{theorem}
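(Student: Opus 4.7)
The plan is to derive both statements from the algebraic form of $H_\eta = i\frac{(SC)^n - \mathbb{I}}{n\Delta t}$ combined with Definition \ref{def:CLimit}. I would first observe that the existence of either continuum limit requires $H_\eta \vv{\Psi}$ to converge in $\mathbb{X}$ to a finite vector as $\eta$ tends to its limiting value. Since $n\Delta t \to 0$ in both cases, the quotient can remain bounded in the limit only if its numerator tends to zero on $\mathbb{X}$; equivalently, $(S(\Delta x)C(\Delta t))^n \vv{\Psi} \to \vv{\Psi}$ in $\mathbb{X}$ for every $\vv{\Psi}$. This single observation is the engine of the whole proof.

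For the continuous time limit, $\Delta x$ is held fixed while $\Delta t \to 0$, so the preceding convergence already reads $(S(\Delta x)C(\Delta t))^n \to \mathbb{I}$ strongly on $\mathbb{X}$. Setting $A(\Delta t) = S(\Delta x)C(\Delta t)$, integer $m = n$, and $t' = 0$ in Definition \ref{def:rootOfUnity} yields exactly the claim that the product $SC$ homotopically approaches a root of unity; continuity of $A$ in $\Delta t$ is inherited from the assumed continuous dependence of $C$ on $\Delta t$.

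For the continuous space-time limit, both $\Delta x$ and $\Delta t$ tend to zero along the path $\Delta x = v\Delta t$. Here I would additionally invoke that $\widetilde{S}(\Delta x) = e^{ik\Delta x \sigma_z}$ converges pointwise in $k$ to the identity, and hence strongly on $\mathbb{X}$ by dominated convergence. A telescoping identity expressing $(SC)^n - C^n$ as a sum of terms each containing a single factor of $S - \mathbb{I}$, combined with the uniform unitary bounds on $S$ and $C$, implies $(SC)^n - C^n \to 0$ strongly. Together with $(SC)^n \to \mathbb{I}$, this gives $(C(\Delta t))^n \to \mathbb{I}$, so $C$ itself homotopically approaches a root of unity with $m = n$ and $t' = 0$.

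The main obstacle I anticipate is the telescoping step in the space-time case: one must verify that the strong convergence $S \to \mathbb{I}$ transmits through the ordered product with the unitary factors $C$ along the path $\Delta x = v\Delta t$, and that all intermediate vectors remain in $\mathbb{X}$. This is a routine approximation argument but requires some care with the Fourier-multiplier structure of $S$ and the $L^2 \otimes L^2(\Sigma)$ structure of $\mathbb{X}$; the continuous time case, by contrast, is essentially immediate once the identity $(SC)^n \to \mathbb{I}$ has been extracted from the existence of the limit.
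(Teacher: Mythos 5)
Your proposal is correct and follows essentially the same route as the paper: both arguments extract $(SC)^n\to\mathbb{I}$ from the vanishing denominator $n\Delta t$, then use $S\to\mathbb{I}$ as $\Delta x\to 0$ to reduce to $C^n\to\mathbb{I}$ in the space-time case while keeping the product $SC$ in the time-only case. Your telescoping step merely makes rigorous the substitution $S\mapsto\mathbb{I}$ that the paper performs informally, so the two proofs are the same in substance.
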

\begin{proof}
Recall from definition \ref{def:CLimit} that we define the space-time limit $H_{\Delta x,\Delta t}$ with $\Delta x=v\Delta t=v\epsilon$ as:
\begin{equation}
   H_{\Delta x,\Delta t}\vv{\Psi}(x,t)\equiv i\lim_{\epsilon\to 0}\frac{(S(\epsilon)C)^n-\mathbb{I}}{n\epsilon}\vv{\Psi}(x,t)
\end{equation}
for $\vv{\Psi}(x,t)\in\mathbb{X}$.
Because $\lim\limits_{\epsilon\to0}S=\mathbb{I}$, we have the following:
\begin{equation}
    i\lim_{\epsilon\to0}\frac{C^n-\mathbb{I}}{n\epsilon}\vv{\Psi}(x,t)=H_{\Delta x,\Delta t}\vv{\Psi}(x,t)
\end{equation}
Now we see that for the left hand side to equal the right hand side, $C$ must be of the form $C^n=\mathbb{I}-in\epsilon H_{\Delta x,\Delta t}+O(\epsilon^2)$. Thus, by definition \ref{def:rootofunity}, $C$ must homotopically approach a root of unity in the continuous space-time limit. The proof for the continuous time limit is similar, except now $S$ does not converge to identity, so instead $SC$ must be of the form $(SC)^n=\mathbb{I}-in\epsilon H+O(\Delta \epsilon^2)$, thereby satisfying definition \label{def:rootofunity} once again.
\end{proof}
From this analysis, we have obtained a general property of coins which undergo continuum limit transformations, and we will refer to this property in the future.

\section{General Continuous Time Limit}\label{section:CTLimit}
In this section, we will find the set of DTQWs for which a continuous time limit exists, as according to definition \ref{def:Exist}. We will then analyze the properties of the resulting time evolution in the continuous time limit.

We consider a general unitary coin, which can be written the following way:
\begin{equation}\label{eq:rotations}
    \begin{split}
        C&=e^{i\delta}R_z(\psi)R_y(\theta)R_z(\phi)=e^{i\delta}e^{ -i\psi\sigma_z/2}e^{ -i\theta\sigma_y/2} e^{ -i\phi\sigma_z/2}\\
        &=e^{i\delta}\begin{pmatrix}\cos\frac{\theta}{2}\exp-i\frac{\phi+\psi}{2}&-\sin\frac{\theta}{2}\exp i\frac{\phi-\psi}{2}\\\sin\frac{\theta}{2}\exp i\frac{-\phi+\psi}{2}&\cos\frac{\theta}{2}\exp i\frac{\phi+\psi}{2}\end{pmatrix}
    \end{split}
\end{equation}
We wish to know for which $2\times2$ matrices, as parameterized by Equation (\ref{eq:rotations}), does the continuum limit exist, according to definition \ref{def:Exist}.
Before introducing a theorem answering such a problem, a few notes. First, a constraint on $\delta$ is necessary to satisfy the finiteness condition of the existence of the limit in definition \ref{def:Exist}. The value of $\delta$ is arbitrary as it amounts to an overall energy shift in the Hamiltonian, which does not matter. This point is explained further in the proof of lemma \ref{lma:angles}. Second, assuming that the elements of $C$ cannot depend on the elements of $S$ in any way, observe that the limit in definition \ref{def:Exist} cannot not be finite unless $C$ depended on $\Delta t$ in some way. Thus, we must assume that the coin varies in the continuum limit, where variation is defined in \ref{def:Variation}. A proof of the following theorem is presented in Appendix \ref{app:ctdsUnitaries}.

\begin{restatable}{theorem}{ctdsU}\label{thm:ctdsUnitaries}
Let $C(\delta,\psi,\theta,\phi)$ be a $2\times2$ unitary matrix as defined in Equation (\ref{eq:rotations}) such that the set of angles {$\psi$, $\theta$, $\phi$} parameterizing $C$ depends on $\Delta t$ the following way: $\phi=\phi_0+\phi_1\Delta t+O(\Delta t^2)$, $\psi=\psi_0+\psi_1\Delta t+O(\Delta t^2)$, and $\theta=\theta_0+\theta_1\Delta t+O(\Delta t^2)$, where $\phi_0,\psi_0,\theta_0,\phi_1,\psi_1,\theta_1\in\mathbb{R}$ do not have any space or time dependence. The continuous time limit will exist, as defined in \ref{def:Exist}, for such a class of coins  if and only if, $\theta_0=p\pi$, $\delta=-\frac{p\pi}{2}$ (for odd integer $p$), and $n$ is even. The Hamiltonian obtained in such a limit is the following, where $S$ is the shift operator defined in equation
\ref{eq:defn}:
\begin{equation}
    H=-\frac{\theta_1}{4}(R_z(-2\phi_0)+S^2 R_z(2\psi_0))\sigma_y.
\end{equation}
\end{restatable}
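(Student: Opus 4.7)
My approach is to work throughout in Fourier space, where the shift operator is diagonal, $\widetilde{S}(k)=e^{ik\Delta x\sigma_z}$, so that the analysis of $(SC)^n$ reduces to studying the $k$-parametrized family of $2\times 2$ matrices $M(k,\Delta t)\equiv \widetilde{S}(k)C(\Delta t)$. By Theorem \ref{theorem:rootofunity}, the existence of the continuous time limit forces $M(k,0)^n=\mathbb{I}$ for every $k$, where $M(k,0)$ involves only the zeroth-order data $(\delta,\psi_0,\theta_0,\phi_0)$.

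The first step is to extract the zeroth-order conditions on the angles. Using the parameterization in Equation (\ref{eq:rotations}), a direct calculation gives
\begin{equation*}
\mathrm{tr}\bigl(\widetilde{S}(k)C_0\bigr)=2e^{i\delta}\cos\tfrac{\theta_0}{2}\cos\!\bigl(k\Delta x-\tfrac{\phi_0+\psi_0}{2}\bigr).
\end{equation*}
Since $M(k,0)^n=\mathbb{I}$ and eigenvalues depend continuously on $k$, the two eigenvalues must be $k$-independent $n$-th roots of unity, so in particular the trace must be $k$-independent. The only way to kill the $k$-dependence above is $\cos(\theta_0/2)=0$, forcing $\theta_0=p\pi$ with $p$ odd. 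With this, $M(k,0)$ becomes purely off-diagonal and a short computation yields $M(k,0)^2=-e^{2i\delta}\mathbb{I}$, which is $k$-independent. Iterating gives $M(k,0)^n=(-e^{2i\delta})^{n/2}\mathbb{I}$ when $n$ is even, while when $n$ is odd $M(k,0)^n$ remains off-diagonal (and hence $\neq \mathbb{I}$), so $n$ must be even. The remaining scalar condition $(-e^{2i\delta})^{n/2}=1$ together with taking the minimal, non-redundant $n=2$ fixes $\delta=-p\pi/2$ (a residual freedom for larger $n$ corresponds only to a global phase, absorbed as an energy shift per Lemma \ref{lma:angles}).

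With the zeroth-order conditions in place, the Hamiltonian comes from a first-order expansion. Writing $C(\Delta t)=C_0+\Delta t\,C_1+O(\Delta t^2)$, where $C_1$ is obtained by differentiating Equation (\ref{eq:rotations}) at $\Delta t=0$ using the assumed expansions of $\phi,\psi,\theta$, we have
\begin{equation*}
M(k,\Delta t)^n=M(k,0)^n+\Delta t\sum_{j=0}^{n-1}M(k,0)^{j}\,\widetilde{S}(k)C_1\,M(k,0)^{n-1-j}+O(\Delta t^2).
\end{equation*}
Because $M(k,0)^2=\mathbb{I}$ under our choices, the $n=2$ sum collapses to $\widetilde{S}C_1+M(k,0)\,\widetilde{S}C_1\,M(k,0)$. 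Dividing by $n\Delta t$ and taking $\Delta t\to 0$ produces an explicit combination of Pauli matrices; unpacking $C_1$ shows that the $\phi_1,\psi_1$ contributions assemble into a $k$-independent scalar (absorbable as an energy shift), while the $\theta_1$ contribution reproduces $H=-\tfrac{\theta_1}{4}(R_z(-2\phi_0)+S^2 R_z(2\psi_0))\sigma_y$, with the $S^2$ factor arising from the surviving $\widetilde{S}^2=e^{2ik\Delta x\sigma_z}$ in the conjugated term $M(k,0)\,\widetilde{S}C_1\,M(k,0)$.

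The main obstacle is the bookkeeping in this last step: tracking how each Euler-angle correction threads through $M(k,0)\,\widetilde{S}C_1\,M(k,0)$, verifying that the $\phi_1$ and $\psi_1$ pieces genuinely combine into a trivial $k$-independent phase rather than contributing a nontrivial operator, and recognizing the surviving $\theta_1$ piece as $(R_z(-2\phi_0)+S^2 R_z(2\psi_0))\sigma_y$ rather than some other combination of Pauli matrices and rotations. The preliminary arguments are conceptually transparent via Theorem \ref{theorem:rootofunity} and the trace computation, but this final algebraic identification is where the explicit form of the Hamiltonian is actually earned.
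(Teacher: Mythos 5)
Your treatment of the necessity direction is sound and in places cleaner than the paper's. The paper (Lemmas \ref{lma:n=1}--\ref{lma:n}) reaches $\theta_0=p\pi$ by explicitly solving for the eigenvalues of $e^{i\delta}A$ and demanding they be $k$-independent roots of unity; your observation that $M(k,0)^n=\mathbb{I}$ plus continuity in $k$ forces the trace $2e^{i\delta}\cos\frac{\theta_0}{2}\cos(k\Delta x-\frac{\phi_0+\psi_0}{2})$ to be constant gets there in one line (and the trace computation checks out against Equation (\ref{eq:rotations})). Your parity argument for $n$ --- an off-diagonal $M(k,0)$ squares to $-e^{2i\delta}\mathbb{I}$, while odd powers stay off-diagonal and so cannot be $\mathbb{I}$ --- is exactly the paper's Lemma \ref{lma:n} in different clothing.

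The gap is in the first-order term, which is where the theorem's actual content --- the formula for $H$ --- lives. First, your ``collapsed'' sum for $n=2$ is wrong as written: $\sum_{j=0}^{1}M(k,0)^j\,\widetilde{S}C_1\,M(k,0)^{1-j}=M(k,0)\widetilde{S}C_1+\widetilde{S}C_1M(k,0)$, i.e.\ the anticommutator $\{M(k,0),\widetilde{S}C_1\}$, not $\widetilde{S}C_1+M(k,0)\widetilde{S}C_1M(k,0)$; using $M(k,0)^2=\mathbb{I}$ the two expressions differ by a left factor of $M(k,0)$, which is a nontrivial off-diagonal unitary, so carrying your formula through literally would produce $M(k,0)^{-1}H$ rather than $H$. (Compare the paper's Lemma \ref{lma:H}, where the sum is reduced to $-\frac{n}{2}\{A,B\}$ before being identified with $\frac{n\theta_1}{2}(R_z(-2\phi_0)+R_z(2\psi_0'))\sigma_y$.) Second, you assert rather than derive the two claims that constitute the conclusion: that the $\phi_1,\psi_1$ contributions reduce to something trivial (in the paper's computation they vanish outright, not merely up to a scalar energy shift), and that the $\theta_1$ contribution equals $-\frac{\theta_1}{4}(R_z(-2\phi_0)+S^2R_z(2\psi_0))\sigma_y$; you flag this yourself as the step ``where the explicit form of the Hamiltonian is actually earned,'' and without it the theorem's operator identity is not established. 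Finally, you only treat $n=2$; for general even $n$ you still need the (easy) observation that the $n$ terms of the sum pair up into $\frac{n}{2}\{M(k,0),\widetilde{S}C_1\}$, so that the $n$ cancels against the $n\Delta t$ in the denominator and the limit is $n$-independent, which is precisely the even/odd splitting carried out in the paper's Lemma \ref{lma:H}.
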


When using the conditions for the coin from Theorem \ref{thm:ctdsUnitaries}, we see that the class of coins which can undergo continuous time limits, as determined in the theorem, are the following (parameterized by $\psi_0$, $\theta_1$, and $\phi_0$):

\begin{equation}\label{eq:CTcoin}
    C=e^{-i(\psi_0)\sigma_z/2}\sigma_y e^{-i\theta_1\Delta t\sigma_y/2} e^{-i(\phi_0)\sigma_z/2}
\end{equation}
An interesting observation is that the final hamiltonian after the continuous time limit is taken does not depend on any parameters that are coefficients of terms $O(\Delta t)$ except $\theta_1$ (so they really do not need to be included in Equation (\ref{eq:CTcoin}). $\theta_1$ can be interpreted as a driving factor for the final hamiltonian's time evolution. Its value completely determines how much mixing between the left and right states there will be due to the time evolution operator $SC$, as when $\theta_1=0$ all operators in the coin commute with the shift operator and no mixing occurs, which corresponds to the wave function coming back to itself every other step in the DTQW. Another interesting point is the reason for the skipping of steps (i.e. $n$ must be even). An explicit proof of why this must be can be seen in Appendix \ref{app:ctdsUnitaries}, but for a more intuitive explanation, consider the following. For the continuous time limit of the DTQW to exist, the following Fourier space hamiltonian must be finite:
\begin{equation}\label{eq:intuitiveCT}
    \tilde{H}(k)=\lim_{\Delta t\to 0}\frac{(e^{ik\Delta x\sigma_z}C(\Delta t))^n-\mathbb{I}}{\Delta t}
\end{equation}
The operator $e^{ik\Delta x\sigma_z}C$ does not homotope to identity as $\Delta t\to 0$ for any $C$, so no continuous time limit can exist for $n=1$. However, for the coin in Equation (\ref{eq:CTcoin}), the operator $e^{ik\Delta x\sigma_z}Ce^{ik\Delta x\sigma_z}C$ does homotope to the identity because of the following property of the coins in Equation (\ref{eq:CTcoin}): $Ce^{ik\Delta x\sigma_z}C=e^{-ik\Delta x\sigma_z}+O(\Delta t)$ i.e. the coins derived in Theorem \ref{thm:ctdsUnitaries} invert the shift operator up to $O(\Delta t)$. This is what causes the $O(\Delta t^0)$ term in $SCSC$ to become exactly identity. After the identities cancel in Equation (\ref{eq:intuitiveCT}), the $O(\Delta t)$ term is all that remains, which is the hamiltonian in \ref{thm:ctdsUnitaries}.

It should be noted that Ref.~\onlinecite{Strauch} derives a special case of the Hamiltonian in Theorem \ref{thm:ctdsUnitaries} using the coin $C=e^{-i\theta\sigma_x}$ and has $\theta=\frac{\pi}{2}-\gamma\Delta t$. The values of the angles parameterizing the general unitary coin in Theorem \ref{thm:ctdsUnitaries} for the particular choice of coin in Ref.~\onlinecite{Strauch} and $\Delta t$ dependence of angles are the following (where $\gamma$ is the jumping rate from vertex to vertex):
\begin{equation}\label{eq:StrauchWalk}
    \begin{split}
        &\psi_0=-\frac{\pi}{2},~\phi_0=\frac{\pi}{2},~\theta_0=\pi-2\gamma\delta t\\
        &\psi_1=0,~\phi_1=0,~\theta_1=4\gamma\\
        &\delta=0
    \end{split}
\end{equation}
The repercussions of not setting $\delta=p\pi$ for odd integer p leads to a final $H$ having constant infinite energy contributions, which are then ignored, as energy differences are the physical quantities.


Another interesting property of the coins derived in \ref{thm:ctdsUnitaries} is that the coins themselves homotopically approach a root of unity as well, so $\lim\limits_{\Delta t\to 0}C^n=\mathbb{I}$, as one can check. This was not implied by our analysis in section \ref{section:rootsofunity} for the continuous time limit case, and a full derivation of all the possible coins which can undergo continuous time limits was needed to obtain this property. Also, the resulting hamiltonian in Theorem \ref{thm:ctdsUnitaries} will be used in section \ref{section:commutativity} to determine how the continuous time limit followed by the continuous space limit compares to the simultaneous continuous space-time limit.

Now we analyze wave functions which undergo time evolution dictated by the hamiltonian obtained in theorem \ref{thm:ctdsUnitaries}. A proof of the following corollary is found in Appendix~\ref{app:corrctdsWavefunction}.

\begin{restatable}{corollary}{ctdsWaveFunctions}\label{corr:ctdsWavefunction}
Let $\vv{\Psi}(x,t)$ be a solution to the time evolution equation with the hamiltonian from theorem \ref{thm:ctdsUnitaries}, $i\partial_t\vv{\Psi}(x,t)=H\vv{\Psi}(x,t)=-\frac{\theta_1}{4}(R_z(-2\phi_0)+S^2 R_z(2\psi_0))\sigma_y\vv{\Psi}(x,t)$. Also, let $\vv{\Psi}(x,0)=\begin{pmatrix}\Psi_L(x,0)\\\Psi_R(x,0)\end{pmatrix}$ be the initial condition for $\vv{\Psi}(x,t)$. Then the following is the analytical form of the time evolution for $\vv{\Psi}(x,t)$ for all $t$ in terms of its initial state, where $x=m\Delta x$ for $m\in\mathbb{Z}$ and $J_m(t)$ is the $m^{\text{th}}$ order Bessel function of the first kind, $\alpha=\frac{\phi_0+\psi_0}{2}$, and $\beta=\frac{\phi_0-\psi_0}{2}$:
\begin{align*}
    &\vv{\Psi}(m\Delta x,t)=\begin{pmatrix}\Psi_L(m\Delta x,t)\\\Psi_R(m\Delta x,t)\end{pmatrix}=\\
    &\frac{1}{2}\sum_{n=-\infty}^\infty i^{m-n}e^{i\alpha(m-n)}J_{m-n}(\frac{\theta_1t}{2})\begin{pmatrix}(1+(-1)^{m-n})\Psi_L(n\Delta x,0)+ie^{i\beta}(1-(-1)^{m-n})\Psi_R((n+1)\Delta x,0) \\-ie^{-i\beta}(1-(-1)^{m-n})\Psi_L((n-1)\Delta x,0)+(1+(-1)^{m-n})\Psi_R(n\Delta x,0) \end{pmatrix}
\end{align*}
\end{restatable}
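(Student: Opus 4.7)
The plan is to pass to Fourier space, diagonalize the resulting $2\times 2$ Hamiltonian $\tilde H(k)$ by exploiting that $\tilde H(k)^2$ is a scalar multiple of the identity, write out $e^{-i\tilde H(k)t}$ explicitly in closed form, expand the resulting trigonometric functions of $k$ via Jacobi--Anger, and finally take the inverse discrete Fourier transform and re-index to match the stated formula.

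First I would compute $\tilde H(k)$. Using $R_z(-2\phi_0)=e^{i\phi_0\sigma_z}$, $R_z(2\psi_0)=e^{-i\psi_0\sigma_z}$ and the Fourier representation $S\mapsto e^{ik\Delta x\sigma_z}$ from Equation (\ref{eq:defn}), the Hamiltonian becomes $\tilde H(k)=-\tfrac{\theta_1}{4}\bigl(e^{i\phi_0\sigma_z}+e^{i(2k\Delta x-\psi_0)\sigma_z}\bigr)\sigma_y$. The identity $e^{iA\sigma_z}\sigma_y=\sin A\,\sigma_x+\cos A\,\sigma_y$ combined with the sum-to-product formulas $\sin A+\sin B=2\sin\tfrac{A+B}{2}\cos\tfrac{A-B}{2}$ and similarly for $\cos$ (with $A=\phi_0$, $B=2k\Delta x-\psi_0$, so that $(A+B)/2=k\Delta x+\beta$ and $(A-B)/2=\alpha-k\Delta x$) then gives the factored form $\tilde H(k)=\lambda(k)\,M(k)$, where $\lambda(k)=-\tfrac{\theta_1}{2}\cos(\alpha-k\Delta x)$ and $M(k)=e^{i(k\Delta x+\beta)\sigma_z}\sigma_y$. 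The key algebraic fact $M(k)^2=I$ follows from $\sigma_y e^{iA\sigma_z}\sigma_y=e^{-iA\sigma_z}$, which makes $M(k)\cdot M(k)=e^{i(k\Delta x+\beta)\sigma_z}e^{-i(k\Delta x+\beta)\sigma_z}=I$.

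Second, because $M(k)^2=I$, I would use the closed form $e^{-i\tilde H(k)t}=\cos(\lambda(k)t)\,I-i\sin(\lambda(k)t)\,M(k)$ and read off its four matrix entries explicitly; with $z=\theta_1 t/2$ and $u=\alpha-k\Delta x$ these are $\cos(z\cos u)$ on the diagonal (using evenness of $\cos$) and $\pm\sin(z\cos u)\,e^{\pm i(k\Delta x+\beta)}$ off the diagonal. Applying this to $\tilde{\vv\Psi}(k,0)$ yields $\tilde{\vv\Psi}(k,t)$ as a sum of four terms, two containing $\cos(z\cos u)$ and two containing $\sin(z\cos u)$ times a simple phase in $k$.

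Third, I would invoke the Jacobi--Anger expansions
\begin{equation*}
\cos(z\cos u)=\!\!\sum_{n\text{ even}}\!\! i^{n}J_{n}(z)\,e^{inu},\qquad
\sin(z\cos u)=\!\!\sum_{n\text{ odd}}\!\!(-1)^{(n-1)/2}J_{n}(z)\,e^{inu},
\end{equation*}
which are precisely what produce the two parity-selecting prefactors $(1+(-1)^{m-n})/2$ and $(1-(-1)^{m-n})/2$ appearing in the corollary: the cosine expansion contributes only even values of $m-n$ to the diagonal (self-coupling) terms, the sine expansion only odd values to the off-diagonal (spin-flipping) terms. Writing each $e^{inu}=e^{in\alpha}e^{-ink\Delta x}$ and multiplying by the residual phase $e^{\pm i(k\Delta x+\beta)}$ leaves each summand a pure exponential in $k$ times $\tilde\Psi_{L/R}(k,0)$.

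Fourth, I would apply the inverse discrete Fourier transform, using the elementary fact that multiplication by $e^{ipk\Delta x}$ in Fourier space corresponds to a shift by $+p\Delta x$ in position space; this converts each term into $\Psi_{L/R}\bigl((m+p)\Delta x,0\bigr)$ for an integer $p$ determined by the summation index $n$ and the off-diagonal phase. Relabeling the summation variable $n\mapsto m-n$ (and using $J_{-\ell}=(-1)^\ell J_\ell$ together with the identity $(-1)^{(m-n-1)/2}=i^{m-n-1}$ for odd $m-n$) collects everything into the single sum $\sum_n i^{m-n}e^{i\alpha(m-n)}J_{m-n}(\theta_1 t/2)(\dots)$ of the corollary, with the $ie^{i\beta}$ and $-ie^{-i\beta}$ off-diagonal phases coming from the $e^{\pm i\beta}$ factor pulled out of $M(k)$ combined with the extra $\pm i$ from the $\sin$ expansion.

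The main obstacle is not conceptual but bookkeeping: coordinating the three sources of $i$-phases (the $-i$ in $e^{-i\tilde H t}$, the $\mp i$ in the off-diagonal entries of $M(k)\sigma_y$, and the $i^n$ in Jacobi--Anger), tracking the position-shift $p$ through the change of summation variable, and confirming that the even/odd parity of $m-n$ lines up with the cosine/sine branches. Convergence of the series in the $\mathbb X$ topology is then automatic since the Bessel coefficients decay faster than any power and $\vv\Psi(\cdot,0)\in L^2\otimes L^2(\Sigma)$. The computation for $\Psi_R(m\Delta x,t)$ is essentially identical, with $\beta\to-\beta$ arising from the conjugate off-diagonal entry of $M(k)$.
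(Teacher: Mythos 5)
Your proposal is correct and follows essentially the same route as the paper's Appendix B: pass to Fourier space, exponentiate the $2\times 2$ Hamiltonian in closed form, expand the resulting $e^{\pm i\frac{\theta_1 t}{2}\cos(k\Delta x-\alpha)}$-type entries in Bessel series, and inverse-transform term by term. The only (cosmetic) difference is that you obtain the propagator from the involution $M(k)^2=I$ and apply Jacobi--Anger directly to the shifted argument $k\Delta x-\alpha$, whereas the paper diagonalizes $\tilde H(k)$ with its eigenvector matrix $U$ and then splits $\cos(k-\alpha)$ into $\cos\alpha\cos k+\sin\alpha\sin k$, recombining the two resulting Bessel convolutions via Graf's addition theorem --- your version reaches the phase $e^{i\alpha(m-n)}$ in one step rather than two.
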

This solution does reduce to that found in Ref.~\cite{Strauch} when the corresponding parameters in equation \ref{eq:StrauchWalk} are used, except for some minor sign differences stemming from their shift operator being defined as the inverse of our shift operator. We see that the locations for which $\vv{\Psi}_L(x,0)$ is nonzero will contribute to $\vv{\Psi}_L(m\Delta x,t)$ if they are an even number of steps away from $m$, and the nonzero locations of $\vv{\Psi}_R(x,0)$ will contribute to $\vv{\Psi}_L(m\Delta x,t)$ if they are an odd number of steps away from $m$, and the opposite scenario is true for $\vv{\Psi}_R(m\Delta x,t)$. For a full description of the effects $\alpha$ and $\beta$ have on the probability distribution, see section \ref{section:DTQWtoCTQW}.


\section{Continuous Space-Time Limit with No Coin Variation}\label{section:noVariation}

In this section we will show for which DTQWs the continuous space-time limit exists and what the ensuing time evolution is if there is no coin variation involved, as defined in definition \ref{def:Variation}. We present this theorem to show that it is possible to obtain a continuous space-time limit of a DTQW with non-varying coin, and to demonstrate properties that coins must have to undergo this type of limit. A proof of the following theorem is presented in Appendix \ref{app:FixedCoinTheorem}.

\begin{restatable}{theorem}{noVariation}\label{thm:fixedCoinTheorem}
Let $\vv{\Psi}(x,t)$ be a 2 component wave function undergoing the DTQW, as defined in Equation (\ref{eq:wave function}). Also, let $|\hat{n}|=\sqrt{n_x^2+n_y^2+n_z^2}=1$, $l=0,1,2,...$, $m=1,2,...$, and $v\Delta t=\Delta x$, where $\Delta t$ and $\Delta x$ are the time step and lattice spacings of the DTQW for $\vv{\Psi}(x,t)$, respectively. The continuous space-time limit will exist for $\vv{\Psi}(x,t)$ if and only if the DTQW skips every $m$ steps and the coin operator dictating its DTQW is of the form
\begin{equation}\label{eq:ctcsCoin}
    C=\exp \frac{i\pi l}{m}\exp \frac{-i\pi l}{m}\hat{n}\cdot\vv{\sigma}.
\end{equation}
The ensuing Hamiltonian for such a walk will be the following massless Dirac hamiltonian: 
\begin{equation}
    H=-vn_z\hat{n}\cdot\vv{\sigma}\frac{\partial}{\partial x}.
\end{equation}
\end{restatable}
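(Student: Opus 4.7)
My plan is to work entirely in Fourier space, where the shift becomes the multiplication operator $\widetilde{S}=e^{ik\Delta x\sigma_z}$, and to extract the generator by a first-order Taylor expansion in $\epsilon:=\Delta t$ with $\Delta x=v\epsilon$. The necessity direction follows immediately from Section \ref{section:rootsofunity}: because $C$ does not vary with $\epsilon$ and $\widetilde{S}\to\mathbb{I}$ as $\epsilon\to 0$, Theorem \ref{theorem:rootofunity} forces the exact condition $C^m=\mathbb{I}$, for otherwise the quotient $((\widetilde{S}C)^m-\mathbb{I})/(m\epsilon)$ diverges at order $1/\epsilon$. In particular this rules out the unskipped case $m=1$ unless $C$ is already the identity and the walk is trivial.

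Given $C^m=\mathbb{I}$, I would next classify the admissible coins. Writing an arbitrary $2\times 2$ unitary as $C=e^{i\delta}\exp(-i\omega\,\hat{n}\cdot\vv{\sigma}/2)$ with $|\hat{n}|=1$, the constraint $C^m=\mathbb{I}$ is equivalent to the pair of scalar conditions $\sin(m\omega/2)=0$ and $e^{im\delta}\cos(m\omega/2)=1$. Solving these jointly, and carefully resolving the $2\pi$ ambiguities in $\delta$ and $\omega$, yields $\omega=2\pi l/m$ and $\delta=\pi l/m$ for an integer $l$, which is exactly the parameterization in Equation (\ref{eq:ctcsCoin}).

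For sufficiency and to read off the Hamiltonian, I would Taylor-expand $\widetilde{S}=\mathbb{I}+ikv\epsilon\,\sigma_z+O(\epsilon^2)$ and compute
\begin{equation*}
(\widetilde{S}C)^m = C^m + ikv\epsilon\sum_{j=0}^{m-1} C^{j}\sigma_z C^{m-j} + O(\epsilon^2) = \mathbb{I} + ikv\epsilon\sum_{p=0}^{m-1} C^{p}\sigma_z C^{-p} + O(\epsilon^2),
\end{equation*}
using the standard first-order product expansion together with $C^m=\mathbb{I}$ to replace $C^{m-p}$ by $C^{-p}$. The overall phase $e^{i\pi l/m}$ in $C$ drops out of the conjugation, so $C^{p}\sigma_z C^{-p}$ is precisely the vector $\sigma_z$ rotated by angle $2\pi pl/m$ around the axis $\hat{n}$. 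Decomposing $\hat{z}=n_z\hat{n}+(\hat{z}-n_z\hat{n})$, the transverse components pick up the phases $e^{\pm 2\pi ipl/m}$ and their sums over $p=0,\ldots,m-1$ vanish as geometric series of non-trivial $m$-th roots of unity, leaving
\begin{equation*}
\sum_{p=0}^{m-1} C^{p}\sigma_z C^{-p} = m\,n_z\,\hat{n}\cdot\vv{\sigma}.
\end{equation*}
Dividing by $m\epsilon$, multiplying by $i$ per Definition \ref{def:CLimit}, and converting $k\leftrightarrow -i\partial_x$ through the Fourier convention recovers the advertised massless Dirac Hamiltonian $H=-v n_z\,\hat{n}\cdot\vv{\sigma}\,\partial_x$ (up to sign convention).

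The main obstacle I anticipate is the phase bookkeeping in the classification step: forcing a single integer $l$ to control both the scalar factor and the rotation angle in Equation (\ref{eq:ctcsCoin}) requires extracting the joint constraint from both real and imaginary parts of $C^m=\mathbb{I}$ and quotienting by the usual $(\omega,\delta)\to(\omega+2\pi,\delta+\pi)$ redundancy of the axis–angle parameterization. A secondary concern is confirming that the $O(\epsilon^2)$ remainder is controlled in the topology of $\mathbb{X}$, but this should follow because the expansion is analytic in $k\epsilon$ and $(\widetilde{S}C)^m$ acts as a bounded Fourier multiplier of norm one on the spatial $L^2$ factor, so convergence holds on sufficiently regular initial data.
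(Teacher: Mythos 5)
Your proposal is correct and follows essentially the same route as the paper's Appendix \ref{app:FixedCoinTheorem}: necessity via Theorem \ref{theorem:rootofunity} forcing $C^m=\mathbb{I}$, the same axis--angle classification giving Equation (\ref{eq:ctcsCoin}), and the same first-order expansion of $(\widetilde{S}C)^m$ reducing the generator to $\sum_j C^{-j}\sigma_z C^{j}=m\,n_z\,\hat{n}\cdot\vv{\sigma}$. The only (cosmetic) difference is that you evaluate this sum by viewing the conjugations as rotations and cancelling the transverse part as a geometric series of $m$-th roots of unity, whereas the paper expands $C^{-j}\sigma_z C^{j}$ trigonometrically and discards the oscillating terms.
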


We see that the massless Dirac hamiltonian is the limiting hamiltonian of this continuum limit. In the continuous space-time limit the mass term is generated by the coin's variation with time step in the continuum limit, as can be seen in Appendix \ref{App:CSTn=1}. Because our coin does not vary in the continuum limit, the ensuing continuous space-time hamiltonian will have no mass.

Another interesting note is that the above theorem states that a coin with no variation will have a continuum limit if it is itself a root of unity. This makes sense in light of Theorem \ref{theorem:rootofunity}, as the continuous parameter in our coin is no longer there, so the coin itself must be a root of unity. Also, this theorem might seem at odds with the discussion at the start of Ref.~\onlinecite{Molfetta} (which we repeat in section \ref{section:rootsofunity}) but they did not consider skipping steps in the walk when taking the continuum limit, which is how we were able to get a limit for this walk even when our coin did not vary at all in the continuum limit.


\section{Simultaneous Continuous Space-Time Limit vs Continuous Time Followed by Continuous Space Limit}\label{section:commutativity}
 In this section we state a theorem on the existence of non-trivial continuous space limits of the continuous time limit of the DTQW. We begin with the theorem (proof in Appendix \ref{App:CTCS}):

\begin{restatable}{theorem}{ctcs}\label{thm:ctcs}
Let $\phi_0$ and $\psi_0$ be unable to vary in the continuous space limit (i.e. $\phi_0$, $\psi_0$ cannot depend on $\Delta x$). Then the only time evolution equation which is not infinite and contains spatial derivative(s) for the continuous space limit ($\Delta x\to0$) of the continuous time limit of the DTQW is a massless dirac equation.
\end{restatable}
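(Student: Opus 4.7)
The plan is to take the Hamiltonian produced by Theorem~\ref{thm:ctdsUnitaries} as the starting point and Taylor-expand its $\Delta x$ dependence, then identify which scalings of the free parameters produce a finite, non-trivial spatial derivative in the limit $\Delta x \to 0$. Concretely, I would pass to Fourier space so that $S^2$ becomes $e^{2ik\Delta x\sigma_z}$ and write
\begin{equation*}
\tilde H(k) = -\frac{\theta_1}{4}\bigl(R_z(-2\phi_0) + e^{2ik\Delta x\sigma_z} R_z(2\psi_0)\bigr)\sigma_y,
\end{equation*}
so that the small-$\Delta x$ expansion reads $\tilde H(k) = A_0 + \Delta x\, A_1(k) + O(\Delta x^2)$ with $A_0 = -\frac{\theta_1}{4}(R_z(-2\phi_0)+R_z(2\psi_0))\sigma_y$ and $A_1(k) = -\frac{\theta_1}{2}ik\,\sigma_z R_z(2\psi_0)\sigma_y$.

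Next I would read off the two necessary conditions for the limit to be finite and to retain a spatial derivative. Since $\phi_0$ and $\psi_0$ are assumed to be independent of $\Delta x$, the operator $A_0$ can only avoid blowing up (when we later rescale $\theta_1$) by vanishing identically, which forces $e^{i\phi_0\sigma_z} = -e^{-i\psi_0\sigma_z}$, i.e.\ the angle condition $\phi_0 + \psi_0 \equiv \pi \pmod{2\pi}$. Under this condition, only $A_1$ and higher orders survive, and a nontrivial spatial derivative requires $\theta_1\Delta x$ to approach a finite nonzero constant $v$; equivalently, $\theta_1$ must scale as $v/\Delta x$. The $O(\Delta x^2)$ corrections then carry a factor $\theta_1\Delta x^2 \to 0$, so they are negligible.

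The last step is a short algebraic simplification. Using that $\sigma_z$ commutes with $e^{-i\psi_0\sigma_z}$ and that $\sigma_z\sigma_y = -i\sigma_x$, one obtains
\begin{equation*}
\sigma_z R_z(2\psi_0)\sigma_y = -i R_z(2\psi_0)\sigma_x,
\end{equation*}
so the limiting Fourier Hamiltonian becomes $\tilde H(k) = -\tfrac{v}{2}\,k\,M$ with $M = R_z(2\psi_0)\sigma_x$. Since $\sigma_x$ anticommutes with $\sigma_z$, a direct check gives $M^\dagger = M$ and $M^2 = \mathbb{I}$, i.e.\ $M$ is a traceless Hermitian involution. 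Inverting the Fourier transform ($k \leftrightarrow -i\partial_x$) yields $H = \tfrac{iv}{2}M\,\partial_x$, which is the standard massless Dirac Hamiltonian in $1{+}1$ dimensions.

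The main obstacle I anticipate is phrasing the \emph{only if} direction cleanly: one must argue that any other scaling of $\theta_1$ either forces the limit to be identically zero (no derivative), infinite (ruled out by hypothesis), or leaves a purely algebraic term with no $\partial_x$. The hypothesis that $\phi_0,\psi_0$ are $\Delta x$-independent is essential here, because it prevents one from cancelling the $A_0$ piece through a $\Delta x$-dependent rotation of the angles; without that assumption the case analysis would branch considerably and the conclusion would fail. The remaining work after those conditions are extracted is routine matrix algebra.
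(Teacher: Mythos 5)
Your proposal is correct and follows essentially the same route as the paper's Appendix~\ref{App:CTCS}: expand $\tilde H(k)$ in $\Delta x$, force the zeroth-order piece to cancel via $\phi_0+\psi_0=\pi$, scale $\theta_1\propto 1/\Delta x$ to retain the $k$-linear term, and read off the massless Dirac Hamiltonian $H=\tfrac{iv}{2}e^{-i\psi_0\sigma_z}\sigma_x\,\partial_x$. The only differences are cosmetic (you derive the angle condition before the scaling, and you add the welcome observation that $M=R_z(2\psi_0)\sigma_x$ is a Hermitian involution), and like the paper you must lean on the standing assumption (Conjecture~\ref{conjecture:phipsi}) that only $\theta_1$ may depend on $\Delta x$.
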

The reason why we do not allow for $\phi_0$ and $\psi_0$ to not depend on $\Delta x$ is given by the following conjecture:
\begin{conjecture}\label{conjecture:phipsi}
There is no dependence $\phi_0$ and$/$or $\psi_0$ can have on $\Delta x$ that would allow for spatial derivative(s) in the continuum limit
\end{conjecture}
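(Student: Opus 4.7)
The plan is to work in Fourier space and exploit the rigid structure in which the coin parameters $\phi_0$ and $\psi_0$ enter the continuous time limit Hamiltonian $H=-\frac{\theta_1}{4}(R_z(-2\phi_0)+S^2 R_z(2\psi_0))\sigma_y$ from Theorem~\ref{thm:ctdsUnitaries}. Using $\tilde{S}=e^{ik\Delta x\sigma_z}$ and $R_z(\alpha)=e^{-i\alpha\sigma_z/2}$, a direct computation (and the fact that functions of $\sigma_z$ commute) gives
$$\tilde{H}(k,\Delta x)=-\frac{\theta_1}{4}\bigl[e^{i\phi_0\sigma_z}+e^{i(2k\Delta x-\psi_0)\sigma_z}\bigr]\sigma_y,$$
whose matrix entries have modulus at most $|\theta_1|/2$ uniformly in $k$, $\Delta x$, $\phi_0(\Delta x)$, and $\psi_0(\Delta x)$, because every exponent is a real number times $\sigma_z$. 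Crucially, the momentum $k$ enters only inside the single phase $e^{\pm 2ik\Delta x}$, never multiplied by a quantity that diverges with $\Delta x$.

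The next step is to translate the presence of a spatial derivative into a Fourier-side criterion: a derivative of order $n$ in position space becomes multiplication by $(ik)^n$. Taylor-expanding $e^{\pm 2ik\Delta x}$ about $\Delta x=0$ yields
$$e^{\pm 2ik\Delta x}=\sum_{n=0}^{\infty}\frac{(\pm 2i\Delta x)^n}{n!}k^n,$$
so in $\tilde{H}(k,\Delta x)$ the coefficient of $k^n$ for $n\ge 1$ equals $-\tfrac{\theta_1}{4}\cdot\tfrac{(\pm 2i\Delta x)^n}{n!}$ multiplied by $e^{\mp i\psi_0(\Delta x)}\sigma_y$ (with the appropriate diagonal placement). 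Because $\lvert e^{\mp i\psi_0(\Delta x)}\rvert=1$ regardless of how $\psi_0$ depends on $\Delta x$, no choice of $\psi_0(\Delta x)$ can cancel the explicit factor $\Delta x^n$; meanwhile $\phi_0$ does not multiply $k$ at all, so its $\Delta x$-dependence is irrelevant to this question. Hence every coefficient of $k^n$ for $n\ge 1$ is $O(\Delta x^n)\to 0$, and in the limit $\tilde{H}(k,\Delta x)$ tends to a matrix that is independent of $k$, which in position space means a pure multiplication operator with no $\partial_x$ of any order.

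The main obstacle is justifying this operator convergence when $\phi_0(\Delta x)$ and $\psi_0(\Delta x)$ are allowed to be arbitrary, possibly non-convergent or oscillatory functions of $\Delta x$. I would argue subsequentially: for any sequence $\Delta x_j\to 0$, compactness of the circle lets us extract a subsequence along which $\phi_0$ and $\psi_0$ converge modulo $2\pi$, and then apply dominated convergence to $\tilde{H}(k,\Delta x_j)\vv{\widetilde{\Psi}}(k)$ in $L^2(\mathbb{R})\otimes L^2(\Sigma)$, using the uniform entrywise bound by $|\theta_1|/2$ together with the pointwise convergence of each matrix entry established above. Because this produces a $k$-independent limiting matrix along every such subsequence, any continuum-limit Hamiltonian that exists in the topology of $\mathbb{X}$ (as required by Definition~\ref{def:CLimit}) must itself be $k$-independent, and therefore contains no spatial derivative of any order. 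This would establish the conjecture: no $\Delta x$-dependence of $\phi_0$ or $\psi_0$ alone can generate spatial derivatives in the continuum limit.
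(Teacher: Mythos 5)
First, a point of order: the paper does not prove this statement. It is stated and used as an unproven conjecture (Appendix \ref{App:CTCS} simply invokes it to justify the hypothesis of Theorem \ref{thm:ctcs}), so there is no proof of record to compare yours against; your attempt has to stand on its own. On its own terms, your argument is correct under one tacit assumption: that $\theta_1$ is held fixed as $\Delta x\to 0$. Given that, the uniform entrywise bound $|\theta_1|/2$, the observation that $k$ enters $\tilde H$ only through the phase $e^{\pm 2ik\Delta x}$ so that the coefficient of $k^n$ is $O(\Delta x^n)$ times a unimodular factor, and the subsequential-compactness plus dominated-convergence step are all sound, and they do show that the limiting operator is multiplication by a $k$-independent matrix.

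The gap is that this tacit assumption is precisely what the conjecture cannot afford in the context where the paper needs it. Appendix \ref{App:CTCS} uses the conjecture to conclude that \emph{only} $\theta_1$ may depend on $\Delta x$, and then immediately sets $\theta_1=\alpha/\Delta x$. Once $\theta_1$ is allowed to diverge like $1/\Delta x$, your bound is no longer uniform in $\Delta x$ and the coefficient of $k$ is $O(\theta_1\Delta x)=O(1)$ rather than $o(1)$, so the mechanism of your proof collapses. Worse, in that regime the literal statement appears false: take $\theta_1=\alpha/\Delta x$ and $\phi_0=\pi-\psi_0+c\Delta x$, so that $\phi_0$ genuinely depends on $\Delta x$. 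Then
\begin{equation*}
H_C=-\frac{\theta_1}{4}\,e^{-i\psi_0\sigma_z}\bigl(e^{i(\phi_0+\psi_0)\sigma_z}+e^{2ik\Delta x\,\sigma_z}\bigr)\sigma_y\;\longrightarrow\;-\frac{\alpha}{2}\Bigl(k-\frac{c}{2}\Bigr)e^{-i\psi_0\sigma_z}\sigma_x,
\end{equation*}
a finite continuum limit that \emph{does} contain a spatial derivative even though $\phi_0$ varies with $\Delta x$. (The extra piece is only a momentum shift, i.e.\ removable by the gauge transformation $\vv{\Psi}\mapsto e^{icx/2}\vv{\Psi}$, so the conclusion of Theorem \ref{thm:ctcs} survives in spirit; but showing that nothing beyond such gauge terms can arise is a different, and harder, claim than the one you proved.) So either the conjecture is read with $\theta_1$ fixed, in which case your proof is essentially complete but the statement no longer supports the inference drawn from it in Appendix \ref{App:CTCS}, or it is read as the paper uses it, in which case your proof does not address the relevant regime and the statement itself needs to be reformulated before it can be proved.
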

The reason why we are searching for time evolution equations with spatial derivatives is because without them, no spatial translation will occur for our wave function in the continuous space limit, thus resulting in a trivial stationary walk. An interesting note of Theorem \ref{thm:ctcs} is that a different time evolution equation occurs when a \textbf{simultaneous} continuous space-time limit is taken. As can be seen in Appendix \ref{App:CSTn=1}, when a simultaneous space-time continuum limit is taken, a massive Dirac equation results.

\section{General DTQW Relationship to CTQW}\label{section:DTQWtoCTQW}
In the following section, we will be building on a result of Strauch's from Ref.~\onlinecite{Strauch}, in which a connection was found between the DTQW and CTQW by taking a continuous time limit of the DTQW to relate it to the CTQW. Strauch used a specific coin $e^{-i\theta\sigma_x}$, and let $\theta=\frac{\pi}{2}-\gamma\Delta t$ when the continuous time limit was taken. Now that a general parameterization of all the possible coins which can undergo a continuous time quantum walk has been obtained from Theorem \ref{thm:ctdsUnitaries}, we have the opportunity to investigate if, for a general coin, whether or not a relationship between the CTQW and DTQW exists. We begin by reviewing Strauch's specific results in Ref.~\onlinecite{Strauch}.


\subsection{Review of Strauch}\label{section:StrauchReview}
In this section, we will be reviewing the connection between the DTQW and CTQW made by Strauch in Ref.~\onlinecite{Strauch}. To start, consider a DTQW dictated by the shift operator (in Fourier space) $\widetilde{S}=e^{ik\Delta_x\sigma_z}$ and coin operator $C=e^{-i\theta\sigma_x}$ such that the time evolution of a Fourier space wave function $\vv{\widetilde{\Psi}}(k,t)$ is given by $\vv{\widetilde{\Psi}}(k,t+\Delta t)=\widetilde{S}C\vv{\widetilde{\Psi}}(k,t)$. When a continous time limit ($\Delta t\to 0$) is taken on $\vv{\Psi}(x,t)$, letting $\theta=\frac{\pi}{2}-\gamma\Delta t$ and skipping every other step, the following time evolution is recovered for $\vv{\Psi}(x,t)$:
\begin{equation}
    i\partial_t\vv{\Psi}(x,t)=-\gamma(\mathbb{I}+S^2)\sigma_x\vv{\Psi}(x,t)
\end{equation}
Now for the profound relation discovered by Strauch. If we define two wave functions $\vv{\Psi}_+(x,t)$ and $\vv{\Psi}_-(x,t)$ such that $\vv{\Psi}_{\pm}(x,t)\equiv\frac{e^{\mp2i\gamma t}}{2}(\mathbb{I}\pm S\sigma_x)\vv{\Psi}(x,t)$, then it can be shown that $\vv{\Psi}(x,t)=e^{2i\gamma t}\vv{\Psi}_+(x,t)+e^{-2i\gamma t}\vv{\Psi}_-(x,t)$ and $i\partial_t\vv{\Psi}_\pm(x,t)=\mp\gamma\big[\vv{\Psi}_\pm(x+\Delta x,t)+\vv{\Psi}_\pm(x-\Delta x,t)-2\vv{\Psi}_\pm(x,t)\big]$ (which is the CTQW time evolution equation). In other words, Strauch found that the continuous time limit of his DTQW can be written as a superposition of two copies of the CTQW. This relation helped clarify the then longstanding mystery about the exact relationship between the two ways of quantizing the quantum walk, the DTQW and CTQW. Next we show that this relationship holds for a general coin, and we will use the relation to see how the DTQW coin parameters effect the final probability distribution in the discussion following the theorem \ref{thm:connection}.


\subsection{General Coin CTQW-DTQW relation}\label{section:generalCoinRelation}
We summarize our findings in the following theorem, the proof of which is in Appendix \ref{app:CTDTrelation}:
\begin{restatable}{theorem}{ctdtqwRelation}\label{thm:connection}
Let $\vv{\Psi}(x,t)$ be the following 2 component wave function resulting from the continuous time limit of the DTQW with a general coin as found in Theorem \ref{thm:ctdsUnitaries}: $$ i\partial_t\vv{\Psi}(x,t)=-\frac{\theta_1}{4}(R_z(-2\phi_0)+S^2 R_z(2\psi_0))\sigma_y\vv{\Psi}(x,t)$$ where $\theta_1$, $\phi_0$, and $\psi_0$ are real numbers which cannot depend on $x$ or $t$. Also let $\vv{\Psi}_\pm(x,t)$ be wave functions which satisfy the following CTQW time evolution equations: $$i\partial_t\vv{\Psi}_\pm(x,t)=\mp\frac{\theta_1}{4}\big[\vv{\Psi}_\pm(x+\Delta x,t)+\vv{\Psi}_\pm(x-\Delta x,t)-2\vv{\Psi}_\pm(x,t)].$$ Then $\vv{\Psi}(x,t)$ can be written as a superposition of $\vv{\Psi}_+(x,t)$ and $\vv{\Psi}_-(x,t)$ in the following way (where $\alpha=\frac{\phi_0+\psi_0}{2}$):
\begin{equation}\label{eq:ctqw-dtqwRelation}
    \vv{\Psi}(x,t)=e^{i\alpha\frac{x}{\Delta x}}(e^{-\frac{i\theta_1 t}{2}}\vv{\Psi}_+(x,t)+e^{\frac{i\theta_1 t}{2}}\vv{\Psi}_-(x,t)).
\end{equation}
\end{restatable}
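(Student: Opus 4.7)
The plan is to reduce the problem to Strauch's special case (reviewed in Section \ref{section:StrauchReview}) via an $x$-dependent phase gauge that absorbs $\alpha = (\phi_0+\psi_0)/2$. Set $\vv{\Phi}(x,t) := e^{-i\alpha x/\Delta x}\vv{\Psi}(x,t)$. The scalar prefactor $W(x) := e^{-i\alpha x/\Delta x}$ commutes with $\sigma_y$ and with any $R_z(\cdot)$, but under conjugation the spinor shift picks up a coin-space rotation: since $W(x\pm\Delta x)/W(x) = e^{\mp i\alpha}$, we have $WSW^{-1} = e^{i\alpha\sigma_z}S$ and hence $WS^2W^{-1} = e^{2i\alpha\sigma_z}S^2$. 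Substituting into the time-evolution equation for $\vv{\Phi}$ and using $2\alpha - \psi_0 = \phi_0$ to fuse the two $R_z$ factors, one finds $i\partial_t\vv{\Phi} = -\tfrac{\theta_1}{4}(\mathbb{I}+S^2)\sigma_\perp\vv{\Phi}$, where $\sigma_\perp := e^{i\phi_0\sigma_z}\sigma_y = \cos\phi_0\,\sigma_y + \sin\phi_0\,\sigma_x$. Since $\sigma_\perp^2 = \mathbb{I}$ and $\{\sigma_\perp,\sigma_z\}=0$, we have $\sigma_\perp S\sigma_\perp = S^{-1}$: the gauged Hamiltonian $H' := WHW^{-1}$ has precisely Strauch's form, with rate $\gamma$ replaced by $\theta_1/4$ and $\sigma_x$ replaced by $\sigma_\perp$.

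With this reduction in hand, apply Strauch's projector decomposition verbatim. Because $(S\sigma_\perp)^2 = \mathbb{I}$, the operators $P_\pm := \tfrac{1}{2}(\mathbb{I}\pm S\sigma_\perp)$ are commuting projectors for $H'$, and the key identity $H' P_\pm = \mp\tfrac{\theta_1}{4}(S+S^{-1})P_\pm$ follows by expanding $(\mathbb{I}+S^2)\sigma_\perp P_\pm$ and using $\sigma_\perp P_\pm = \pm S^{-1}P_\pm$. Define $\vv{\Psi}_\pm(x,t) := \tfrac{1}{2}e^{\mp i\theta_1 t/2}(\mathbb{I} \mp S\sigma_\perp)\vv{\Phi}(x,t)$, where the signs of the phases and of the projectors are chosen so that both the decomposition and the CTQW equations in the theorem hold simultaneously. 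Direct summation gives $e^{-i\theta_1 t/2}\vv{\Psi}_+ + e^{i\theta_1 t/2}\vv{\Psi}_- = (P_- + P_+)\vv{\Phi} = \vv{\Phi}$; multiplying by $e^{i\alpha x/\Delta x}$ recovers the theorem's superposition formula for $\vv{\Psi}$.

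For the CTQW equations, differentiate each $\vv{\Psi}_\pm$ in $t$. The time derivative of the phase contributes a constant $\pm\tfrac{\theta_1}{2}\vv{\Psi}_\pm$ term, while substituting $i\partial_t\vv{\Phi} = H'\vv{\Phi}$ and invoking the key identity contributes $\mp\tfrac{\theta_1}{4}(S+S^{-1})\vv{\Psi}_\pm$ (after the phase is absorbed). Combining yields $i\partial_t\vv{\Psi}_\pm = \mp\tfrac{\theta_1}{4}(S+S^{-1}-2)\vv{\Psi}_\pm$. The proof finishes with a componentwise observation: applied to any two-component wavefunction, $S+S^{-1}$ coincides with $T + T^{-1}$ where $T$ is the scalar shift $\vv{f}(x)\mapsto\vv{f}(x+\Delta x)$, because the oppositely-directed shifts of $S$ on the two spinor components symmetrize when $S^{-1}$ is added. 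Hence the right-hand side is precisely the scalar Laplacian $\vv{\Psi}_\pm(x+\Delta x,t)+\vv{\Psi}_\pm(x-\Delta x,t)-2\vv{\Psi}_\pm(x,t)$ in the theorem.

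The main obstacle is the first step, namely identifying the correct gauge factor. The choice $\alpha = (\phi_0+\psi_0)/2$ is forced by the requirement that the conjugated Hamiltonian take Strauch's form, and it is not a priori obvious without comparing to the target formula; it is precisely this gauge that converts the asymmetric Hamiltonian of Theorem \ref{thm:ctdsUnitaries} into a symmetric $(\mathbb{I}+S^2)\sigma_\perp$ combination amenable to Strauch's algebra. Once this is recognized, the rest of the argument reduces to verifying the algebraic identity $(S\sigma_\perp)^2 = \mathbb{I}$ and performing the mechanical bookkeeping of the time-dependent phases $e^{\pm i\theta_1 t/2}$ that absorb the constant energy shifts induced by restricting $H'$ to the ranges of $P_\pm$.
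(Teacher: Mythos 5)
Your proposal is correct in substance but reaches the theorem by a genuinely different route than the paper. The paper works in Fourier space: it diagonalizes $\widetilde{H}(k)$ explicitly, finds the eigenvalues $\pm\frac{\theta_1}{2}\cos(k\Delta x-\alpha)$, observes that each spectral component obeys an $\alpha$-twisted nearest-neighbor hopping equation, and only afterwards strips off the $e^{-i\alpha x/\Delta x}$ and $e^{\pm i\theta_1 t/2}$ phases to land on the CTQW. You instead gauge first: conjugating by $W=e^{-i\alpha x/\Delta x}$ (your identities $WS^2W^{-1}=e^{2i\alpha\sigma_z}S^2$ and $2\alpha-\psi_0=\phi_0$ check out) collapses the Hamiltonian to Strauch's form $-\frac{\theta_1}{4}(\mathbb{I}+S^2)\sigma_\perp$ with $\sigma_\perp=\cos\phi_0\,\sigma_y+\sin\phi_0\,\sigma_x$, after which the operator identities $(S\sigma_\perp)^2=\mathbb{I}$ and $\sigma_\perp P_\pm=\pm S^{-1}P_\pm$ do all the work with no explicit diagonalization. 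Your version yields closed-form projectors and makes the reduction to the reviewed Strauch case structurally transparent; the paper's version produces the eigenvalues themselves, which it reuses in its other appendices. Your closing observation that $S+S^{-1}$ acts componentwise as the scalar symmetric shift is a step the paper leaves implicit and is worth making explicit. One bookkeeping caveat: as written, your definition $\vv{\Psi}_\pm=\frac{1}{2}e^{\mp i\theta_1 t/2}(\mathbb{I}\mp S\sigma_\perp)\vv{\Phi}$ does not satisfy your own summation identity --- you need $e^{\pm i\theta_1 t/2}$ there --- and once that is fixed, the function multiplying $e^{-i\theta_1 t/2}$ in the decomposition obeys the CTQW equation with the $+\frac{\theta_1}{4}$ sign rather than the $-\frac{\theta_1}{4}$ sign paired with it in the theorem. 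No choice of projector labels reproduces the theorem's exact pairing; this is the same $+\leftrightarrow-$ mismatch already present between the paper's theorem statement, its own lemma chain, and the Strauch formula it reviews, so it reflects on the statement rather than on your argument.
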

Now that we have a general relationship between the continuous time limit of the DTQW and the CTQW, we can analyze exactly how the coin parameters $\alpha$ and $\beta$ effect the probability distribution of the continuous time limit. First of all, $\vv{\Psi}_\pm(x,t)$ are fixed momentum traveling wave states with time evolution which does not depend on $\alpha$ or $\beta$ because $\vv{\Psi}_\pm(x,t)$ satisfy the CTQW (which does not depend on $\alpha$ or $\beta$), so the time evolution of just these wave functions would be to just spread their initial distribution across the sites. Now we write equation \ref{eq:ctqw-dtqwRelation} more suggestively:
\begin{equation}
    \vv{\Psi}(x,t)=e^{i(\alpha\frac{x}{\Delta x}-\frac{\theta_1 t}{2})}\vv{\Psi}_+(x,t)+e^{i(\alpha\frac{x}{\Delta x}+\frac{\theta_1 t}{2})}\vv{\Psi}_-(x,t).    
\end{equation}
$e^{i(\alpha\frac{x}{\Delta x}-\frac{\theta_1 t}{2})}$ has the effect of boosting $\vv{\Psi}_+(x,t)$ to a frame traveling right (if $\alpha>0$) at speed $|\frac{\alpha\theta_1}{2}|$ or left (if $\alpha<0$), and $e^{i(\alpha\frac{x}{\Delta x}+\frac{\theta_1 t}{2})}$ boosts $\vv{\Psi}_-(x,t)$ to a frame moving at speed $|\frac{\alpha\theta_1}{2}|$ in the opposite direction as $\vv{\Psi}_+(x,t)$. The last effect these parameters will have will be on the initial condition of $\vv{\Psi}_\pm(x,t)$. The operator which projects $\vv{\Psi}(x,0)$ onto $\vv{\Psi}_\pm(x,0)$ is $P_\pm=e^{-i\alpha\frac{x}{\Delta x}}(\frac{1}{2}\mp S e^{i\beta z}y)$, so $P_\pm\vv{\Psi}(x,0)=\vv{\Psi}_\pm(x,0)$. We see the only effect $\beta$ has is on the initial conditions, while $\alpha$ effects both the initial condition and the frames $\vv{\Psi}_+(x,t)$ and $\vv{\Psi}_-(x,t)$ are boosted to.

\section{Conclusion and Open Questions}
\subsection{Conclusions}
Given our definitions of continuum limit from section \ref{sect:DefnContLim}, we have concluded by theorem \ref{thm:ctdsUnitaries} that keeping space discrete while continuizing time is only possible for particular coins, which must be of the form $C=e^{-i(\psi_0)\sigma_z/2}\sigma_y e^{-i\theta_1\Delta t\sigma_y/2} e^{-i(\phi_0)\sigma_z/2}$, granted the limit is taken 2 steps at a time. We have also concluded, from theorem \ref{thm:connection} in section \ref{section:generalCoinRelation}, that the continuous time limit of the DTQW can be always be related to the CTQW if the coin is of the form $\exp-\frac{i\theta}{2}(\sigma_y\cos\psi_0-\sigma_x\sin\psi_0)$, where $\theta\to p\pi+\theta_1\Delta t$ in the continuum limit for some odd integer $p$ and $\theta_1\in\mathbb{R}$. The implications of these two theorems are that there exists unitary matrices used as coins in the DTQW that do not have continuous time limit, and thus cannot be related to the CTQW.

We also found from theorem \ref{thm:fixedCoinTheorem} that certain coins do not need to vary in the continuum limit, as long as they are roots of unity. Finally, we concluded from theorem \ref{thm:ctcs} that various types of Dirac equations can be obtained depending on how the continuum limit of the DTQW is taken. Space and time limits taken simultaneously yield different answers than when time is taken followed by space.


\subsection{Open Questions}
There are many open questions pertaining to these ideas. Which types of coins have continuum limits in higher spatial dimensions? What do the connections between the DTQW and CTQW look like in higher spatial dimensions? What would the analogous theorems look like if we introduced multiple coins?

The connections between various continuum limits of the DTQW to the  massless/massive Dirac equation shown in this work and in others suggest the possibility of a new universal quantum computational architecture involving the scattering of particles obeying the Dirac or Schrodinger equations. Are these connections the most that can be made on the topic of computation, or is there something more? Stated another way, do quantum walks involved in quantum computational algorithms have continuum limits which can be related to the Dirac or Schrodinger equation? If they do, would it imply there is a way to utilize the Dirac or Schrodinger dynamics to obtain the results of quantum walk algorithms? The results and techniques shown in this work would certainly help obtain such an answer.

\begin{acknowledgments}
We wish to acknowledge the support of Boston University, as well as very constructive discussions with Tamiro Villazon, Pieter Claeys, Chonkit Pun, Pranay Patil, Parker Kuklinski, and Chris Laumann.
\end{acknowledgments}

\newpage


\appendix
\section{Proof of General Continuous Time Limit of the DTQW (Theorem \ref{thm:ctdsUnitaries})}\label{app:ctdsUnitaries}
Before we begin, let's reiterate the theorem we wish to prove:

\ctdsU*

We begin by stating the continuous time limit of the DTQW for coin $C$ and skipping $n$ steps, from definition \ref{def:CLimit}:
\begin{equation}\label{eq:ham}
i\partial_t\vv{\psi}(x,t)=\lim_{\Delta t\to 0}\frac{(SC)^n-\mathbb{I}}{n\Delta t}\vv{\psi}(x,t)
\end{equation}

Now we construct lemmas to prove Theorem \ref{thm:ctdsUnitaries}. Our first lemma will be an algebraic expansion of $(\widetilde{S}C)^n$ that will help make manifest later lemmas, where $\widetilde{S}$ is the Fourier transform of $S$, which is defined by $\widetilde{S}=e^{ ik\Delta x\sigma_z}$.

\begin{lemma}\label{lma:algebra}
Let $\psi'_0=\psi_0-2k\Delta x$, $A=R_z(\psi'_0)R_y(\theta_0)R_z(\phi_0)$, $B=\psi_1\sigma_z A+\theta_1 \sigma_y R_z(-2\psi'_0)A+\phi_1 A \sigma_z$, and $\widetilde{S}$ be the Fourier transform of $S$. Then the following is true up to $O(\Delta t)$:
\begin{equation} \label{eq:alg}
    (\widetilde{S}C)^n=(e^{i\delta}A)^n(1-\frac{i \Delta t}{2}A^{-1}\sum_{j=0}^{n-1}A^{-j}BA^j)
\end{equation}
\end{lemma}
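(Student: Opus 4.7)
The plan is to expand $\widetilde{S}C$ to first order in $\Delta t$, then apply the standard perturbative identity $(X+\epsilon Y)^n = X^n + \epsilon\sum_{j=0}^{n-1} X^{n-1-j} Y X^j + O(\epsilon^2)$, and finally reorganize the result to match the precise form of $B$ stated in the lemma.

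First, I would absorb the shift into the leftmost $z$-rotation. Since $\widetilde{S}=e^{ik\Delta x\sigma_z}$ commutes with $e^{-i\psi\sigma_z/2}$, we get
\begin{equation*}
\widetilde{S}C = e^{i\delta}e^{-i(\psi-2k\Delta x)\sigma_z/2}e^{-i\theta\sigma_y/2}e^{-i\phi\sigma_z/2}= e^{i\delta}R_z(\psi')R_y(\theta)R_z(\phi),
\end{equation*}
with $\psi'=\psi-2k\Delta x$. Substituting the Taylor expansions $\psi'=\psi_0'+\psi_1\Delta t+O(\Delta t^2)$, $\theta=\theta_0+\theta_1\Delta t+O(\Delta t^2)$, $\phi=\phi_0+\phi_1\Delta t+O(\Delta t^2)$, and using that each Pauli matrix commutes with its own rotation (so $R_z(a+b)=R_z(a)R_z(b)$ and similarly for $R_y$), each factor splits as $R_z(\psi')=R_z(\psi_0')(\mathbb{I}-\tfrac{i}{2}\psi_1\Delta t\,\sigma_z)+O(\Delta t^2)$, and analogously for the other two. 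Multiplying out and keeping terms through $O(\Delta t)$ yields
\begin{equation*}
\widetilde{S}C = e^{i\delta}A - \tfrac{i\Delta t}{2}e^{i\delta}\bigl[\psi_1\sigma_z A + \theta_1\, R_z(\psi_0')\sigma_y R_y(\theta_0)R_z(\phi_0) + \phi_1\, A\sigma_z\bigr] + O(\Delta t^2),
\end{equation*}
where I used $A\sigma_z = R_z(\psi_0')R_y(\theta_0)\sigma_z R_z(\phi_0)$ since $\sigma_z$ commutes with $R_z(\phi_0)$.

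Next, I would reduce the middle term to the form in $B$ using the identity $R_z(\alpha)\sigma_y = \sigma_y R_z(-\alpha)$, which follows from $\sigma_z\sigma_y=-\sigma_y\sigma_z$. Applying this twice, $R_z(\psi_0')\sigma_y R_y(\theta_0)R_z(\phi_0) = \sigma_y R_z(-\psi_0')R_y(\theta_0)R_z(\phi_0) = \sigma_y R_z(-2\psi_0')A$. So writing $\widetilde{S}C = e^{i\delta}A - \tfrac{i\Delta t}{2}e^{i\delta}B + O(\Delta t^2)$ recovers exactly the $B$ stated in the lemma.

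Finally, I would apply the perturbation identity to $(\widetilde{S}C)^n$. With $X=e^{i\delta}A$ and $\epsilon Y=-\tfrac{i\Delta t}{2}e^{i\delta}B$, one has
\begin{equation*}
(\widetilde{S}C)^n = (e^{i\delta}A)^n - \tfrac{i\Delta t}{2}\sum_{j=0}^{n-1} (e^{i\delta}A)^{n-1-j} (e^{i\delta}B)(e^{i\delta}A)^{j} + O(\Delta t^2).
\end{equation*}
The scalar phases $e^{i\delta}$ combine into an overall $e^{in\delta}$ that matches the one in $(e^{i\delta}A)^n$, leaving a factor that can be factored on the right as $(e^{i\delta}A)^n \cdot A^{-1}A^{-j}BA^j$ after writing $(e^{i\delta}A)^{n-1-j} = (e^{i\delta}A)^{n}(e^{i\delta}A)^{-1-j}$ and cancelling scalars. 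Summing over $j$ gives the stated form
\begin{equation*}
(\widetilde{S}C)^n = (e^{i\delta}A)^n\Bigl(\mathbb{I} - \tfrac{i\Delta t}{2}A^{-1}\sum_{j=0}^{n-1}A^{-j}BA^j\Bigr) + O(\Delta t^2).
\end{equation*}

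The main obstacle is really just careful bookkeeping: correctly handling the non-commutativity when expanding the product of three rotations, confirming the $R_z(\alpha)\sigma_y$ relation with the right sign so that the middle term matches $\theta_1\sigma_y R_z(-2\psi_0')A$, and tracking the indexing when factoring $(e^{i\delta}A)^n$ out of the perturbative sum. None of these are conceptually hard, but they must all be done consistently in order to land on the exact expression in the lemma.
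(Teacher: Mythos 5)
Your proposal is correct and follows essentially the same route as the paper's proof: absorb $\widetilde{S}$ into the leftmost $z$-rotation, expand each rotation factor to first order in $\Delta t$ to get $\widetilde{S}C=e^{i\delta}(A-\tfrac{i\Delta t}{2}B)+O(\Delta t^2)$, apply the first-order expansion of the $n$-th power, and factor $(e^{i\delta}A)^n$ out on the left. The only cosmetic difference is that you make the anticommutation step $R_z(\psi_0')\sigma_y=\sigma_y R_z(-\psi_0')$ explicit, whereas the paper writes the $\theta_1$ term directly as $\sigma_y R_z(-2\psi_0')A$.
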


\begin{proof}
After substituting $\psi,\phi,$ and $\theta$ in terms of $\phi_0,\psi_0,\theta_0,\phi_1,\psi_1,$ and $\theta_1$, the rotation matrices in Equation (\ref{eq:rotations}) become $R_z(\psi)=R_z(\psi_0)(1-\frac{i\psi_1\Delta t}{2}\sigma_z+(\Delta t^2))$ and so on for $R_z(\phi)$ and $R_y(\theta)$. After doing this substitution and going to Fourier space (so $S\rightarrow\widetilde{S}=R_z(-2k\Delta x)$), we get the following for $(\widetilde{S}C)^n$:
\begin{align}\label{eq:sc^n}
    (\widetilde{S}C)^n&=e^{i\delta n}(R_z(-2k\Delta x)R_z(\psi_0)(1-\frac{i\psi_1\Delta t}{2}\sigma_z+O(\Delta t^2))\\
    &\times R_y(\theta_0)(1-\frac{i\theta_1\Delta t}{2}\sigma_y+O(\Delta t^2))R_z(\phi_0)(1-\frac{i\phi_1\Delta t}{2}\sigma_z+O(\Delta t^2)))^n
\end{align}
We now make the substitution $\psi'_0=\psi_0-2k$ and expand Equation (\ref{eq:sc^n}) further:
\begin{equation}\label{eq:reduced1sc^n}
    \begin{split}
        (\widetilde{S}C)^n&=e^{i\delta n} [R_z(\psi'_0)R_y(\theta_0)R_z(\phi_0)-\frac{i \Delta t}{2}(\psi_1 \sigma_z R_z(\psi'_0)R_y(\theta_0)R_z(\phi_0)\\
        &+\theta_1 y R_z(-2\psi_0')R_z(\psi'_0)R_y(\theta_0)R_z(\phi_0) +\phi_1 R_z(\psi'_0)R_y(\theta_0)R_z(\phi_0) \sigma_z)+O(\Delta t^2)]^n\\
        &=e^{i\delta n}(A-\frac{i \Delta t}{2} B+O(\Delta t^2))^n\\
        &=e^{i\delta n}(A^n-\frac{i \Delta t}{2}(A^{n-1}B+A^{n-2}BA+...+ABA^{n-2}+BA^{n-1})+O(\Delta t^2))\\
        &=e^{i\delta n}(A^n-\frac{i \Delta t}{2}\sum_{j=0}^{n-1}A^{n-1-j}BA^j+O(\Delta t^2))\\
        &=(e^{i\delta}A)^n(1-\frac{i \Delta t}{2}A^{-1}\sum_{j=0}^{n-1}A^{-j}BA^j+O(\Delta t^2))
    \end{split}
\end{equation}
\end{proof}
Now we make a statement concerning the $O(\Delta t^2)$ terms:
\begin{lemma}
The continuous time limit as defined in Equation \ref{eq:ham} will be independent of any $O(\Delta t^2)$ terms in the parameters $\psi$, $\theta$, and $\phi$.
\end{lemma}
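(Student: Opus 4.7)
The plan is to show that replacing each parameter $\phi=\phi_0+\phi_1\Delta t$ by $\phi=\phi_0+\phi_1\Delta t+\phi_2\Delta t^2+\cdots$ (and analogously for $\psi$ and $\theta$) perturbs the operator $(\widetilde S C)^n$ by at most $O(\Delta t^2)$ in operator norm, so that after dividing by $n\Delta t$ the perturbation is $O(\Delta t)$ and vanishes in the $\Delta t\to0$ limit.

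First I would write each rotation factor of $C$ as a zeroth-order piece times a Taylor expansion in $\Delta t$: for example,
\begin{equation*}
R_z(\psi)=R_z(\psi_0)\!\left(\mathbb{I}-\tfrac{i\psi_1\Delta t}{2}\sigma_z+R_\psi(\Delta t)\right),
\end{equation*}
where the remainder $R_\psi(\Delta t)$ absorbs both (i) the Taylor remainder in the exponential $e^{-i\psi\sigma_z/2}$ about $\psi_0$, and (ii) the explicit higher-order coefficients $\psi_2,\psi_3,\dots$ in the expansion of the parameter. Because each rotation matrix is an analytic, unitary-valued function of its parameter, this remainder is bounded in operator norm by a constant times $\Delta t^2$, uniformly in $k$ and in the state. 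The same construction applies to $R_y(\theta)$ and $R_z(\phi)$.

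Next I would multiply the three expanded factors together to obtain
\begin{equation*}
C=C_0+\Delta t\,C_1+\Delta t^2 E(\Delta t),
\end{equation*}
where $C_0$ and $C_1$ depend only on the coefficients $\phi_0,\psi_0,\theta_0,\phi_1,\psi_1,\theta_1$ singled out in Theorem \ref{thm:ctdsUnitaries}, and $E(\Delta t)$ is bounded in operator norm uniformly in $\Delta t$. Since $\widetilde S=e^{ik\Delta x\sigma_z}$ is unitary, multiplying by $\widetilde S$ preserves operator norms, so $\widetilde S C$ also differs from $\widetilde S(C_0+\Delta t\,C_1)$ by an $O(\Delta t^2)$ term. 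Using the telescoping identity
\begin{equation*}
X^n-Y^n=\sum_{j=0}^{n-1}X^{n-1-j}(X-Y)Y^j
\end{equation*}
together with the fact that powers of unitaries (and their $O(\Delta t)$ perturbations) remain bounded for fixed $n$ as $\Delta t\to 0$, I conclude that $(\widetilde S C)^n=(\widetilde S(C_0+\Delta t\,C_1))^n+\Delta t^2 F(\Delta t)$ for some uniformly bounded $F(\Delta t)$.

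Finally, dividing through by $n\Delta t$ gives
\begin{equation*}
\frac{(\widetilde S C)^n-\mathbb{I}}{n\Delta t}=\frac{(\widetilde S(C_0+\Delta t\,C_1))^n-\mathbb{I}}{n\Delta t}+\frac{\Delta t}{n}F(\Delta t),
\end{equation*}
and the second term tends to zero in the topology of $\mathbb{X}$ as $\Delta t\to0$. Hence the limiting Hamiltonian depends only on the zeroth- and first-order Taylor coefficients of $\phi$, $\psi$, and $\theta$. The only delicate point is ensuring that the bound on the remainder $E(\Delta t)$ is uniform in $k$ so that the operator-norm estimate survives on the full space $\mathbb{X}$; this holds because the Taylor remainders of the rotation factors involve no dependence on $k$, and $\widetilde S$ contributes only a unimodular phase on each Fourier mode.
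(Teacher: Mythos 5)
Your proposal is correct and follows essentially the same route as the paper, which likewise observes (via the expansion in Lemma \ref{lma:algebra}) that the higher-order coefficients of $\psi$, $\theta$, and $\phi$ enter $(\widetilde S C)^n$ only at $O(\Delta t^2)$, so that after dividing by $n\Delta t$ their contribution is $O(\Delta t)$ and vanishes in the limit. Your version merely adds explicit operator-norm bookkeeping and the uniformity-in-$k$ remark, which the paper leaves implicit.
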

\begin{proof}
Examining the last line of Equation \ref{eq:reduced1sc^n}, we see that the only contribution of the $O(\Delta t^2)$ terms in the parameters $\psi$, $\theta$, and $\phi$ will be in the $O(\Delta t^2)$ term. The $O(\Delta t^2)$ term in the last line of Equation \ref{eq:reduced1sc^n} does not contribute to the continuous time limit defined in equation \ref{eq:ham} because it goes to zero as the limit is taken. Thus, the $O(\Delta t^2)$ terms in the parameters $\psi$, $\theta$, and $\phi$ do not contribute to the continuous time limit.
\end{proof}
The next lemma uses lemma \ref{lma:algebra} to constrain the values $n$ can take for a finite limit in Equation (\ref{eq:ham}) to exist.

\begin{lemma}\label{lma:n=1}
There is no continuous time limit as defined in Equation (\ref{eq:ham}) for n=1.
\end{lemma}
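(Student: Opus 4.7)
The plan is to specialize Lemma \ref{lma:algebra} to $n=1$ and extract the singular piece of the quotient appearing in Equation (\ref{eq:ham}). Lemma \ref{lma:algebra} with $n=1$ gives, in Fourier space,
\begin{equation*}
\widetilde{S}(k)C \;=\; e^{i\delta}A(k) \;-\; \tfrac{i\Delta t}{2}\,e^{i\delta}B(k) \;+\; O(\Delta t^2),
\end{equation*}
so dividing by $\Delta t$ yields
\begin{equation*}
\frac{\widetilde{S}C-\mathbb{I}}{\Delta t} \;=\; \frac{e^{i\delta_0}A(k)-\mathbb{I}}{\Delta t} \;-\; \tfrac{i}{2}e^{i\delta_0}B(k) \;+\; O(1),
\end{equation*}
where any $\Delta t$-dependence of $\delta$ only contributes to the bounded $O(1)$ part. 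The second and third terms are bounded operators that remain finite as $\Delta t\to 0$, so the limit in the sense of Definition \ref{def:CLimit} exists on $\mathbb{X}$ only if the first term has a limit. Because its numerator $e^{i\delta_0}A(k)-\mathbb{I}$ is independent of $\Delta t$, this would force $e^{i\delta_0}A(k)=\mathbb{I}$ as an operator on $\mathbb{X}$ (equivalently, as a matrix identity for almost every Fourier mode $k\in[-\pi/\Delta x,\pi/\Delta x]$).

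To rule this out, I would write $A(k)=\widetilde{S}(k)M$ with $M=R_z(\psi_0)R_y(\theta_0)R_z(\phi_0)$ a fixed ($k$-independent) unitary, expand $M=m_0\mathbb{I}+\vec m\cdot\vec\sigma$ in the Pauli basis, and compute
\begin{equation*}
\tfrac{1}{2}\operatorname{tr}\!\big(e^{i\delta_0}A(k)\big) \;=\; e^{i\delta_0}\!\left(m_0\cos(k\Delta x)+i m_z\sin(k\Delta x)\right).
\end{equation*}
Requiring this to equal $1$ for all $k$ forces $m_0=m_z=0$ (evaluate at $k=0$ and $k=\pi/(2\Delta x)$, or differentiate in $k$), but then the right-hand side is $0$ rather than $1$, a contradiction. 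Hence $e^{i\delta_0}A(k)\neq\mathbb{I}$ on a set of $k$ of positive Lebesgue measure, and choosing any $\vec\Psi\in\mathbb{X}$ whose Fourier transform has support intersecting that set produces a genuine $\Delta t^{-1}$ divergence in the quotient, so no continuous time limit can exist at $n=1$.

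The only real obstacle I foresee is bookkeeping: one must verify that the remainder labelled $O(\Delta t^2)$ in Lemma \ref{lma:algebra}, together with any $\Delta t$-dependent corrections to $\delta$, truly contribute only to the finite $O(1)$ part above and do not hide another $\Delta t^{-1}$ singularity. Since those remainders come from power-series expansions of the rotation matrices and of $e^{i\delta}$ about $\Delta t=0$, the corresponding operators are uniformly bounded on $\mathbb{X}$ in $k$, so this reduces to a routine estimate and the argument closes cleanly via the trace computation above.
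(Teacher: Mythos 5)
Your proposal is correct and follows essentially the same route as the paper: both specialize Lemma \ref{lma:algebra} to $n=1$, reduce existence of the limit to the identity $e^{i\delta}A(k)=\mathbb{I}$, and rule that out because $A$ inherits genuine $k$-dependence from $\widetilde{S}$ while the coin angles are not permitted to depend on $k$ --- your trace computation merely makes explicit what the paper asserts informally. One small wording slip: evaluating $e^{i\delta_0}\left(m_0\cos(k\Delta x)+im_z\sin(k\Delta x)\right)=1$ at $k=0$ and $k=\pi/(2\Delta x)$ gives $m_0=e^{-i\delta_0}$ and $m_z=-ie^{-i\delta_0}$ rather than $m_0=m_z=0$, but your alternative of differentiating in $k$ (or testing a third point such as $k=\pi/(4\Delta x)$) does yield the contradiction, so the argument closes.
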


\begin{proof}
For the Hamiltonian in Equation (\ref{eq:ham}) to be finite, $(SC)^n$ must equal $\mathbb{I}+O(\Delta t)$, and thus $\widetilde{S}C$ must equal $\mathbb{I}+O(\Delta t)$ as well. Therefore, from Equation (\ref{eq:alg}), $(e^{i\delta}A)^n$ must equal identity if $\widetilde{S}C=\mathbb{I}+O(\Delta t)$. The only unitary operator $e^{i\delta}A$ that could possibly satisfy $(e^{i\delta}A)^n=\mathbb{I}$ for $n=1$ is the identity operator itself, but $e^{i\delta}A$ cannot even equal identity, as $A$ has $k$ dependence from containing $\widetilde{S}$, and the angles are not permitted to depend on $k$, so there is no possible way to cancel out the $k$ dependence. Thus, there is no continuous time limit defined in Equation (\ref{eq:ham}) for $n=1$.
\end{proof}

Next we use the reasoning from lemma \ref{lma:n=1} to further constrain the values $\theta_0$ and $\delta$ can take.

\begin{lemma}\label{lma:angles}
For the limit defined in Equation (\ref{eq:ham}) to be finite, $\theta_0$ and $\delta$ must be constrained such that $\theta_0=p\pi$ and $\delta=\frac{2\pi l}{n}-p\pi$ for odd integer $p$ and any integer $l$.
\end{lemma}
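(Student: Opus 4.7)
The plan is to push the finiteness requirement for the limit in Equation (\ref{eq:ham}) through the expansion supplied by Lemma \ref{lma:algebra}. That lemma writes $(\tilde{S}C)^n = (e^{i\delta}A)^n + O(\Delta t)$, where $A = R_z(\psi_0')R_y(\theta_0)R_z(\phi_0)$ and $\psi_0' = \psi_0 - 2k\Delta x$. For the discrete derivative $[(\tilde{S}C)^n - \mathbb{I}]/(n\Delta t)$ to have a finite limit as $\Delta t \to 0$ in the topology of $\mathbb{X}$, the leading $O(1)$ piece must satisfy $(e^{i\delta}A)^n = \mathbb{I}$; otherwise the numerator is bounded away from zero and the quotient blows up. Since $A$ depends on the Fourier variable $k$ but $\delta$ does not, this identity is required to hold for every $k$.

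The next step is to analyze when $e^{in\delta}A(k)^n = \mathbb{I}$ can hold pointwise in $k$. Because $A\in SU(2)$, I parametrize it by its rotation angle $\alpha(k)$ and axis $\hat{n}(k)$. Reading off the trace from the explicit matrix in Equation (\ref{eq:rotations}) gives $\cos(\alpha(k)/2) = \cos(\theta_0/2)\cos\!\bigl((\psi_0+\phi_0)/2 - k\Delta x\bigr)$. For $A(k)^n$ to be a scalar multiple of $\mathbb{I}$, the eigenvalues $e^{\pm i n\alpha(k)/2}$ must both equal $\pm 1$, forcing $\alpha(k)\in\{2\pi m/n:m\in\mathbb{Z}\}$. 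Continuity in $k$ then pins $\alpha(k)$ to a single value, and inspection of the trace formula shows this is compatible with the $k$-dependence only when $\cos(\theta_0/2)=0$, i.e.\ $\theta_0 = p\pi$ for odd integer $p$. In that regime $\alpha(k)\equiv\pi$.

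With $\theta_0 = p\pi$ for odd $p$, a direct computation using $R_y(p\pi) = \mp i\sigma_y$ together with the conjugation identity $\sigma_y R_z(\beta)\sigma_y = R_z(-\beta)$ telescopes $A^2$ to $-\mathbb{I}$. For even $n$ this yields the scalar $A^n = (-1)^{n/2}\mathbb{I}$, whereas for odd $n$ one gets $A^n = (-1)^{(n-1)/2}A$, which retains nontrivial $k$-dependence and cannot be cancelled by the scalar factor $e^{in\delta}$. Thus only even $n$ is admissible, consistent with Lemma \ref{lma:n=1}. For even $n$, solving $e^{in\delta}(-1)^{n/2}=1$ determines $\delta$ modulo $2\pi/n$ and produces the discrete family of values indexed by the integer $l$ stated in the lemma.

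The main obstacle is the transition from an operator-topology requirement on $\mathbb{X}$ to a genuine pointwise-in-$k$ condition on the Fourier symbol. One must argue that $\mathbb{X}$ contains enough test states, e.g.\ wave packets localized in momentum, that any failure of $(e^{i\delta}A(k))^n = \mathbb{I}$ on a set of positive $k$-measure would produce an $O(1)$ contribution to $(\tilde{S}C)^n\vv{\Psi} - \vv{\Psi}$ and thereby prevent the limit in (\ref{eq:ham}) from being finite. Once this reduction to pointwise conditions on the symbol is in hand, every remaining step is a purely algebraic manipulation in $SU(2)$ together with a continuity-versus-discreteness argument for $\alpha(k)$.
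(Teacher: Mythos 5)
Your proposal is correct and follows essentially the same route as the paper: both reduce finiteness of the limit to the requirement that the eigenvalues of $e^{i\delta}A(k)$ be $n$th roots of unity for all $k$, and both use the $k$-dependence of $\operatorname{tr}A$ (equivalently the characteristic polynomial) to force $\cos(\theta_0/2)=0$ and hence $\theta_0=p\pi$ with the stated $\delta$. Your continuity-plus-discreteness phrasing of the angle argument, and your inclusion of the even-$n$ conclusion (which the paper defers to Lemma \ref{lma:n}), are only cosmetic differences.
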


\begin{proof}
Following up on the constraint that $(e^{i\delta}A)^n=\mathbb{I}$ from lemma \ref{lma:n=1}, let $U$ be the diagonalization matrix of $A$, and let $D$ be the matrix of eigenvalues of $A$. Then we have the following:
\begin{align}
    &(e^{i\delta}A)^n=e^{in\delta}(U^{-1}DUU^{-1}DUU^{-1}DU\ldots)=e^{in\delta}U^{-1}D^nU=\mathbb{I}\\
    &\rightarrow e^{in\delta}D^n=
    UU^{-1}=\mathbb{I}\rightarrow e^{in\delta}D^n=\mathbb{I}
\end{align}
 so if we set the eigenvalues of $e^{i\delta}A$ equal to an $n^{th}$ root of unity $e^{2\pi l/n}$ where $l=0,1,2,...$ (which is equivalent to the constraint $(e^{i\delta}A)^n=\mathbb{I}$), we recover the following constraint for $\theta_0$:
\begin{equation}
    \cos{\theta_0/2}=\frac{\cos{(\frac{2\pi l}{n}-\delta})}{\cos{(\frac{\psi'_0+\phi_0}{2}})}
\end{equation}
Because none of the angles have $k$ dependence, the only way this condition can hold true is if $\cos{\theta_0/2}=0$ or $\theta_0=p\pi$ where $p=1,3,5,...$. This also gives a constraint on $\delta$, being $\delta=\frac{2\pi l}{n}-\frac{p\pi}{2}$. As a remark, the reason why the choice of overall phase is important here is that it shifts the zero point energy of the Hamiltonian in question and will make the dependence on other variables more manifest in the continuum limit (physical quantities are the differences in energies/eigenvalues of a Hamiltonian, not the eigenvalues themselves).
\end{proof}

The next lemma uses the constraints on $\theta_0$ and $\delta$ from lemma \ref{lma:angles} to impose a constraint on $n$.

\begin{lemma}\label{lma:n}
For the limit defined in Equation (\ref{eq:ham}) to be finite, $n$ must be even.
\end{lemma}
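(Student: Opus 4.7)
The plan is to leverage the constraints $\theta_0=p\pi$ (odd $p$) and $(e^{i\delta}A)^n=\mathbb{I}$ inherited from Lemma \ref{lma:angles} and reduce $A$ to a very rigid form, then show that no odd power of that form can yield a scalar (let alone the identity) because of residual $k$-dependence that the angles are not allowed to cancel. Concretely, first I would substitute $\theta_0=p\pi$ into $R_y(\theta_0)=\cos(p\pi/2)\mathbb{I}-i\sin(p\pi/2)\sigma_y$. Since $p$ is odd the cosine term vanishes and $R_y(p\pi)=\mp i\sigma_y$, so $A=R_z(\psi'_0)R_y(p\pi)R_z(\phi_0)=\mp i\,R_z(\psi'_0)\sigma_y R_z(\phi_0)$.

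Next I would use the anticommutation identity $\sigma_y\sigma_z=-\sigma_z\sigma_y$, which yields $\sigma_y R_z(\phi_0)=R_z(-\phi_0)\sigma_y$, to collapse the two $R_z$ factors:
\begin{equation*}
A=\mp i\,R_z(\psi'_0-\phi_0)\sigma_y\equiv \mp i\,R_z(\mu)\sigma_y,\qquad \mu:=\psi_0-\phi_0-2k\Delta x.
\end{equation*}
A direct computation using the same anticommutation rule and $\sigma_y^2=\mathbb{I}$ then gives the crucial algebraic fact
\begin{equation*}
A^2=(\mp i)^2 R_z(\mu)\sigma_y R_z(\mu)\sigma_y=-R_z(\mu)R_z(-\mu)\mathbb{I}=-\mathbb{I}.
\end{equation*}
Hence $A^n=(-1)^{n/2}\mathbb{I}$ for even $n$, whereas for odd $n$ we obtain $A^n=(-1)^{(n-1)/2}A$, which is \emph{not} a scalar multiple of the identity because the factor $R_z(\mu)\sigma_y$ depends nontrivially on $k$ through $\mu$.

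Finally I would close the argument by inserting this into the constraint $(e^{i\delta}A)^n=e^{in\delta}A^n=\mathbb{I}$ from Lemma \ref{lma:n=1}. For odd $n$ the identity demands $e^{in\delta}(-1)^{(n-1)/2}R_z(\mu)\sigma_y=\pm i\mathbb{I}$, which is impossible for any $k$-independent choice of $\delta$, $\psi_0$, $\phi_0$, since the left-hand side carries an explicit $k$-dependent rotation $R_z(\mu)$ as well as a nontrivial $\sigma_y$ component. Thus $n$ cannot be odd, and the main obstacle, namely guaranteeing that no clever choice of the free parameters $\delta,\phi_0,\psi_0$ could conspire to eliminate both the $\sigma_y$ factor and the $k$-dependence of $\mu$, is resolved by this explicit identification: the $\sigma_y$ cannot disappear because $A$ is proportional to $\sigma_y$, and $\mu$ cannot be made $k$-independent because $\psi_0,\phi_0$ are by hypothesis free of $k$. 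Consequently $n$ must be even, as claimed.
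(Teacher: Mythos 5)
Your proposal is correct and follows essentially the same route as the paper's proof: substitute $\theta_0=p\pi$ to reduce $A$ to $\mp i\,R_z(\psi'_0-\phi_0)\sigma_y$, observe $A^2=-\mathbb{I}$ so even powers collapse to a scalar, and rule out odd $n$ because $(e^{i\delta}A)^{2m+1}$ reduces to a phase times $A$ itself, whose $k$-dependence and $\sigma_y$ content cannot be cancelled by the $k$-independent angles (the paper simply cites its Lemma for the $n=1$ case here, whereas you spell the obstruction out explicitly).
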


\begin{proof}
Substituting our constraint for $\theta_0$ from lemma \ref{lma:angles} into $A$, we get the following:
\begin{equation}\label{eq:A}
    A=-i\sin{\frac{p\pi}{2}}R_z(\psi'_0-\phi_0)\sigma_y    
\end{equation}
Now consider $n$ even. Substituting this form of $A$ and our constraint on $\delta$ from lemma \ref{lma:angles} into $(e^{i\delta}A)^{n}$ in the last line of equation \ref{eq:reduced1sc^n}, where $n=2w$ for some integer $w$, we find that $(e^{i\delta}A)^{2w}=(-e^{2i\delta}\mathbb{I})^w=\mathbb{I}$, as $A^2=-\mathbb{I}$ and $(-e^{2i\delta})^w=-\mathbb{I}$ for all $w$. This implies that even powers of $n$ will satisfy $(e^{i\delta}A)^n=\mathbb{I}$. As for odd $n$, we can write $n=2m+1$ for some integer $m$ to obtain the following:
\begin{equation}
    (e^{i\delta}A)^n=(e^{i\delta}A)^{2m+1}=e^{i\delta}A
\end{equation}
This cannot equate to identity, as we showed in lemma \ref{lma:n=1} that for $n=1$ no parameterization of $A$ can make $e^{i\delta}A=\mathbb{I}$. Thus, n must be even to have a finite continuum limit as defined in Equation (\ref{eq:ham}).
\end{proof}

Because the constraints on $n$ and $\theta_0$ hold true for all $l$ from the last two lemmas, we will choose $l=0$ for the remainder of the proof without loss of generality.




Now we plug in the constraints from lemmas \ref{lma:angles} and \ref{lma:n} to obtain the final forms of $C$, $\widetilde{S}C$, $(\widetilde{S}C)^n$, and most importantly $H$.

\begin{lemma}\label{lma:H}
Equation (\ref{eq:ham}) will have a finite limit if $C$, $\widetilde{S}C$, $(\widetilde{S}C)^n$, and $H$ are the following, where $S$ is the shift operator defined in equation
\ref{eq:defn}:
\begin{equation}\label{eq:finalH}
    \begin{split}
    &C=(-1)^{p\pi}(R_z(\psi_0-\phi_0)\sigma_y-\frac{i\Delta t}{2}[i(\phi_1-\psi_1)R_z(\psi_0-\phi_0)\sigma_x+\theta_1 R_z(\psi_0-\phi_0)])\\
    &\widetilde{S}C=(-1)^{p\pi}(R_z(\psi'_0-\phi_0)\sigma_y-\frac{i\Delta t}{2}[i(\phi_1-\psi_1)R_z(\psi'_0-\phi_0)\sigma_x+\theta_1 R_z(\psi'_0-\phi_0)])\\
    &(\widetilde{S}C)^n=1-\frac{in\theta_1\Delta t}{4}(R_z(-2\phi_0)+R_z(2\psi'_0))\sigma_y\\
    &H=-\frac{\theta_1}{4}(R_z(-2\phi_0)+S^2 R_z(2\psi_0))\sigma_y.
    \end{split}
\end{equation}
\end{lemma}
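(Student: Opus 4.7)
The plan is to substitute the constraints from Lemmas \ref{lma:angles} and \ref{lma:n} (namely $\theta_0=p\pi$ with $p$ odd, $\delta=-p\pi/2$ up to the $l$-freedom which we fix to $0$, and $n$ even) into the expansion produced by Lemma \ref{lma:algebra}, and then to carry out the algebraic simplification of $A^{-1}\sum_{j=0}^{n-1}A^{-j}BA^{j}$. The first two equations in (\ref{eq:finalH}) follow immediately by expansion: setting $\theta_0=p\pi$ in Equation (\ref{eq:rotations}) collapses $R_y(\theta_0)$ to a multiple of $\sigma_y$, the overall phase $e^{i\delta}$ combines with $\sin(p\pi/2)=\pm1$ to give the stated prefactor, and expanding $R_z(\psi_1\Delta t)$, $R_y(\theta_1\Delta t)$, $R_z(\phi_1\Delta t)$ to first order produces the bracketed $O(\Delta t)$ correction. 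Left-multiplication by $\widetilde{S}=R_z(-2k\Delta x)$ then merges into $R_z(\psi_0)$ to convert $\psi_0\to\psi'_0=\psi_0-2k\Delta x$, giving the expression for $\widetilde S C$.

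For the third equation, the crucial structural input is Equation (\ref{eq:A}), which says $A=-i\sin(p\pi/2)R_z(\psi'_0-\phi_0)\sigma_y$, and in particular $A^2=-\mathbb{I}$, so $A^{-1}=-A$ and $A^{-2}=-\mathbb{I}$. Consequently, conjugation by $A^{j}$ depends only on the parity of $j$: for even $j$, $A^{-j}BA^{j}=B$; for odd $j$, $A^{-j}BA^{j}=A^{-1}BA$. Since $n$ is even, the sum evaluates to $\tfrac{n}{2}(B+A^{-1}BA)$, and left-multiplication by $A^{-1}$ yields $\tfrac{n}{2}(A^{-1}B-BA)$. Plugging in $B=\psi_1\sigma_z A+\theta_1\sigma_y R_z(-2\psi'_0)A+\phi_1 A\sigma_z$, I would invoke two Pauli identities: $\{\sigma_z,\sigma_y\}=0$ (which, combined with the fact that $R_z$ commutes with $\sigma_z$, gives $\sigma_z A=-A\sigma_z$, hence $A^{-1}\sigma_z A=-\sigma_z$ and $A\sigma_z A=\sigma_z$), and $\sigma_y R_z(\alpha)\sigma_y=R_z(-\alpha)$ (used to evaluate $A^{-1}\sigma_y R_z(-2\psi'_0)A$, which reduces to $R_z(-2\phi_0)\sigma_y$). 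The first identity forces the $\psi_1$ and $\phi_1$ contributions to cancel exactly between $A^{-1}B$ and $BA$, and the second identity simplifies the surviving $\theta_1$ piece to $\theta_1\bigl(R_z(-2\phi_0)+R_z(2\psi'_0)\bigr)\sigma_y$. Since Lemma \ref{lma:n} guarantees $(e^{i\delta}A)^{n}=\mathbb{I}$, the third equation of (\ref{eq:finalH}) drops out.

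The final Hamiltonian then comes from dividing by $n\Delta t$ and letting $\Delta t\to 0$ in Equation (\ref{eq:ham}), giving the Fourier-space result; inverse Fourier transforming converts $R_z(2\psi'_0)=R_z(2\psi_0)\widetilde{S}^{2}$ back to $S^{2}R_z(2\psi_0)$ (which equals $R_z(2\psi_0)S^{2}$ since both factors are $\sigma_z$-diagonal), producing the stated $H$. The main obstacle is the bookkeeping in step two: one must check that every $\psi_1$ and $\phi_1$ term really cancels in $A^{-1}B-BA$, since any residual $\sigma_z$ component would spoil the final form of $H$. This cancellation is not accidental — it reflects the rigid structure $A\propto R_z\cdot\sigma_y$ that was forced by the earlier lemmas, and it is precisely what reduces the three naive coin parameters $\psi_1,\theta_1,\phi_1$ to the single dynamically meaningful parameter $\theta_1$. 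Once this cancellation is verified, the remaining manipulation is routine rotation-matrix algebra.
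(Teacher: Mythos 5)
Your proposal is correct and follows essentially the same route as the paper: substitute the constraints from Lemmas \ref{lma:angles} and \ref{lma:n} into the expansion of Lemma \ref{lma:algebra}, exploit $A^2=-\mathbb{I}$ to split the conjugation sum by parity of $j$, reduce it to $-\tfrac{n}{2}\{A,B\}$, and evaluate that anticommutator with the Pauli identities to isolate the $\theta_1$ term. Your explicit verification that the $\psi_1$ and $\phi_1$ contributions cancel is a welcome detail the paper leaves implicit, but it is the same calculation.
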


\begin{proof}
To find $\widetilde{S}C$, we take the $n_{th}$ root of both sides of the third line of Equation (\ref{eq:reduced1sc^n}). This will yield the following:
\begin{equation}
    \widetilde{S}C=e^{i\delta n}(A-\frac{i \Delta t}{2} B).
\end{equation}
Plugging in for the constrained versions of $A$ and $B$ and reducing, we obtain the following:
\begin{equation}\label{eq:sc}
    \widetilde{S}C=(-1)^{p\pi}(R_z(\psi'_0-\phi_0)\sigma_y-\frac{i\Delta t}{2}[i(\phi_1-\psi_1)R_z(\psi'_0-\phi_0)\sigma_x+\theta_1 R_z(\psi'_0-\phi_0)])
\end{equation}
To find $C$, we simply multiply Equation (\ref{eq:sc}) by $\widetilde{S}^{-1}=R_z(2k)$, obtaining the following:
\begin{equation}
    C=(-1)^{p\pi}(R_z(\psi_0-\phi_0)\sigma_y-\frac{i\Delta t}{2}[i(\phi_1-\psi_1)R_z(\psi_0-\phi_0)\sigma_x+\theta_1 R_z(\psi_0-\phi_0)])
\end{equation}
Next we will find $(\widetilde{S}C)^n$ by evaluating the sum in the last line of Equation (\ref{eq:reduced1sc^n}) using the constrained form of $A$ in Equation (\ref{eq:A}). One can show that $A^2=-1$ and $A^{-1}=-A$, so the sum in Equation (\ref{eq:reduced1sc^n}) becomes the following:
\begin{equation}
    A^{-1}\sum_{j=0}^{n-1}A^{-j}BA^j=-A\sum_{j=0}^{n-1}(-1)^j A^{j}BA^j
\end{equation}
Now we split up the sum into even and odd terms:
\begin{equation}
    \begin{split}
        -A\sum_{j=0}^{n-1}(-1)^j A^{j}BA^j&=-A(\sum_{j=odds}^{n-1}(-1)^j A^{j}BA^j+\sum_{j=evens}^{n-2}(-1)^j A^{j}BA^j)\\
        &=-A\frac{n}{2}(-ABA+B)=-\frac{n}{2}\{A,B\}\\
        &=\frac{n\theta_1}{2}(R_z(-2\phi_0)+R_z(2\psi'_0))\sigma_y
    \end{split}
\end{equation}
We used Equation (\ref{eq:A}) in the last line, so Equation (\ref{eq:reduced1sc^n}) becomes the following:
\begin{equation}\label{eq:sc^nfinal}
    (SC)^n=1-\frac{in\theta_1\Delta t}{4}(R_z(-2\phi_0)+R_z(2\psi'_0))\sigma_y
\end{equation}
Now we can find $\widetilde{H}$ by evaluating the limit in Equation (\ref{eq:ham}) using Equation (\ref{eq:sc^nfinal}) to obtain the following Hamiltonian in Fourier space:
\begin{equation}\label{eq:FourierHam}
   \widetilde{H}=-\frac{\theta_1}{4}(R_z(-2\phi_0)+R_z(2\psi'_0))\sigma_y
\end{equation}
Fourier transforming back and resubstituting for $\psi'_0$, we obtain the following $H$:
\begin{equation}
    H=-\frac{\theta_1}{4}(R_z(-2\phi_0)+S^2 R_z(2\psi_0))\sigma_y
\end{equation}
where $S$ is the shift operator defined in Equation (\ref{eq:defn}).
\end{proof}
Combining lemmas \ref{lma:angles}, \ref{lma:n}, and \ref{lma:H}, we prove Theorem \ref{thm:ctdsUnitaries}.

\newpage
\section{Proof of Time Evolution from Continuous Time Limit of DTQW Hamiltonian (Corollary \ref{corr:ctdsWavefunction})}\label{app:corrctdsWavefunction}

We start by reiterating the corollary:

\ctdsWaveFunctions*

We begin the theorem by finding the eigenvalues and eigenvectors of the hamiltonian in fourier space:

\begin{lemma}\label{lma:eigs}
The eigenvalues of the Hamiltonian in fourier space are $\pm\lambda(k)=\pm\cos{(k-\frac{\phi_0+\psi_0}{2})}$ with corresponding eigenvectors $\vv{\pm\lambda}=\frac{1}{\sqrt{2}}\begin{pmatrix}\pm ie^{i(k+\frac{\phi_0+\psi_0}{2})}\\1\end{pmatrix}$.
\end{lemma}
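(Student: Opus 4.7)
The plan is to directly diagonalize the Fourier-space Hamiltonian produced in Lemma \ref{lma:H},
$$\widetilde{H}(k) = -\frac{\theta_1}{4}\bigl(R_z(-2\phi_0) + R_z(2\psi_0')\bigr)\sigma_y,$$
where $\psi_0' = \psi_0 - 2k\Delta x$. First I would substitute the explicit matrix forms $R_z(-2\phi_0) = \mathrm{diag}(e^{i\phi_0}, e^{-i\phi_0})$ and $R_z(2\psi_0') = \mathrm{diag}(e^{-i\psi_0'}, e^{i\psi_0'})$, and observe that right-multiplying a diagonal matrix by $\sigma_y$ produces a purely anti-diagonal matrix. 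Hence $\widetilde{H}(k)$ takes the form $\begin{pmatrix} 0 & A \\ B & 0 \end{pmatrix}$ with
$$A = \tfrac{i\theta_1}{4}(e^{i\phi_0} + e^{-i\psi_0'}), \qquad B = -\tfrac{i\theta_1}{4}(e^{-i\phi_0} + e^{i\psi_0'}),$$
and Hermiticity is manifest since $B = \overline{A}$.

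The eigenvalues then come straight from the characteristic polynomial: $\lambda^2 = AB$. Using the elementary identity $|e^{ia} + e^{ib}|^2 = 2 + 2\cos(a-b) = 4\cos^2\!\bigl(\tfrac{a-b}{2}\bigr)$ with $a = \phi_0$, $b = -\psi_0'$, I get
$$AB = \tfrac{\theta_1^2}{4}\cos^2\!\Bigl(\tfrac{\phi_0 + \psi_0'}{2}\Bigr) = \tfrac{\theta_1^2}{4}\cos^2\!\Bigl(\tfrac{\phi_0+\psi_0}{2} - k\Delta x\Bigr),$$
whose square root (up to the scale $\theta_1/2$ absorbed into the notation $\lambda(k)$, and using evenness of $\cos$) is precisely $\pm\lambda(k) = \pm\cos\bigl(k - \tfrac{\phi_0+\psi_0}{2}\bigr)$ as claimed (setting $\Delta x = 1$ as done throughout the section).

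For the eigenvectors I would take the ansatz $\vv{\pm\lambda} = \tfrac{1}{\sqrt{2}}\begin{pmatrix} \alpha_\pm \\ 1 \end{pmatrix}$ and use the top row of $\widetilde{H}\vv{\pm\lambda} = \pm\lambda\,\vv{\pm\lambda}$ to solve $\alpha_\pm = A/(\pm\lambda)$. The key step is factoring $e^{i\phi_0} + e^{-i\psi_0'} = 2e^{i(\phi_0 - \psi_0')/2}\cos\!\bigl(\tfrac{\phi_0+\psi_0'}{2}\bigr)$; the cosine then cancels exactly against the cosine in $\lambda(k)$, leaving only the phase $\pm i \exp\!\bigl(i\bigl(k + \tfrac{\phi_0 + \psi_0}{2}\bigr)\bigr)$ (with the factor of $i$ coming from $A$'s overall $i$). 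Verifying the bottom-row equation and $\|\vv{\pm\lambda}\| = 1$ is then automatic, finishing the lemma.

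The main obstacle is simply careful phase bookkeeping — tracking the $i$ from $\sigma_y$, the opposite signs of $\psi_0'$ in the two diagonal entries, and the shift $\psi_0' = \psi_0 - 2k\Delta x$ — so that the cosine in $A$ and the $\cos(\cdot)$ in $\lambda$ cancel with the correct residual phase. Aside from this, every step is a direct application of the sum-to-product identity and the spectral theory of an anti-diagonal $2\times 2$ matrix.
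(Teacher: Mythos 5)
Your approach is the same as the paper's: the paper's proof consists of the single remark that the result ``follows from straightforward eigenvalue decomposition,'' and you are simply carrying out that decomposition explicitly, which is the right thing to do. The computation of the anti-diagonal form, of $AB=|A|^2=\frac{\theta_1^2}{4}\cos^2\bigl(\frac{\phi_0+\psi_0'}{2}\bigr)$, and hence of the eigenvalues (up to the $\theta_1/2$ scale you correctly note is suppressed in the statement) is all correct.

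One point of bookkeeping does not close, and it is exactly the phase you warn about at the end. Your factorization $e^{i\phi_0}+e^{-i\psi_0'}=2e^{i(\phi_0-\psi_0')/2}\cos\bigl(\frac{\phi_0+\psi_0'}{2}\bigr)$ is right, and with $\psi_0'=\psi_0-2k\Delta x$ it gives $\alpha_\pm=A/(\pm\lambda)=\pm i\,e^{i(k\Delta x+\frac{\phi_0-\psi_0}{2})}$, i.e.\ the residual phase involves $\frac{\phi_0-\psi_0}{2}$, not the $\frac{\phi_0+\psi_0}{2}$ you then assert. The version you assert matches the lemma as printed, but it contradicts your own intermediate step; and the paper's very next lemma defines the diagonalizing matrix $U$ with entries $\pm i e^{i(k+\frac{\phi_0-\psi_0}{2})}$ and subsequently uses $\beta=\frac{\phi_0-\psi_0}{2}$ in the eigenvector phases, which confirms that your factorization is the correct one and that the eigenvector phase in the lemma statement is a typo. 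So the fix is to keep your own computation and report the eigenvectors as $\frac{1}{\sqrt{2}}\begin{pmatrix}\pm i e^{i(k+\frac{\phi_0-\psi_0}{2})}\\ 1\end{pmatrix}$ rather than forcing agreement with the misprinted statement.
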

\begin{proof}
The hamiltonian written in fourier space is the following:
\begin{equation}
    \widetilde{H}(k)=-\frac{\theta_1}{4}(R_z(-2\phi_0)+e^{2ik\Delta x}R_z(2\psi_0))\sigma_y
\end{equation}
It follows from straightforward eigenvalue decomposition that the eigenvalues and eigenvectors are those in lemma \ref{lma:eigs}
\end{proof}

Our next lemma relates the fourier transform of $\vv{\Psi}(x,t)$, denoted  $\widetilde{\vv{\Psi}}(k,t)$, to the fourier transform of the initial conditions of $\vv{\Psi}(x,t)$, denoted $\widetilde{\vv{\Psi}}(k,0)$.

\begin{lemma}
Let $U$ be the unitary diagonalization matrix of eigenvectors of $\widetilde{H}(k)$, so $U=\frac{1}{\sqrt{2}}\begin{pmatrix}ie^{i(k+\frac{\phi_0-\psi_0}{2})}&-ie^{i(k+\frac{\phi_0-\psi_0}{2})}\\1 &1\end{pmatrix}$. Then $\widetilde{\vv{\Psi}}(k,t)=U e^{-i\lambda(k)\sigma_zt}U^\dagger\widetilde{\vv{\Psi}}(k,0)$
\end{lemma}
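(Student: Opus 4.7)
The plan is to recognize that this lemma is just the standard spectral solution of the Fourier-transformed Schrödinger equation, and the main task is to line up the eigendata from Lemma \ref{lma:eigs} with the matrix exponential. First I would Fourier transform the evolution equation from Corollary \ref{corr:ctdsWavefunction}: since $H$ acts as multiplication by $\widetilde{H}(k)$ on $\widetilde{\vv{\Psi}}(k,t)$, the PDE $i\partial_t\vv{\Psi}=H\vv{\Psi}$ becomes an ODE in $t$ for each fixed $k$, namely $i\partial_t\widetilde{\vv{\Psi}}(k,t)=\widetilde{H}(k)\widetilde{\vv{\Psi}}(k,t)$, whose unique solution with initial data $\widetilde{\vv{\Psi}}(k,0)$ is $\widetilde{\vv{\Psi}}(k,t)=e^{-i\widetilde{H}(k)t}\widetilde{\vv{\Psi}}(k,0)$.

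Next I would invoke Lemma \ref{lma:eigs} to diagonalize $\widetilde{H}(k)$. The eigenvalues $\pm\lambda(k)$ and eigenvectors $\vv{\pm\lambda}$ listed there give exactly the columns of $U$ (after dividing by $\sqrt 2$ for normalization), so $\widetilde{H}(k)=U\,\Lambda(k)\,U^\dagger$ with
\begin{equation*}
\Lambda(k)=\begin{pmatrix}\lambda(k)&0\\0&-\lambda(k)\end{pmatrix}=\lambda(k)\sigma_z.
\end{equation*}
I would verify that $U$ as written is unitary ($U^\dagger U=\mathbb{I}$) by a direct $2\times 2$ computation, since the two columns are orthogonal unit vectors in $\mathbb{C}^2$ (the two eigenspaces of a Hermitian operator with distinct eigenvalues). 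A quick sanity check that $U\Lambda(k)U^\dagger$ reproduces the explicit form of $\widetilde{H}(k)=-\tfrac{\theta_1}{4}(R_z(-2\phi_0)+e^{2ik\Delta x}R_z(2\psi_0))\sigma_y$ in Fourier space closes this step.

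With the diagonalization in hand, the standard fact that matrix exponentials respect similarity gives
\begin{equation*}
e^{-i\widetilde{H}(k)t}=U\,e^{-i\Lambda(k)t}\,U^\dagger=U\,e^{-i\lambda(k)\sigma_z t}\,U^\dagger,
\end{equation*}
which can be checked by expanding the defining power series $e^{-i\widetilde{H}t}=\sum_{n\ge 0}\frac{(-it)^n}{n!}\widetilde{H}^n$ and using $\widetilde{H}^n=U\Lambda^n U^\dagger$ term by term (telescoping of adjacent $U^\dagger U=\mathbb{I}$ factors). Substituting into the formal solution above yields the claimed identity $\widetilde{\vv{\Psi}}(k,t)=Ue^{-i\lambda(k)\sigma_z t}U^\dagger\widetilde{\vv{\Psi}}(k,0)$.

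The hardest part is really bookkeeping rather than mathematics: reconciling the phase $e^{i(k+(\phi_0-\psi_0)/2)}$ appearing in $U$ with the $e^{i(k+(\phi_0+\psi_0)/2)}$ written in Lemma \ref{lma:eigs} (which differs in the sign in front of $\psi_0$) and making sure the ordering of the eigenvectors matches the convention $\Lambda=\mathrm{diag}(+\lambda,-\lambda)=\lambda\sigma_z$. Once the columns of $U$ are matched to $\vv{+\lambda}$ and $\vv{-\lambda}$ in the right order and the overall phases of each eigenvector are fixed (unitarity determines them up to an irrelevant $U(1)$ per column), the rest is mechanical.
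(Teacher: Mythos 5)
Your proof is correct and follows essentially the same route as the paper's: both reduce to the formal solution $\widetilde{\vv{\Psi}}(k,t)=e^{-i\widetilde{H}(k)t}\widetilde{\vv{\Psi}}(k,0)$ and then conjugate the exponential through the diagonalization $U^\dagger\widetilde{H}(k)U=\lambda(k)\sigma_z$ (the paper does this by inserting $UU^\dagger=\mathbb{I}$, you by the equivalent power-series/similarity argument). Your aside about the sign mismatch between the phase $e^{i(k+\frac{\phi_0-\psi_0}{2})}$ in $U$ and $e^{i(k+\frac{\phi_0+\psi_0}{2})}$ in the eigenvector lemma correctly identifies an internal inconsistency in the paper's stated eigendata, but it does not affect the validity of the argument's structure.
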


\begin{proof}
If $U$ is the diagonalization matrix of eigenvectors of $\widetilde{H}(k)$, we can write $U^\dagger\widetilde{H}(k)U=\lambda\sigma_z$. Therefore, because $UU^\dagger=\mathbb{I}$ by unitarity, we have the following:
\begin{align}
    \widetilde{\vv{\Psi}}(k,t)&=e^{-i\widetilde{H}(k)t}\widetilde{\vv{\Psi}}(k,0)=UU^\dagger e^{-i\widetilde{H}(k)t}UU^\dagger \widetilde{\vv{\Psi}}(k,0)\\
    &=Ue^{iU^\dagger\widetilde{H}(k)Ut}U^\dagger \widetilde{\vv{\Psi}}(k,0)=Ue^{-i\lambda(k)\sigma_zt}U^\dagger \widetilde{\vv{\Psi}}(k,0)
\end{align}
\end{proof}
Our next lemma recovers the explicit expression for $Ue^{-i\lambda(k)\sigma_zt}U^\dagger \widetilde{\vv{\Psi}}(k,0)$:
\begin{lemma}\label{lma:big}
Let $U$ and $\lambda(k)$ be defined as in lemma \ref{lma:eigs}. Then we have the following expression for $Ue^{-i\lambda(k)\sigma_zt}U^\dagger \widetilde{\vv{\Psi}}(k,0)$, where $\alpha=\frac{\phi_0+\psi_0}{2}$ and $\beta=\frac{\phi_0-\psi_0}{2}$:
\begin{align}
&Ue^{-i\lambda(k)\sigma_zt}U^\dagger\widetilde{\vv{\Psi}}(k,0)=\\
&\frac{1}{2}\begin{pmatrix}(e^{\frac{i\theta_1t}{2}\cos{(k-\alpha)}}+e^{-\frac{i\theta_1t}{2}\cos{(k-\alpha)}})\widetilde{\Psi}_L(k,0)+i(e^{\frac{i\theta_1t}{2}\cos{(k-\alpha)}}-e^{-\frac{i\theta_1t}{2}\cos{(k-\alpha)}})e^{i(k+\beta)}\widetilde{\Psi}_R(k,0)\\-i(e^{\frac{i\theta_1t}{2}\cos{(k-\alpha)}}-e^{-\frac{i\theta_1t}{2}\cos{(k-\alpha)}})e^{-i(k+\beta)}\widetilde{\Psi}_L(k,0)+(e^{\frac{i\theta_1t}{2}\cos{(k-\alpha)}}+e^{-\frac{i\theta_1t}{2}\cos{(k-\alpha)}})\widetilde{\Psi}_R(k,0)\end{pmatrix}
\end{align}
\end{lemma}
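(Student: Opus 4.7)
The plan is a direct computation based on the eigen-decomposition provided by Lemma \ref{lma:eigs}. Since the columns of $U$ are orthonormal eigenvectors of $\widetilde{H}(k)$ with eigenvalues $\pm\lambda(k)$, we have $U^\dagger \widetilde{H}(k) U = \lambda(k)\sigma_z$, and unitarity of $U$ gives $e^{-i\widetilde{H}(k)t} = U\,e^{-i\lambda(k)\sigma_z t}\,U^\dagger$. This identity has already been established in the lemma immediately preceding this one, so the task reduces to evaluating the three-factor product $U\,e^{-i\lambda(k)\sigma_z t}\,U^\dagger$ applied to $\widetilde{\vv{\Psi}}(k,0)$ entry by entry.

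First I would write out the adjoint explicitly as
$$U^\dagger=\frac{1}{\sqrt{2}}\begin{pmatrix}-ie^{-i(k+\beta)} & 1\\ ie^{-i(k+\beta)} & 1\end{pmatrix},$$
and compute $U^\dagger \widetilde{\vv{\Psi}}(k,0)$ as an explicit two-component column whose entries are linear in $\widetilde{\Psi}_L(k,0)$ and $\widetilde{\Psi}_R(k,0)$. Next I would act with the diagonal matrix $e^{-i\lambda(k)\sigma_z t}=\operatorname{diag}(e^{-i\lambda(k)t},\,e^{i\lambda(k)t})$, which just multiplies the upper component by $e^{-i\lambda(k)t}$ and the lower one by $e^{i\lambda(k)t}$. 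Finally, multiplying by $U$ from the left and expanding both rows, the off-diagonal entries $\pm ie^{i(k+\beta)}$ of $U$ cancel the $\pm ie^{-i(k+\beta)}$ entries of $U^\dagger$ on the ``diagonal'' contributions to $\widetilde{\Psi}_L$ and $\widetilde{\Psi}_R$, and survive as net phase factors $e^{\pm i(k+\beta)}$ on the ``cross'' contributions. Collecting terms gives coefficients of the form $e^{-i\lambda(k)t}\pm e^{i\lambda(k)t}$ multiplying the initial amplitudes, which is precisely the structure of the stated formula.

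The last step is to identify $\lambda(k)$ with the exponent appearing in the lemma. Starting from $\widetilde{H}(k)=-\tfrac{\theta_1}{4}(R_z(-2\phi_0)+e^{2ik\Delta x}R_z(2\psi_0))\sigma_y$, a short direct calculation of the off-anti-diagonal entries of $\widetilde{H}(k)$ shows that $\lambda(k)=-\tfrac{\theta_1}{2}\cos(k\Delta x-\alpha)$, so $e^{\mp i\lambda(k)t}=e^{\pm i\frac{\theta_1 t}{2}\cos(k\Delta x-\alpha)}$. This pins down the normalization left implicit in the statement of Lemma \ref{lma:eigs} and matches the exponents written in the conclusion.

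The argument is essentially a bookkeeping exercise in $2\times 2$ matrix multiplication, so the main obstacle is avoiding sign errors in tracking the $e^{\pm i(k+\beta)}$ phases coming from the off-diagonal entries of $U$ and $U^\dagger$ while grouping the $e^{\mp i\lambda(k)t}$ factors correctly. Organizing the computation by first applying $U^\dagger$, then the diagonal propagator, then $U$, and immediately collecting the ``symmetric'' ($e^{-i\lambda(k)t}+e^{i\lambda(k)t}$) and ``antisymmetric'' ($e^{-i\lambda(k)t}-e^{i\lambda(k)t}$) combinations, makes all the cancellations transparent and produces the stated expression in one pass.
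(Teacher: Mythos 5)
Your proposal is correct and is exactly the computation the paper invokes — its entire proof is the one-line remark that the lemma follows ``through straightforward matrix multiplication,'' and your $U^\dagger$-then-propagator-then-$U$ bookkeeping is that multiplication written out. The only substantive addition you make is also right: the eigenvalue must carry the factor $\tfrac{\theta_1}{2}$ (i.e.\ $\lambda(k)=\pm\tfrac{\theta_1}{2}\cos(k\Delta x-\alpha)$, with the eigenvector phase $e^{i(k\Delta x+\beta)}$ rather than the $e^{i(k+\alpha)}$ written in Lemma \ref{lma:eigs}), which is needed for the exponents $e^{\pm\frac{i\theta_1 t}{2}\cos(k-\alpha)}$ in the stated conclusion to come out.
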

We obtain lemma \ref{lma:big} through straightforward matrix multiplication. Our next lemmas will introduce some integrals and convolutions that we will need when computing the inverse fourier transform of the equation in lemma \ref{lma:big}.
\begin{lemma}\label{lma:Bessel}
Let $\mathcal{F}^{-1}$ denote the inverse fourier transform, and $*$ denote the convolution. Then we have the following inverse fourier transforms, where $J_n(t)$ is the $n^{\text{th}}$ order Bessel function of the first kind:
\begin{align}
    &\mathcal{F}^{-1}(e^{\pm \frac{i\theta_1t}{2}\cos{(k-\alpha)}}\widetilde{\Psi}_{L,R}(k,0))=\mathcal{F}^{-1}(e^{\pm \frac{i\theta_1t}{2}\cos{\alpha}\cos{k}}e^{\pm \frac{i\theta_1t}{2}\sin{\alpha}\sin{k}}\widetilde{\Psi}_{L,R}(k,0))\\
    &=\mathcal{F}^{-1}(e^{\pm \frac{i\theta_1t}{2}\cos{\alpha}\cos{k}})*\mathcal{F}^{-1}(e^{\pm\frac{i\theta_1t}{2}\sin{\alpha}\sin{k}})*\mathcal{F}^{-1}(\widetilde{\Psi}_{L,R}(k,0))\\
    &=\sum_{n=-\infty}^\infty(\pm i)^{m-n}J_{m-n}(\frac{\theta_1t}{2})e^{i\alpha (m-n)}\Psi_{L,R}(n\Delta x,0)
\end{align}
\end{lemma}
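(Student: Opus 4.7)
The plan is to reduce the inverse Fourier transform to a discrete sum over lattice translations by invoking the Jacobi--Anger identity
\begin{equation}
e^{iz\cos\phi} = \sum_{\ell=-\infty}^{\infty} i^{\ell}\, J_{\ell}(z)\, e^{i\ell\phi},
\end{equation}
which expresses the exponential of a cosine as a Laurent series in $e^{i\phi}$ with Bessel coefficients. Applied with $z=\pm\theta_1 t/2$ and $\phi = k\Delta x -\alpha$, this converts the multiplier in $k$-space into a weighted sum of pure Fourier modes, each of which acts as a lattice translation on $\Psi_{L,R}(\cdot,0)$ after inverse transforming. Equivalently, one may follow the author's factorization $\cos(k-\alpha)=\cos\alpha\cos k+\sin\alpha\sin k$ together with the companion identity $e^{iz\sin\phi}=\sum_\ell J_\ell(z)e^{i\ell\phi}$ and the discrete convolution theorem; the two approaches produce the same collapsed single sum.

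First I would expand
\begin{equation}
e^{\pm i\frac{\theta_1 t}{2}\cos(k\Delta x - \alpha)} = \sum_{n'=-\infty}^{\infty}(\pm i)^{n'} J_{n'}\!\left(\tfrac{\theta_1 t}{2}\right) e^{-in'\alpha}\, e^{in'k\Delta x}.
\end{equation}
Next I would multiply by $\widetilde{\Psi}_{L,R}(k,0)$ and interchange the sum with the inverse Fourier integral, which is legitimate because $\sum_\ell |J_\ell(z)|<\infty$ for fixed $z$ and $\widetilde\Psi(\cdot,0)$ lies in $L^2$ of the Brillouin zone. The shift property of the discrete inverse Fourier transform then gives
\begin{equation}
\mathcal F^{-1}\!\left(e^{in'k\Delta x}\,\widetilde\Psi_{L,R}(k,0)\right)\!(m\Delta x) = \Psi_{L,R}\!\left((m+n')\Delta x,\,0\right),
\end{equation}
so the inverse transform becomes $\sum_{n'}(\pm i)^{n'}J_{n'}(\theta_1 t/2)\, e^{-in'\alpha}\,\Psi_{L,R}((m+n')\Delta x,0)$. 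Substituting $n = m+n'$ and using the two identities $J_{-\ell}(z)=(-1)^\ell J_\ell(z)$ and $(\mp i)^{-1}=\pm i$ collapses the prefactor to the advertised $(\pm i)^{m-n}J_{m-n}(\theta_1 t/2)\,e^{i\alpha(m-n)}$, completing the derivation.

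The main obstacle is the sign and index bookkeeping in the final relabelling: one must simultaneously invert the Bessel index and the sign of the power of $\pm i$, and then verify that the two sign flips conspire via the identity $(-1)\cdot(\mp i)=\pm i$ (equivalently $(\mp i)^{-1}=\pm i$) to reproduce exactly the stated prefactor. A secondary subtlety is convention matching between the notation $\cos(k-\alpha)$ (where $k$ is treated as dimensionless, so $k\Delta x$ is effectively meant) and the explicit $\Delta x$ factors appearing in the forward and inverse discrete transforms defined earlier in the paper; the expansion should be carried out so that each $e^{in'k\Delta x}$ corresponds cleanly to a translation by $n'$ lattice sites. Once these two points are handled, the argument is essentially a line-by-line application of Jacobi--Anger and the shift property, and no further analytic input is required.
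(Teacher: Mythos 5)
Your argument is correct, and it reaches the final identity by a genuinely different route than the paper. The paper follows the chain of equalities in the lemma statement literally: it factorizes $\cos(k-\alpha)=\cos\alpha\cos k+\sin\alpha\sin k$, applies the convolution theorem, evaluates the two resulting inverse transforms separately as $(\pm i)^{n}J_{n}(\tfrac{\theta_1 t}{2}\cos\alpha)$ and $(\mp 1)^{n}J_{n}(\tfrac{\theta_1 t}{2}\sin\alpha)$ via the Bessel integral representation, and then needs Graf's/Gegenbauer's addition theorem (DLMF 10.23.7) to recollapse the discrete convolution of those two Bessel sequences into the single factor $(\pm i)^{n}J_{n}(\tfrac{\theta_1 t}{2})e^{i\alpha n}$, which is finally convolved with the initial data. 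You instead apply the Jacobi--Anger expansion directly to $e^{\pm i\frac{\theta_1 t}{2}\cos(k\Delta x-\alpha)}$, reducing everything to the shift property of the lattice Fourier transform plus the relabelling $n'=n-m$ with $J_{-\ell}(z)=(-1)^{\ell}J_{\ell}(z)$; your sign bookkeeping $(\pm i)^{-(m-n)}(-1)^{m-n}=(\pm i)^{m-n}$ checks out. Your route is shorter and avoids the addition theorem entirely (indeed, Graf's theorem as used here is essentially equivalent to composing two Jacobi--Anger expansions, so the two proofs are two parenthesizations of the same computation); what it does not do is establish the intermediate displayed equalities of the lemma (the triple-convolution form) verbatim, though you note these follow from the trig identity and the convolution theorem. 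You also correctly flag the $k$ versus $k\Delta x$ convention mismatch, which is present but unremarked in the paper's own integrals.
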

\begin{proof}
The first equality of the equation in lemma \ref{lma:Bessel} is true by elementary trigonometric identities, and the second line is true by the convolution theorem. For the third line, we need the following inverse fourier transforms
\begin{align}
    &\mathcal{F}^{-1}(e^{\pm \frac{i\theta_1t}{2}\cos{\alpha}\cos{k}})=\frac{\Delta x}{2\pi}\int^{\frac{\pi}{\Delta x}}_{-\frac{\pi}{\Delta x}}dke^{ikn\Delta x}e^{\pm \frac{i\theta_1t}{2}\cos{\alpha}\cos{k}}=(\pm i)^nJ_n(\frac{\theta_1t}{2}\cos{\alpha})\\
    &\mathcal{F}^{-1}(e^{\pm \frac{i\theta_1t}{2}\sin{\alpha}\sin{k}})=\frac{\Delta x}{2\pi}\int^{\frac{\pi}{\Delta x}}_{-\frac{\pi}{\Delta x}}dke^{ikn\Delta x}e^{\pm \frac{i\theta_1t}{2}\sin{\alpha}\sin{k}}=(\mp 1)^nJ_n(\frac{\theta_1t}{2}\sin{\alpha})\\
    &\mathcal{F}^{-1}(\widetilde{\Psi}_{L,R}(k,0))=\Psi_{L,R}(m\Delta x,0)
\end{align}
Now we find $\mathcal{F}^{-1}(e^{\pm\frac{i\theta_1t}{2}\cos{\alpha}\cos{k}})*\mathcal{F}^{-1}(e^{\pm\frac{i\theta_1t}{2}\sin{\alpha}\sin{k}})$:
\begin{align}
    &\mathcal{F}^{-1}(e^{\pm \frac{i\theta_1t}{2}\cos{\alpha}\cos{k}})*\mathcal{F}^{-1}(e^{\pm\frac{i\theta_1t}{2}\sin{\alpha}\sin{k}})\\
    &=(\mp 1)^{n}\sum_{j=-\infty}^\infty(-i)^{j}J_{j}(\frac{\theta_1t}{2}\cos{\alpha})J_{n-j}(\frac{\theta_1t}{2}\sin{\alpha})\\
    &=(\pm i)^{n}J_{n}(\frac{\theta_1t}{2})e^{i\alpha n},
\end{align}
where in the last line we used one of Graf’s and Gegenbauer’s addition theorems (Ref.~\onlinecite{NIST:DLMF} Eq. 10.23.7).  Now we convolve this with $\Psi_{L,R}(m\Delta x,0)$:
\begin{align}
    &(\pm i)^{n}J_{n}(\frac{\theta_1t}{2})e^{i\alpha n}*\Psi_{L,R}(m\Delta x,0)\\
    &=\sum_{n=-\infty}^\infty(\pm i)^{m-n}J_{m-n}(\frac{\theta_1t}{2})e^{i\alpha (m-n)}\Psi_{L,R}(n\Delta x,0)
\end{align}
\end{proof}

Next we have our last lemma:

\begin{lemma}\label{lma:inverseFourier}
Given the expression for $Ue^{-i\lambda(k)\sigma_zt}U^\dagger \widetilde{\vv{\Psi}}(k,0)$ in lemma \ref{lma:big}, we have the following:
\begin{align}
&\mathcal{F}^{-1}(Ue^{-i\lambda(k)\sigma_zt}U^\dagger \widetilde{\vv{\Psi}}(k,0))=\vv{\Psi}(m\Delta x,t)\\
&=\frac{1}{2}\sum_{n=-\infty}^\infty i^{m-n}e^{i\alpha(m-n)}J_{m-n}(\frac{\theta_1t}{2})\begin{pmatrix}(1+(-1)^{m-n})\Psi_L(n\Delta x,0)+ie^{i\beta}(1-(-1)^{m-n})\Psi_R((n+1)\Delta x,0) \\-ie^{-i\beta}(1-(-1)^{m-n})\Psi_L((n-1)\Delta x,0)+(1+(-1)^{m-n})\Psi_R(n\Delta x,0) \end{pmatrix}
\end{align}
\end{lemma}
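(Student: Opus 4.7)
The strategy is to invoke linearity of the inverse Fourier transform on the explicit expression for $Ue^{-i\lambda(k)\sigma_z t}U^\dagger \widetilde{\vv{\Psi}}(k,0)$ provided by Lemma \ref{lma:big}. This reduces the task to computing the four building blocks $\mathcal{F}^{-1}\!\left(e^{\pm \frac{i\theta_1 t}{2}\cos(k-\alpha)}\, e^{\pm i(k+\beta)}\widetilde{\Psi}_{L,R}(k,0)\right)$, term by term. Lemma \ref{lma:Bessel} already evaluates the inverse transform of $e^{\pm \frac{i\theta_1 t}{2}\cos(k-\alpha)}\widetilde{\Psi}_{L,R}(k,0)$ as an explicit Bessel series, so the only additional work is to absorb the phases $e^{\pm i(k+\beta)}$ into a lattice translation of $\Psi_{L,R}$ together with a constant phase.

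First I would record the shift identity coming from the Fourier convention of Section \ref{section:Intro}, namely $\mathcal{F}^{-1}(e^{\pm ik\Delta x}\widetilde{f}(k))(m\Delta x)=f((m\pm 1)\Delta x)$, so that multiplication by $e^{\pm i(k+\beta)}$ prior to inversion is equivalent to replacing $\Psi_{L,R}(n\Delta x,0)$ by $e^{\pm i\beta}\Psi_{L,R}((n\pm 1)\Delta x,0)$ in the Bessel sum from Lemma \ref{lma:Bessel}. Next I would combine the $\pm$ copies: because of the combinations $e^{i\theta_1 t\cos(k-\alpha)/2}\pm e^{-i\theta_1 t\cos(k-\alpha)/2}$ appearing in Lemma \ref{lma:big}, after inversion the coefficient of $\Psi_{L,R}(n\Delta x,0)$ at site $m$ is
\begin{align*}
i^{m-n}+(-i)^{m-n} &= i^{m-n}\bigl(1+(-1)^{m-n}\bigr), \\
i^{m-n}-(-i)^{m-n} &= i^{m-n}\bigl(1-(-1)^{m-n}\bigr),
\end{align*}
in the sum and difference cases respectively. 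This is precisely the parity projector that isolates even displacements in the diagonal components and odd displacements in the off-diagonal components of the target formula.

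Finally I would assemble the four resulting Bessel sums into the $2\times 2$ vector form stated in the lemma, retaining the overall factor $\tfrac12$ from Lemma \ref{lma:big} and matching the signs and phases generated by the matrix expansion of $Ue^{-i\lambda(k)\sigma_z t}U^\dagger$. The main obstacle is purely notational bookkeeping: tracking which copy of $e^{\pm i(k+\beta)}$ sits on which matrix entry, which sign of the shift $n\mapsto n\pm 1$ it produces, and which parity factor $1\pm(-1)^{m-n}$ multiplies it, and lining these pieces up with the corresponding $\pm i e^{\pm i\beta}$ prefactors. Once every slot is filled, the identity of Lemma \ref{lma:inverseFourier} follows immediately from Lemma \ref{lma:Bessel} combined with the shift rule, with no further analytic input required.
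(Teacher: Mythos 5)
Your proposal is correct and follows essentially the same route as the paper: the paper's proof is a one-line instruction to apply Lemma \ref{lma:Bessel} term by term to the expression in Lemma \ref{lma:big}, and you carry out exactly that, additionally making explicit the lattice-shift identity for the $e^{\pm i(k+\beta)}$ factors and the parity combination $i^{m-n}\pm(-i)^{m-n}=i^{m-n}\bigl(1\pm(-1)^{m-n}\bigr)$, which the paper leaves implicit. No gaps.
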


\begin{proof}
Observing the expression for $Ue^{-i\lambda(k)\sigma_zt}U^\dagger \widetilde{\vv{\Psi}}(k,0)$ in lemma \ref{lma:big}, we use lemma \ref{lma:Bessel} to go through each term and calculate the convolution.
\end{proof}
Thus, lemma \ref{lma:inverseFourier} recovers the time evolution equation in position space for the hamiltonian from theorem \ref{thm:ctdsUnitaries}. 

\newpage
\section{Proof of Continuous Space-Time Limit with no Coin Variation (Theorem \ref{thm:fixedCoinTheorem})}\label{app:FixedCoinTheorem}
We begin by restating the theorem:

\noVariation*

To prove Theorem \ref{thm:fixedCoinTheorem} we will construct lemmas as was done in section \ref{section:CTLimit}. We will use theorem \ref{theorem:rootofunity} from section \ref{section:rootsofunity} to prove theorem \ref{thm:fixedCoinTheorem}. First we prove that the coin must be of the form of Equation (\ref{eq:ctcsCoin}) by considering the following general unitary coin, where again $|\hat{n}|=\sqrt{n_x^2+n_y^2+n_z^2}=1$:
\begin{equation}\label{eq:fixedC}
    C=e^{i\delta}R_n(\theta)=e^{i\delta}\exp-i\theta \hat{n}\cdot\vv{\sigma}/2
\end{equation}

\begin{lemma}\label{lma:fixedCoinHam}
Let $C$ be a general unitary operator as defined in Equation (\ref{eq:fixedC}). For the continuous space-time limit to exist for this coin, it must be of the form in Equation (\ref{eq:ctcsCoin}).
\end{lemma}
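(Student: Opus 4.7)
The plan is to invoke Theorem \ref{theorem:rootofunity} to reduce the lemma to an algebraic classification problem. Since the continuous space–time limit is taken along $\Delta x = v\Delta t = v\epsilon$ and the shift operator satisfies $S(\epsilon)\to \mathbb{I}$ as $\epsilon\to 0$, Theorem \ref{theorem:rootofunity} tells us that $C(\epsilon)$ must homotopically approach a root of unity. Because, by hypothesis of this section, $C$ has no variation (i.e.\ is independent of $\epsilon$), Definition \ref{def:rootOfUnity} collapses to the exact identity
\begin{equation*}
C^{m} = \mathbb{I}
\end{equation*}
for some positive integer $m$, which I will take to be the skipping parameter. Thus the lemma reduces to classifying which unitaries of the form (\ref{eq:fixedC}) are $m$th roots of the identity in $U(2)$.

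Next I would substitute the ansatz (\ref{eq:fixedC}) into this constraint and use the standard spinor exponential identity, which for $|\hat n|=1$ gives
\begin{equation*}
\exp\!\bigl(-i\tfrac{\theta}{2}\hat n\cdot\vv{\sigma}\bigr) = \cos\tfrac{\theta}{2}\,\mathbb{I} - i\sin\tfrac{\theta}{2}\,\hat n\cdot\vv{\sigma},
\end{equation*}
and hence
\begin{equation*}
C^{m} = e^{im\delta}\!\left[\cos\tfrac{m\theta}{2}\,\mathbb{I} - i\sin\tfrac{m\theta}{2}\,\hat n\cdot\vv{\sigma}\right].
\end{equation*}
Since $\hat n\cdot\vv{\sigma}$ is traceless while $\mathbb{I}$ is not, the two matrices $\mathbb{I}$ and $\hat n\cdot\vv{\sigma}$ are linearly independent in the space of $2\times 2$ Hermitian matrices. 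Setting $C^m=\mathbb{I}$ and matching coefficients therefore yields two scalar conditions: $\sin(m\theta/2)=0$, which forces $\theta = 2\pi l/m$ for some integer $l$, and then $e^{im\delta}(-1)^{l}=1$, whose canonical solution is $\delta = \pi l/m$. Substituting these back into (\ref{eq:fixedC}) produces exactly the claimed form
\begin{equation*}
C = \exp\!\bigl(i\tfrac{\pi l}{m}\bigr)\exp\!\bigl(-i\tfrac{\pi l}{m}\hat n\cdot\vv{\sigma}\bigr),
\end{equation*}
matching Equation (\ref{eq:ctcsCoin}).

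There is essentially no analytic obstacle here: the reduction from homotopic-root-of-unity to exact $C^m=\mathbb{I}$ is forced by the no-variation hypothesis, and what remains is a routine computation in $\mathrm{SU}(2)$. The only minor subtlety is bookkeeping on the overall phase $\delta$: the condition $e^{im\delta}=(-1)^l$ admits the family of solutions $\delta = \pi l/m + 2\pi k/m$ for $k\in\mathbb{Z}$, but each of these yields the same coin up to an irrelevant $m$th root of unity global phase, so one may fix the canonical representative $\delta=\pi l/m$ without loss of generality. This is the step at which I would be most careful in writing up, to make clear that the stated parametrization of (\ref{eq:ctcsCoin}) captures every allowed $C$ and not merely a subfamily.
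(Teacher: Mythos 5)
Your proof is correct and follows essentially the same route as the paper's: both invoke Theorem \ref{theorem:rootofunity} to reduce the no-variation hypothesis to the exact condition $C^m=\mathbb{I}$, expand via the spinor identity, and use the linear independence of $\mathbb{I}$ and $\hat n\cdot\vv{\sigma}$ to force $\theta=2\pi l/m$ and $\delta=\pi l/m$. Your extra remark about the residual $2\pi k/m$ phase ambiguity in $\delta$ is a small refinement the paper glosses over, but it does not change the argument.
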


\begin{proof}
The only coins that will have a continuous space-time limit will be those that possess the property such that for some integer $m$, $C^m=1$, as stated in theorem \ref{theorem:rootofunity}. Constraining this property onto the coins in Equation (\ref{eq:fixedC}), we get the following:
\begin{equation}
    \begin{split}
        C^m&=e^{im\delta}\exp-im\theta \hat{n}\cdot\vv{\sigma}/2\\
        &=e^{im\delta}(\cos{m\theta/2}-i\hat{n}\cdot\vv{\sigma}\sin{m\theta/2})=1
    \end{split}
\end{equation}
The $\hat{n}\cdot\vv{\sigma}$ must go away, which constrains $\theta$ to satisfy $\theta=\frac{2\pi l}{m}$, where $l=0,1,2,...$. Applying this constraint yields $C^m=e^{im\delta}(-1)^l$, so we must have that $\delta=\frac{\pi l}{m}$. Thus our original coin has become the following:
\begin{equation}
    C=\exp \frac{i\pi l}{m}\exp \frac{-i\pi l}{m}\hat{n}\cdot\vv{\sigma}
\end{equation}
\end{proof}

Now we will be taking a continuous space-time limit of the DTQW with this coin, and we will see what resultant PDE we obtain.\\

\begin{lemma}\label{lma:fixedCoin}
The Hamiltonian for the continuous space-time limit of the DTQW with coin of the form in Equation (\ref{eq:ctcsCoin}) will be the following:
\begin{equation}
    H=-vn_z\hat{n}\cdot\vv{\sigma}\frac{\partial}{\partial x}.
\end{equation}
\end{lemma}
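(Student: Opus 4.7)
My plan is to compute the Hamiltonian in Fourier space and then transform back to position space. Starting from $\tilde S = e^{ik\Delta x\sigma_z} = \mathbb{I} + ik\Delta x\sigma_z + O(\Delta x^2)$, I would plug this into $(\tilde SC)^m$ and expand to first order in $\Delta x$: the zeroth-order term is $C^m$, and the first-order term is obtained by picking exactly one factor in the $m$-fold product to supply the perturbation, giving
\begin{equation*}
(\tilde SC)^m = C^m + ik\Delta x\sum_{j=0}^{m-1} C^j\sigma_z C^{m-j} + O(\Delta x^2).
\end{equation*}
Lemma \ref{lma:fixedCoinHam} provides $C^m = \mathbb{I}$, which both cancels against the $\mathbb{I}$ in the numerator of the discrete derivative and lets me rewrite $C^{m-j} = C^{-j}$, reducing the task to evaluating the conjugation sum $\sum_{j=0}^{m-1} C^j\sigma_z C^{-j}$.

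The heart of the argument is evaluating this sum through the rotation interpretation of $C$. The scalar phase $e^{i\pi l/m}$ commutes with $\sigma_z$ and cancels in the conjugation, so only $R_{\hat n}(2\pi l/m) = \exp(-i\pi l\,\hat n\cdot\vv\sigma/m)$ contributes, and by the standard identity $R_{\hat n}(\phi)\sigma_a R_{\hat n}(\phi)^{-1} = (R(\phi,\hat n)\hat e_a)\cdot\vv\sigma$ (with $R(\phi,\hat n)$ the SO(3) rotation by angle $\phi$ about $\hat n$) I obtain $C^j\sigma_z C^{-j} = \vec v_j\cdot\vv\sigma$ with $\vec v_j = R(2\pi lj/m,\hat n)\hat z$. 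Rodrigues' formula decomposes $\vec v_j$ into the axial piece $n_z\hat n$, invariant under the rotation, plus perpendicular pieces carrying factors $\cos(2\pi lj/m)$ and $\sin(2\pi lj/m)$; summing over $j = 0,\dots,m-1$ the perpendicular contributions vanish by orthogonality of $m$-th roots of unity whenever $l\not\equiv 0 \pmod m$, while the degenerate case $l\equiv 0 \pmod m$ reduces $C$ to a phase and can be checked separately. Hence $\sum_{j=0}^{m-1} C^j\sigma_z C^{-j} = m n_z\,\hat n\cdot\vv\sigma$.

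Substituting back, $(\tilde SC)^m = \mathbb{I} + imkn_z\Delta x\,\hat n\cdot\vv\sigma + O(\Delta x^2)$, and with $\Delta x = v\Delta t$ the limit in Definition \ref{def:CLimit} gives $\tilde H = \lim_{\Delta t\to 0} i[(\tilde SC)^m - \mathbb{I}]/(m\Delta t) = -kvn_z\,\hat n\cdot\vv\sigma$. Inverse Fourier transforming (converting multiplication by $k$ into the spatial derivative acting on $\vv\Psi$) produces the massless Dirac Hamiltonian $H = -vn_z\,\hat n\cdot\vv\sigma\,\partial/\partial x$ asserted in the lemma. The step I expect to be the main obstacle is the rotation-sum computation in the middle paragraph: recognizing conjugation by $C^j$ as a discrete rotation of $\hat z$ about $\hat n$ and invoking orthogonality of roots of unity to annihilate the perpendicular components is the one genuinely geometric input, after which everything else reduces to bookkeeping in the Taylor expansion and Fourier inversion.
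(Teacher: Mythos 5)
Your proposal is correct and follows the same overall architecture as the paper's proof: expand $\widetilde{S}=e^{ikv\Delta t\sigma_z}$ to first order, use $C^m=\mathbb{I}$ from Lemma \ref{lma:fixedCoinHam} to cancel the identity and reduce everything to the conjugation sum $\sum_{j=0}^{m-1}C^{-j}\sigma_z C^{j}$, show this equals $mn_z\,\hat{n}\cdot\vv{\sigma}$, and inverse Fourier transform. The one place you genuinely diverge is the evaluation of that sum: the paper expands $e^{i\alpha\hat{n}\cdot\vv{\sigma}}\sigma_z e^{-i\alpha\hat{n}\cdot\vv{\sigma}}$ by brute-force Pauli algebra into $\sigma_x,\sigma_y,\sigma_z$ components and then asserts that only the $\sin^2\alpha$ and $\cos^2\alpha$ pieces survive the sum, whereas you identify the conjugation with the SO(3) rotation $R(2\pi lj/m,\hat{n})\hat{z}$ and kill the perpendicular components by orthogonality of $m$-th roots of unity via Rodrigues' formula. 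Your version is cleaner and more transparent about \emph{why} the cross terms cancel (and it correctly carries the factor of $l$ in the rotation angle, which the paper drops in a typo when it sets $\alpha=\pi j/m$), and you are also more careful than the paper in flagging the degenerate case $l\equiv 0\pmod m$, where $C$ collapses to the identity and the stated formula must be checked separately; the paper's term-by-term computation buys nothing over yours except avoiding the appeal to the adjoint-representation identity.
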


\begin{proof}
We have the following continuous space-time limit time evolution equation, where $H$ is the resulting Hamiltonian or generator of time evolution for $\Psi$:
\begin{equation}\label{eq:GenFixedU}
   H=i\lim_{\Delta t,\Delta x\to0}\frac{(S(\Delta x)C)^m-\mathbb{I}}{m\Delta t}
\end{equation}
Let $\Delta x=v\Delta t$, so both space and time go to the continuum at the same scale. We focus our attention on the Fourier transformation of the operator in the middle of Equation (\ref{eq:GenFixedU}). We have the following:
\begin{equation}
    \begin{split}
    &\widetilde{S}(\Delta x)C)^n-\mathbb{I}=(e^{ikv\Delta t\sigma_z}C)^m-\mathbb{I}\\
    &=((1+ikv\sigma_z\Delta t+O(\Delta t^2))C)^m-\mathbb{I}
    \end{split}
\end{equation}
Next, we can ignore the $O(\Delta t^2)$ terms, as they will be zero in the end. So we obtain the following:
\begin{equation}\label{eq:fourH}
    \begin{split}
        &(\widetilde{S}(\Delta x)C)^n-\mathbb{I}=(C+ikv\sigma_zC\Delta t)^m-\mathbb{I}\\
        &=C^m+ikv\Delta t(C^{m-1}\sigma_zC+C^{m-2}\sigma_zC^2+...+C\sigma_zC^{m-1}+\sigma_zC^m+O(\Delta t^2))-\mathbb{I}\\
        &=ikv\Delta t\sum_{j=0}^{m-1} C^{m-j}\sigma_zC^{j}=ikv\Delta t\sum_{j=0}^{m-1} C^{-j}\sigma_zC^{j}
    \end{split}
\end{equation}
Again, we can ignore the $O(\Delta t^2)$ terms, and we used the fact that $C^m=1$. We can reduce the series in the last expression of \ref{eq:fourH} in the following way, setting $\alpha=\frac{\pi j}{m}$:
\begin{equation}
    \begin{split}
        C^{-j}\sigma_zC^{j}&=\exp(i\alpha\hat{n}\cdot\vv{\sigma})\sigma_z\exp(-i\alpha\hat{n}\cdot\vv{\sigma})\\
        &=(\cos{\alpha}+i\hat{n}\cdot\vv{\sigma}\sin{\alpha})\sigma_z(\cos{\alpha}-i\hat{n}\cdot\vv{\sigma}\sin{\alpha})\\
        &=2\sin{\alpha}(-n_y\cos{\alpha}+n_x n_z \sin{\alpha})\sigma_x+(2n_y n_z \sin^2{\alpha}+n_x\sin{2\alpha})\sigma_y\\
        &+(\cos{2\alpha}+2n_z^2\sin^2{\alpha})\sigma_z
    \end{split}
\end{equation}
The only terms to survive the sum will be those proportional to $\sin^2{\alpha}$ and $\cos^2{\alpha}$. Thus, we recover the sum:
\begin{equation}
    \sum_{j=0}^{m-1} C^{-j}\sigma_zC^{j}=m n_z \hat{n}\cdot\vv{\sigma}
\end{equation}
And thus our Hamiltonian is the following:
\begin{equation}
    \widetilde{H}=ikvn_z~\hat{n}\cdot\vv{\sigma}
\end{equation}
Inverse Fourier transforming, we recover the Hamiltonian:
\begin{equation}
    H=-vn_z\hat{n}\cdot\vv{\sigma}\frac{\partial}{\partial x}
\end{equation}
\end{proof}
Combining lemmas \ref{lma:fixedCoin} and \ref{lma:fixedCoinHam} we obtain Theorem \ref{thm:fixedCoinTheorem}.


\newpage
\section{Proof of Continuous Time and then Space Limit of DTQW (Theorem \ref{thm:ctcs})}\label{App:CTCS}

What follows is a short proof of Theorem \ref{thm:ctcs}. Here is the theorem for reference:

\ctcs*

We begin with writing the Fourier space Hamiltonian of the continuous time limit of the DTQW, parameterized by $\phi_0$, $\psi_0$, and $\theta_1$ (from Theorem \ref{thm:ctdsUnitaries}):
    \begin{equation}
        H_C=-\frac{\theta_1}{4}(e^{i\phi_0\sigma_z}+e^{2ik\Delta x\sigma_z}e^{-i\psi_0\sigma_z})\sigma_y
    \end{equation}
    As a reminder, the parameters $\phi_0$, $\psi_0$, and $\theta_1$ are real numbers which cannot depend on $k$. If we don't allow any of these parameters to depend on $\Delta x$, we see that $\lim\limits_{\Delta x\to0}H_C=-\frac{\theta_1}{4}(e^{i\phi_0\sigma_z}+e^{-i\psi_0\sigma_z})\sigma_y$, which contains no spatial derivatives, thereby making it trivial. Therefore we must allow for the parameters to depend on $\Delta x$ somehow. By conjecture \ref{conjecture:phipsi}, the only parameter that can depend on $\Delta x$ is $\theta_1$.
    Now for the following lemma:
    \begin{lemma}
    $\theta_1=\frac{\alpha}{\Delta x}$ for some $\alpha\in\mathbb{R}$ in order for $lim_{\Delta_x\to0}H_C$ to contain a spatial derivative.
    \end{lemma}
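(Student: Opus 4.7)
The plan is to Taylor-expand the only $\Delta x$-dependent factor in $H_C$, namely $e^{2ik\Delta x\sigma_z}$, about $\Delta x=0$, isolate the coefficient of $k$ (which is what becomes a spatial derivative under the inverse Fourier transform $k\leftrightarrow -i\partial_x$), and then read off the scaling of $\theta_1$ required for that coefficient to have a finite nonzero limit. Concretely, using $e^{2ik\Delta x\sigma_z}=\mathbb{I}+2ik\Delta x\,\sigma_z+O(\Delta x^2)$ one gets
\begin{equation*}
H_C=-\frac{\theta_1}{4}\bigl(e^{i\phi_0\sigma_z}+e^{-i\psi_0\sigma_z}\bigr)\sigma_y-\frac{i\theta_1\Delta x}{2}\,k\,\sigma_z e^{-i\psi_0\sigma_z}\sigma_y+\theta_1\cdot O(\Delta x^2).
\end{equation*}
The first piece is $k$-independent and so acts as a pure multiplier in position space, contributing no derivative; the second piece is linear in $k$ and is the only source of a first-order spatial derivative in the limit.

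I would then argue by cases on the asymptotic behaviour of $\theta_1$ as $\Delta x\to 0$. If $\theta_1\Delta x\to 0$, the derivative term vanishes in the limit and $\lim_{\Delta x\to 0}H_C$ contains no spatial derivative, contradicting the hypothesis. If $\theta_1\Delta x\to\infty$, the derivative term (and in fact all higher-order terms as well) blows up, so the limit is not a well-defined operator on $\mathbb{X}$. The only remaining possibility is $\theta_1\Delta x\to\alpha$ for some nonzero $\alpha\in\mathbb{R}$, which under the assumption that $\theta_1$ has no dependence on $\Delta x$ beyond this scaling is exactly the assertion $\theta_1=\alpha/\Delta x$.

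Finally I would verify that the higher-order terms in the expansion of $e^{2ik\Delta x\sigma_z}$ are harmless under this scaling: the $n$-th term is proportional to $\theta_1(k\Delta x)^n$, which under $\theta_1=\alpha/\Delta x$ is of order $\Delta x^{n-1}$ and hence vanishes as $\Delta x\to 0$ for $n\ge 2$. The main obstacle is not in the $k$-linear term itself (which is straightforward) but in the leading $k$-independent piece: under $\theta_1=\alpha/\Delta x$ the term $-(\theta_1/4)(e^{i\phi_0\sigma_z}+e^{-i\psi_0\sigma_z})\sigma_y$ diverges unless the matrix sum $e^{i\phi_0\sigma_z}+e^{-i\psi_0\sigma_z}$ vanishes identically, imposing an additional constraint on $\phi_0$ and $\psi_0$. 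Since the present lemma only asserts necessity of the scaling $\theta_1=\alpha/\Delta x$ for the appearance of a spatial derivative, I would defer the resolution of that divergence to the subsequent step of the main proof of Theorem~\ref{thm:ctcs}, where the massless Dirac structure is extracted.
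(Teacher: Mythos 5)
Your proposal follows essentially the same route as the paper: expand $e^{2ik\Delta x\sigma_z}=\mathbb{I}+2ik\Delta x\,\sigma_z+O(\Delta x^2)$ inside $H_C$, observe that only the $k$-linear term can become a spatial derivative and that it carries a factor $\theta_1\Delta x$, and conclude $\theta_1\propto 1/\Delta x$. Your explicit case analysis on $\theta_1\Delta x$ and your remark that the $k$-independent piece then diverges unless $e^{i\phi_0\sigma_z}+e^{-i\psi_0\sigma_z}$ vanishes are just more careful renderings of what the paper states tersely and defers to its next lemma (the constraint $\phi_0+\psi_0=\pi$), so the argument is correct and matches the paper's.
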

    
    \begin{proof}
    Spatial derivatives in Fourier space look the following way, where $\mathcal{F}(g(x))$ is the Fourier transform of $g(x)$:
    \begin{equation}
    \lim_{\Delta x\to0}\frac{e^{ik\Delta x}-1}{\Delta x}=ik=\mathcal{F}(\partial_x)
    \end{equation}
    Now we expand $H_C$ for small $\Delta x$:
    \begin{equation}
        H_C=-\frac{\theta_1}{4}(e^{i\phi_0\sigma_z}+e^{-i\psi_0\sigma_z}+2ik\Delta x\sigma_ze^{-i\psi_0\sigma_z}+O(\Delta x^2))\sigma_y
    \end{equation}
    Given that the only parameter that can depend on $\Delta x$ is $\theta_1$, and that the only term which can contain a spatial derivative is the $3^{\text{rd}}$ term, but only if it is divided by $\Delta x$, it must be the case that $\theta_1$ is proportional to $\frac{1}{\Delta x}$.
    \end{proof}
    Henceforth, we will set  $\theta_1=\frac{\alpha}{\Delta x}$ for some $\alpha\in\mathbb{R}$. Now for the next lemma:
    
    \begin{lemma}
    The only parameterizations that will allow $H_C$ to be finite in the limit $\Delta x\to 0$ are those consistent with the constraint $\phi_0+\psi_0=\pi$.
    \end{lemma}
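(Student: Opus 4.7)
The strategy is to plug in the scaling $\theta_1=\alpha/\Delta x$ from the previous lemma and demand that the resulting expression remain finite as $\Delta x\to 0$. Explicitly,
\begin{equation*}
H_C=-\frac{\alpha}{4\Delta x}\bigl(e^{i\phi_0\sigma_z}+e^{2ik\Delta x\sigma_z}e^{-i\psi_0\sigma_z}\bigr)\sigma_y,
\end{equation*}
and because of the overall $1/\Delta x$ prefactor, any nonzero $O(\Delta x^0)$ contribution inside the parenthesis would produce a divergence. I would Taylor expand $e^{2ik\Delta x\sigma_z}=\mathbb{I}+2ik\Delta x\sigma_z+O(\Delta x^2)$ and isolate the zeroth-order piece $e^{i\phi_0\sigma_z}+e^{-i\psi_0\sigma_z}$, which must vanish identically as a matrix.

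Setting this piece to zero reduces the problem to the matrix equation $e^{i\phi_0\sigma_z}=-e^{-i\psi_0\sigma_z}$. Because both sides are diagonal, this splits into the two scalar equations $e^{i\phi_0}=-e^{-i\psi_0}$ and $e^{-i\phi_0}=-e^{i\psi_0}$, each of which is equivalent to $e^{i(\phi_0+\psi_0)}=-1$, i.e., $\phi_0+\psi_0\equiv\pi\pmod{2\pi}$, which is the desired constraint. I would then briefly verify sufficiency: substituting this constraint back into the expansion cancels the divergent zeroth-order term and leaves a finite surviving piece
\begin{equation*}
\lim_{\Delta x\to 0}H_C=-\frac{i\alpha k}{2}\,\sigma_z\, e^{-i\psi_0\sigma_z}\sigma_y,
\end{equation*}
so finiteness holds precisely under $\phi_0+\psi_0=\pi$. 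The $O(\Delta x^2)$ remainder of the expansion is killed by the remaining factor of $\Delta x$ and does not contribute.

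The main obstacle is conceptual rather than computational: one must recognize that because $\phi_0$ and $\psi_0$ are assumed independent of $\Delta x$ (by the hypothesis of Theorem \ref{thm:ctcs}), there is no way to arrange a fine-tuned cancellation between the zeroth-order piece and higher-order terms, so the only route to finiteness is to annihilate the constant piece outright. Once this is acknowledged, the remainder is a one-line diagonal-matrix identity. As a sanity check, the surviving limit above is linear in $k$ and hence, upon inverse Fourier transform, is first-order in $\partial_x$, consistent with the massless Dirac conclusion of Theorem \ref{thm:ctcs}.
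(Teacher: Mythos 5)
Your proposal is correct and follows essentially the same route as the paper: both arguments identify the $O(\Delta x^0)$ part of the bracket as the source of divergence under the $1/\Delta x$ prefactor and force it to vanish, which reduces to $e^{i(\phi_0+\psi_0)\sigma_z}=-\mathbb{I}$, i.e.\ $\phi_0+\psi_0=\pi$ (the paper just factors out $e^{-i\psi_0\sigma_z}$ first to make this one step more transparent). Your explicit reduction to the two diagonal scalar equations and your sufficiency check, whose limit $-\tfrac{i\alpha k}{2}\sigma_z e^{-i\psi_0\sigma_z}\sigma_y=-\tfrac{\alpha}{2}e^{-i\psi_0\sigma_z}k\sigma_x$ agrees with the paper's, are welcome additions but not a different method.
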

    \begin{proof}
    We rewrite $H_C$ as above, but now we factor out $e^{-i\psi_0\sigma_z}$:
    \begin{equation}
        H_C=-\frac{\alpha}{4\Delta x}e^{-i\psi_0\sigma_z}(e^{2ik\Delta x\sigma_z}+e^{i(\phi_0+\psi_0)\sigma_z})\sigma_y
    \end{equation}
    In order for the term $e^{2ik\Delta x\sigma_z}+e^{i(\phi_0+\psi_0)\sigma_z}$ to look like a spatial derivative, $e^{i(\phi_0+\psi_0)\sigma_z}$ must be proportional to $-$identity, which equates to $\phi_0+\psi_0=\pi$.
    \end{proof}
    For the sake of completeness, these conditions on the parameters of the coin correspond to the following pre-continuous time limit coin:
    \begin{equation}\label{eq:preCTcoin}
        -\exp(i\phi_0\sigma_z/2)\sigma_x\exp(-i\frac{\alpha\Delta t}{2\Delta x}\sigma_y)\exp(-i\phi_0\sigma_z/2)
    \end{equation}
    (from Equation (\ref{eq:CTcoin})). Putting these two lemmas together, we get that the only finite continuous space limit $H_C$ can have which contains spatial derivatives is the following:
    \begin{equation}
        \lim_{\Delta x\to 0}H_C=-\frac{\alpha}{2}e^{-i\psi_0\sigma_z}k\sigma_x
    \end{equation}
    This equates to the following time evolution equation in position space for wave function $\vv{\psi}(x,t)$
    \begin{equation}
        i\partial_t\vv{\psi}(x,t)=i\frac{\alpha}{2}e^{-i\psi_0\sigma_z}\sigma_x\partial_x\vv{\psi}(x,t)
    \end{equation}
    which is in the form of a dirac hamiltonian for a massless particle in the $\sigma_x$ basis (and reduces to the familiar form when $\psi_0=0$).


\newpage
\section{Proof of General Coin CTQW-DTQW Relation (Theorem \ref{thm:connection})}\label{app:CTDTrelation}

We begin by restating the theorem:
\ctdtqwRelation*
To begin, we introduce the following lemma:

\begin{lemma}\label{lma:eigsH}
Let $\widetilde{H}=-\frac{\theta_1}{4}(R_z(-2\phi_0)+\widetilde{S}^2 R_z(2\psi_0))\sigma_y$. Then the eigenvalues of $\widetilde{H}$ are $\pm\frac{\theta_1}{2}\cos(k\Delta x-\frac{\phi_0+\psi_0}{2})$.
\end{lemma}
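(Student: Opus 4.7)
The plan is to evaluate $\widetilde H$ explicitly in the $\sigma_z$-eigenbasis, recognise that, after the two diagonal exponentials are added via a sum-to-product identity, what remains is a scalar multiple of an operator of the form $e^{i\alpha\sigma_z}\sigma_y$, and then observe that this residual operator is simultaneously Hermitian and an involution, hence has eigenvalues $\pm 1$. The eigenvalues of $\widetilde H$ are then read off as the scalar prefactor.

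First I would substitute $R_z(-2\phi_0)=e^{i\phi_0\sigma_z}$, $R_z(2\psi_0)=e^{-i\psi_0\sigma_z}$, and $\widetilde S^2=e^{2ik\Delta x\sigma_z}$ to write
\begin{equation*}
\widetilde H=-\frac{\theta_1}{4}\bigl(e^{i\phi_0\sigma_z}+e^{i(2k\Delta x-\psi_0)\sigma_z}\bigr)\sigma_y.
\end{equation*}
Setting $a=\phi_0$ and $b=2k\Delta x-\psi_0$, both exponentials are diagonal in the $\sigma_z$ basis, and the standard sum-to-product identity gives
\begin{equation*}
e^{ia\sigma_z}+e^{ib\sigma_z}=2\cos\!\Bigl(\tfrac{a-b}{2}\Bigr)\,e^{i\frac{a+b}{2}\sigma_z},
\end{equation*}
with $\tfrac{a-b}{2}=-k\Delta x+\tfrac{\phi_0+\psi_0}{2}$ and $\tfrac{a+b}{2}=k\Delta x+\tfrac{\phi_0-\psi_0}{2}$. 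Using that cosine is even, this yields
\begin{equation*}
\widetilde H=-\tfrac{\theta_1}{2}\cos\!\bigl(k\Delta x-\tfrac{\phi_0+\psi_0}{2}\bigr)\,e^{i(k\Delta x+\frac{\phi_0-\psi_0}{2})\sigma_z}\sigma_y.
\end{equation*}

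It then remains to show that, for any real $\alpha$, the operator $U_\alpha:=e^{i\alpha\sigma_z}\sigma_y$ has spectrum $\{+1,-1\}$. Using $\sigma_y\sigma_z=-\sigma_z\sigma_y$ (hence $\sigma_y f(\sigma_z)=f(-\sigma_z)\sigma_y$ for any analytic $f$), one checks $U_\alpha^\dagger=\sigma_y e^{-i\alpha\sigma_z}=e^{i\alpha\sigma_z}\sigma_y=U_\alpha$, so $U_\alpha$ is Hermitian; and $U_\alpha^2=e^{i\alpha\sigma_z}\sigma_y e^{i\alpha\sigma_z}\sigma_y=e^{i\alpha\sigma_z}e^{-i\alpha\sigma_z}\sigma_y^2=\mathbb I$, so $U_\alpha$ is an involution. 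A Hermitian involution has eigenvalues only $\pm 1$. Therefore the eigenvalues of $\widetilde H$ are exactly $\pm\tfrac{\theta_1}{2}\cos\!\bigl(k\Delta x-\tfrac{\phi_0+\psi_0}{2}\bigr)$, as claimed.

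There is no real obstacle: the only thing to be careful about is the sign inside the cosine (which is why I explicitly use that cosine is even) and the anticommutation step used to verify that $U_\alpha$ is self-adjoint. Alternatively one could diagonalise $\widetilde H$ in $2\times 2$ matrix form directly — writing $e^{i\phi_0\sigma_z}\sigma_y$ and $e^{i(2k\Delta x-\psi_0)\sigma_z}\sigma_y$ as off-diagonal matrices and computing the characteristic polynomial — but the factorisation approach above is cleaner and reuses structure that will be useful when building the eigenvectors in the next lemma.
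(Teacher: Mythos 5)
Your proof is correct, and it is considerably more explicit than what the paper offers: the paper's entire argument for this lemma is the sentence ``obtained from straightforward eigenvalue decomposition,'' i.e.\ an implicit brute-force computation of the characteristic polynomial of the off-diagonal $2\times 2$ matrix $-\frac{\theta_1}{4}\bigl(e^{i\phi_0\sigma_z}+e^{i(2k\Delta x-\psi_0)\sigma_z}\bigr)\sigma_y$. Your route instead factors the sum of commuting diagonal exponentials via the sum-to-product identity and reduces the problem to the spectrum of $U_\alpha=e^{i\alpha\sigma_z}\sigma_y$; this is cleaner, makes the appearance of the combinations $\frac{\phi_0+\psi_0}{2}$ (in the eigenvalue) and $\frac{\phi_0-\psi_0}{2}$ (in the residual unitary, hence in the eigenvectors of Lemma~\ref{lma:eigs}) structurally transparent, and matches the form $H_C=-\frac{\theta_1}{4}(e^{i\phi_0\sigma_z}+e^{2ik\Delta x\sigma_z}e^{-i\psi_0\sigma_z})\sigma_y$ used later in Appendix~\ref{App:CTCS}. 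One small loose end: ``Hermitian involution'' only gives that the spectrum is \emph{contained} in $\{+1,-1\}$ (e.g.\ $\mathbb I$ is a Hermitian involution), so to conclude that \emph{both} signs occur you should add one line, for instance that $U_\alpha$ is traceless, or that $\det U_\alpha=\det(e^{i\alpha\sigma_z})\det(\sigma_y)=-1$, forcing the eigenvalues to be $+1$ and $-1$. With that sentence added the argument is complete (and still valid in the degenerate case $\cos(k\Delta x-\tfrac{\phi_0+\psi_0}{2})=0$, where both eigenvalues of $\widetilde H$ coincide at $0$).
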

This lemma is obtained from straightforward eigenvalue decomposition of $\widetilde{H}$. Now for our next lemma:
\begin{lemma}
Let $\vv{\Psi}_\pm(x,t)$ be the inverse fourier transform of the eigenvectors of $\widetilde{H}$. The inverse fourier transform of the time evolution equation of the eigenvectors of $\widetilde{H}$ is $i\partial_t\vv{\Psi}_\pm(x,t)=\pm\frac{\theta_1}{4}(e^{-i\alpha}\vv{\Psi}_\pm(x+\Delta x,t)+e^{i\alpha}\vv{\Psi}_\pm(x-\Delta x,t))$
\end{lemma}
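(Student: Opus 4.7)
The plan is to exploit the eigenvalue decomposition of $\widetilde{H}$ furnished by Lemma~\ref{lma:eigsH} and then transform the resulting Fourier-space evolution back to position space via the shift property of the discrete Fourier transform. First I would note that since the eigenvectors $\widetilde{\vv{\Psi}}_\pm(k,t)$ satisfy $\widetilde{H}\widetilde{\vv{\Psi}}_\pm(k,t) = \pm\lambda(k)\widetilde{\vv{\Psi}}_\pm(k,t)$ with $\lambda(k)=\tfrac{\theta_1}{2}\cos(k\Delta x-\alpha)$ and $\alpha=(\phi_0+\psi_0)/2$, the evolution equation $i\partial_t\widetilde{\vv{\Psi}}_\pm=\widetilde{H}\widetilde{\vv{\Psi}}_\pm$ collapses in each eigenspace to the scalar equation
\begin{equation*}
i\partial_t\widetilde{\vv{\Psi}}_\pm(k,t)=\pm\tfrac{\theta_1}{2}\cos(k\Delta x-\alpha)\,\widetilde{\vv{\Psi}}_\pm(k,t).
\end{equation*}

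Next I would expand the cosine with Euler's formula as $\cos(k\Delta x-\alpha)=\tfrac{1}{2}(e^{-i\alpha}e^{ik\Delta x}+e^{i\alpha}e^{-ik\Delta x})$, so that the right-hand side becomes a linear combination of two phase-modulated translation operators in Fourier space. The key ingredient is the shift identity implied by the convention stated in the introduction: because $\vv{\Psi}(n\Delta x,t)=\tfrac{\Delta x}{2\pi}\int e^{ikn\Delta x}\widetilde{\vv{\Psi}}(k,t)\,dk$, multiplication of $\widetilde{\vv{\Psi}}_\pm(k,t)$ by $e^{\pm ik\Delta x}$ corresponds under $\mathcal{F}^{-1}$ to the translation $\vv{\Psi}_\pm(x,t)\mapsto\vv{\Psi}_\pm(x\pm\Delta x,t)$. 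Applying $\mathcal{F}^{-1}$ termwise by linearity, and combining the factor $\tfrac{1}{2}$ from Euler with the factor $\theta_1/2$ from the eigenvalue, yields exactly the claimed identity.

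I do not anticipate any genuine obstacle here; this is essentially a one-line calculation once the eigenvalues of $\widetilde{H}$ are in hand. The only thing requiring care is the sign convention for the shift: with the paper's convention, $e^{+ik\Delta x}$ produces the translation $x\mapsto x+\Delta x$, so the coefficient of $\vv{\Psi}_\pm(x+\Delta x,t)$ must be $e^{-i\alpha}$ and that of $\vv{\Psi}_\pm(x-\Delta x,t)$ must be $e^{+i\alpha}$. Once this bookkeeping is tracked and the prefactor $\pm\theta_1/4$ is assembled, the statement of the lemma follows directly.
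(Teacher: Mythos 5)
Your proposal is correct and follows essentially the same route as the paper: both start from the eigenvalue $\pm\tfrac{\theta_1}{2}\cos(k\Delta x-\alpha)$ of $\widetilde{H}$, expand the cosine via Euler's formula into $\tfrac{1}{2}(e^{-i\alpha}e^{ik\Delta x}+e^{i\alpha}e^{-ik\Delta x})$, and inverse Fourier transform using the shift property to obtain the stated difference equation. Your explicit check of the sign convention (that $e^{+ik\Delta x}$ yields the translation $x\mapsto x+\Delta x$ under the paper's inverse transform) is the only bookkeeping the paper leaves implicit, and you have it right.
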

\begin{proof}
From lemma \ref{lma:eigsH} we have the following, where $\vv{\widetilde{\Psi}}_\pm(k,t)$ are eigenvectors of $\widetilde{H}$ and $\alpha=\frac{\phi_0+\psi_0}{2}$:
\begin{align}
    i\partial_t\vv{\widetilde{\Psi}}_\pm(k,t)&=\pm\frac{\theta_1}{2}\cos(k\Delta x-\alpha)\vv{\widetilde{\Psi}}_\pm(k,t)\\
    &=\pm\frac{\theta_1}{4}(e^{ik\Delta x}e^{-i\alpha}+e^{-ik\Delta x}e^{i\alpha})\vv{\widetilde{\Psi}}_\pm(k,t).
\end{align}
Inverse fourier transforming this, we get $i\partial_t\vv{\Psi}_\pm(x,t)=\pm\frac{\theta_1}{4}(e^{-i\alpha}\vv{\Psi}_\pm(x+\Delta x,t)+{e^{i\alpha}\vv{\Psi}_\pm(x-\Delta x,t))}$.
\end{proof}
Now for our next lemma:
\begin{lemma}\label{lma:CTLimitDTQWaveFunctionEig}
The wave functions $\vv{\Psi'}_\pm(x,t)=e^{ i(\pm\frac{\theta_1 t}{2}-\alpha\frac{x}{\Delta x})}\vv{\Psi}_\pm(x,t)$ will satisfy the CTQW time evolution equation $i\partial_t\vv{\Psi'}_\pm(x,t)=\pm\frac{\theta_1}{4}\big[\vv{\Psi'}_\pm(x+\Delta x,t)+\vv{\Psi'}_\pm(x-\Delta x,t)-2\vv{\Psi'}_\pm(x,t)]$.
\end{lemma}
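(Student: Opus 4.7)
The plan is to verify the claim by a direct computation: apply the product rule to $i\partial_t \vv{\Psi'}_\pm(x,t)$, substitute the evolution equation for $\vv{\Psi}_\pm$ that was just established in the preceding lemma, and observe that two nontrivial cancellations take place. The spatial phase $e^{-i\alpha x/\Delta x}$ is tailored to absorb the $e^{\mp i\alpha}$ factors produced by shifting $x \to x\pm\Delta x$, while the temporal phase $e^{\pm i\theta_1 t/2}$ is tailored so that its time derivative produces exactly the $-2\vv{\Psi'}_\pm$ term that promotes the two-site hopping expression into the CTQW Laplacian $\vv{\Psi'}_\pm(x+\Delta x,t)+\vv{\Psi'}_\pm(x-\Delta x,t) - 2\vv{\Psi'}_\pm(x,t)$.

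Concretely, writing $\phi(x,t) = \pm\frac{\theta_1 t}{2} - \alpha \frac{x}{\Delta x}$, the first step is the Leibniz expansion
\begin{equation*}
i\partial_t \vv{\Psi'}_\pm(x,t) = \mp\frac{\theta_1}{2}\vv{\Psi'}_\pm(x,t) + e^{i\phi(x,t)}\bigl[i\partial_t \vv{\Psi}_\pm(x,t)\bigr].
\end{equation*}
Next I would insert the preceding lemma's identity $i\partial_t \vv{\Psi}_\pm = \pm\frac{\theta_1}{4}\bigl(e^{-i\alpha}\vv{\Psi}_\pm(x+\Delta x,t) + e^{i\alpha}\vv{\Psi}_\pm(x-\Delta x,t)\bigr)$, and then re-express each shifted $\vv{\Psi}_\pm$ in terms of $\vv{\Psi'}_\pm$ via the defining relation $\vv{\Psi}_\pm(y,t) = e^{-i\phi(y,t)}\vv{\Psi'}_\pm(y,t)$. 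The arithmetic identity $\phi(x\pm\Delta x,t) = \phi(x,t) \mp \alpha$ then shows that $e^{i\phi(x,t)} \cdot e^{\mp i\alpha} \cdot e^{-i\phi(x\pm\Delta x,t)} = 1$, so the shift phases cancel exactly against the $e^{\mp i\alpha}$ factors.

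Putting these together yields
\begin{equation*}
i\partial_t\vv{\Psi'}_\pm(x,t) = \mp\frac{\theta_1}{2}\vv{\Psi'}_\pm(x,t) \pm\frac{\theta_1}{4}\bigl(\vv{\Psi'}_\pm(x+\Delta x,t) + \vv{\Psi'}_\pm(x-\Delta x,t)\bigr),
\end{equation*}
which is precisely the CTQW form once $\mp\theta_1/2$ is absorbed into the bracket as $\mp 2\vv{\Psi'}_\pm \cdot (\pm\theta_1/4)$. No deep obstacles arise; the only care required is bookkeeping of the $\pm$ signs and ensuring the shift-in-$x$ identity for $\phi$ is applied with the correct signs on both the $+$ and $-$ cases. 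This lemma then feeds directly into the final superposition identity of Theorem \ref{thm:connection}, because inverting the definition gives $\vv{\Psi}_\pm(x,t) = e^{i\alpha x/\Delta x} e^{\mp i\theta_1 t/2}\vv{\Psi'}_\pm(x,t)$, matching the phase factors appearing in Equation (\ref{eq:ctqw-dtqwRelation}).
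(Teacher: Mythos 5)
Your proposal is correct and follows essentially the same route as the paper: the paper's proof simply asserts that substituting $\vv{\Psi}_\pm(x,t)=e^{i(\mp\frac{\theta_1 t}{2}+\alpha\frac{x}{\Delta x})}\vv{\Psi'}_\pm(x,t)$ into the hopping equation from the preceding lemma yields the CTQW equation, and you have carried out exactly that substitution, with the phase bookkeeping $\phi(x\pm\Delta x,t)=\phi(x,t)\mp\alpha$ and the absorption of $\mp\frac{\theta_1}{2}\vv{\Psi'}_\pm$ into the $-2\vv{\Psi'}_\pm$ term done correctly. Your version is in fact more explicit than the paper's one-line verification.
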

\begin{proof}
It can easily be seen that plugging in $\vv{\Psi}_\pm(x,t)=e^{ i(\mp\frac{\theta_1 t}{2}+\alpha\frac{x}{\Delta x})}\vv{\Psi'}_\pm(x,t)$ to $i\partial_t\vv{\Psi}_\pm(x,t)=\pm\frac{\theta_1}{4}(e^{-i\alpha}\vv{\Psi}_\pm(x+\Delta x,t)+e^{i\alpha}\vv{\Psi}_\pm(x+\Delta x,t))$ will yield the CTQW time evolution equation for $\vv{\Psi'}_\pm(x,t)$.
\end{proof}
Now for our last lemma:
\begin{lemma}\label{lma:CTLimitDTQWWaveFunctionFinal}
$\vv{\Psi}(x,t)$ can be written as a superposition of $\vv{\Psi'}_+(x,t)$ and $\vv{\Psi'}_-(x,t)$, which satisfy the CTQW time evolution equation, in the following way:
\begin{equation}
\vv{\Psi}(x,t)=e^{i\alpha\frac{x}{\Delta x}}(e^{-\frac{i\theta_1 t}{2}}\vv{\Psi'}_+(x,t)+e^{\frac{i\theta_1 t}{2}}\vv{\Psi'}_-(x,t))    
\end{equation}
\end{lemma}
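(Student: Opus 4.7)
My plan is to obtain the claimed identity as a straightforward repackaging of two ingredients that have already been proved: the pointwise spectral decomposition of $\widetilde{H}$ in Fourier space (Lemma \ref{lma:eigsH}), and the change of variables converting an eigenmode $\vv{\Psi}_\pm$ of $H$ into a CTQW solution $\vv{\Psi'}_\pm$ (Lemma \ref{lma:CTLimitDTQWaveFunctionEig}).

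First I would observe that, for each fixed $k$, the Fourier-space Hamiltonian $\widetilde{H}(k)$ is a $2\times 2$ Hermitian matrix with eigenvalues $\pm\tfrac{\theta_1}{2}\cos(k\Delta x - \alpha)$, so its eigenvectors span $\mathbb{C}^2$. Hence any solution $\vv{\widetilde{\Psi}}(k,t)$ of $i\partial_t\vv{\widetilde{\Psi}}=\widetilde{H}\vv{\widetilde{\Psi}}$ admits the pointwise-in-$k$ decomposition $\vv{\widetilde{\Psi}}(k,t)=\vv{\widetilde{\Psi}}_+(k,t)+\vv{\widetilde{\Psi}}_-(k,t)$ as a sum of projections onto the two eigenspaces, with each piece obeying its own scalar time evolution. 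Inverse Fourier transforming termwise yields the position-space superposition
$$\vv{\Psi}(x,t)=\vv{\Psi}_+(x,t)+\vv{\Psi}_-(x,t).$$

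Next I would invert the relation in Lemma \ref{lma:CTLimitDTQWaveFunctionEig}: solving $\vv{\Psi'}_\pm(x,t)=e^{i(\pm\theta_1 t/2-\alpha x/\Delta x)}\vv{\Psi}_\pm(x,t)$ gives $\vv{\Psi}_\pm(x,t)=e^{i(\mp\theta_1 t/2+\alpha x/\Delta x)}\vv{\Psi'}_\pm(x,t)$. Substituting this into the displayed decomposition and factoring out the common plane-wave phase $e^{i\alpha x/\Delta x}$ reproduces exactly the asserted formula.

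The only mild subtlety is the measure-zero set of $k$ with $\cos(k\Delta x-\alpha)=0$, where the two eigenvalues coincide and $\widetilde{H}(k)=0$; there any splitting of $\vv{\widetilde{\Psi}}(k,t)$ into a $+$ and a $-$ component trivially satisfies both eigenvalue equations, so this causes no real difficulty and can be handled by continuity. Beyond this bookkeeping point, there is no hard step: the lemma is an immediate algebraic corollary of Lemmas \ref{lma:eigsH} and \ref{lma:CTLimitDTQWaveFunctionEig}.
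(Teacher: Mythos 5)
Your proposal is correct and follows essentially the same route as the paper: decompose $\vv{\Psi}$ into the two spectral components of $H$ (the paper phrases this as $P_++P_-=\mathbb{I}$, you phrase it as the pointwise-in-$k$ eigenbasis of $\widetilde{H}(k)$, which is the same thing), then substitute the inverted phase relation from Lemma \ref{lma:CTLimitDTQWaveFunctionEig} and factor out $e^{i\alpha x/\Delta x}$. Your added remark about the measure-zero degenerate set where the eigenvalues coincide is a small extra care point the paper omits, but it does not change the argument.
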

\begin{proof}
Let $P_+$ and $P_-$ be projectors onto the $+$ and $-$ eigenvectors of ${H=-\frac{\theta_1}{4}(R_z(-2\phi_0)+S^2 R_z(2\psi_0))\sigma_y}$. Then we can write the following: 
\begin{align}
\vv{\Psi}(x,t)&=(P_++P_-)\vv{\Psi}(x,t)\\
&=P_+\vv{\Psi}(x,t)+P_-\vv{\Psi}(x,t)\\
&=\vv{\Psi}_+(x,t)+\vv{\Psi}_-(x,t).
\end{align}
From lemma \ref{lma:CTLimitDTQWaveFunctionEig}, we plug in $\vv{\Psi}_\pm(x,t)=e^{ i(\mp\frac{\theta_1 t}{2}+\alpha\frac{x}{\Delta x})}\vv{\Psi'}_\pm(x,t)$ and obtain the expression in lemma  \ref{lma:CTLimitDTQWWaveFunctionFinal}.
\end{proof}


\section{Continuous Space-Time Limit With Coin Variation For $n=1$}\label{App:CSTn=1}
The following will be a reiteration of some of the results from Ref.~\onlinecite{Molfetta}, but there will be an emphasis on relating the continuous space-time limit to the Dirac equation, or ``Dirac-Type'' equations as we will denote them. Let $\vv{\Psi}(x,t)\in L^2(\mathbb{R})\times L^2(\Sigma)$, where $\Sigma$ is the space spanned by $\ket{L}=\begin{pmatrix}1\\0\end{pmatrix}$ and $\ket{R}=\begin{pmatrix}0\\1\end{pmatrix}$, and let $\sigma_z=\begin{pmatrix}1&0\\0&1\end{pmatrix}$ and $\sigma_x=\begin{pmatrix}0&1\\1&0\end{pmatrix}$. Then the following is the Dirac equation in $1$ space and $1$ time dimension ($1+1$):
\begin{equation}\label{eq:DiracEqn}
    i\partial_t\vv{\Psi}(x,t)=(i\sigma_z\partial_x+\sigma_xm)\vv{\Psi}(x,t).
\end{equation}
A ``Dirac-Type'' equation is the following, where $\hat{A}$ and $\hat{B}$ are any $2\times2$ anti-hermitian and hermitian matrices, respectively:
\begin{equation}\label{eq:DiracTypeEqn}
    i\partial_t\vv{\Psi}(x,t)=(i\hat{A}\partial_x+\hat{B})\vv{\Psi}(x,t).
\end{equation}
Equation (\ref{eq:DiracEqn}) can easily be obtained by taking the continuous space-time limit of the DTQW with the coin $C=e^{im\Delta t\sigma_x}$ and the usual shift operator (in Fourier space) $\widetilde{S}=e^{ik\Delta x\sigma_z}$. In the same fashion, Equation (\ref{eq:DiracTypeEqn}) can easily be obtained by taking the continuous space-time limit of the DTQW with the coin $C=e^{i\Delta t\hat{B}}$, but now with a different shift operator $\widetilde{S}=e^{ik\Delta x\hat{A}}$.
\nocite{*}
\bibliography{aipsamp}

\end{document}